\begin{document}

\theoremstyle{definition}
\newtheorem{theorem}{Theorem}
\newtheorem{definition}[theorem]{Definition}
\newtheorem{problem}[theorem]{Problem}
\newtheorem{assumption}[theorem]{Assumption}
\newtheorem{corollary}[theorem]{Corollary}
\newtheorem{proposition}[theorem]{Proposition}
\newtheorem{example}[theorem]{Example}
\newtheorem{lemma}[theorem]{Lemma}
\newtheorem{observation}[theorem]{Observation}
\newtheorem{fact}[theorem]{Fact}
\newtheorem{question}[theorem]{Open Question}
\newtheorem{conjecture}[theorem]{Conjecture}
\newtheorem{addendum}[theorem]{Addendum}
\newcommand{\uint}{{[0, 1]}}
\newcommand{\Cantor}{{\{0,1\}^\mathbb{N}}}
\newcommand{\name}[1]{\textsc{#1}}
\newcommand{\me}{\name{P.}}
\newcommand{\id}{\textrm{id}}
\newcommand{\dom}{\operatorname{dom}}
\newcommand{\Dom}{\operatorname{Dom}}
\newcommand{\codom}{\operatorname{CDom}}
\newcommand{\Baire}{{\mathbb{N}^\mathbb{N}}}
\newcommand{\hide}[1]{}
\newcommand{\mto}{\rightrightarrows}
\newcommand{\Sierp}{Sierpi\'nski }
\newcommand{\BC}{\mathcal{B}}
\newcommand{\C}{\textrm{C}}
\newcommand{\lpo}{\textrm{LPO}}
\newcommand{\leqW}{\leq_{\textrm{W}}}
\newcommand{\leW}{<_{\textrm{W}}}
\newcommand{\equivW}{\equiv_{\textrm{W}}}
\newcommand{\equivT}{\equiv_{\textrm{T}}}
\newcommand{\geqW}{\geq_{\textrm{W}}}
\newcommand{\pipeW}{|_{\textrm{W}}}
\newcommand{\nleqW}{\nleq_\textrm{W}}

\newcommand{\IQ}{\mathcal{I}(\mathbb{Q}[X_1,\dots,X_d])}
\newcommand{\QP}{\mathbb{Q}[X_1,\dots,X_d]}
\newcommand{\height}{\operatorname{ht}}

\newcommand{\Sort}{\textsf{Sort}}
\newcommand{\isInfinite}{\textsf{isInfinite}}
\newcommand{\isDefined}{\textsf{isDefined}}
\newcommand{\isFinite}{\textsf{isFinite}}

\title{A topological view on algebraic computation models}

\author{
Eike Neumann
\institute{Aston University, Birmingham, UK}
\email{neumaef1@aston.ac.uk}
\and
Arno Pauly
\institute{Computer Laboratory, University of Cambridge, UK\thanks{Pauly has since moved to the D\'epartement d'Informatique, Universit\'e libre de Bruxelles, Belgium.} \\ \& \\ University of Birmingham, Birmingham, UK}
\email{Arno.Pauly@cl.cam.ac.uk}
}

\def\titlerunning{A topological view on algebraic computation models}
\def\authorrunning{E. Neumann \& A. Pauly}
\maketitle

\begin{abstract}
We investigate the topological aspects of some algebraic computation models, in particular the BSS-model. Our results can be seen as bounds on how different BSS-computability and computability in the sense of computable analysis can be. The framework for this is Weihrauch reducibility. As a consequence of our characterizations, we establish that the solvability complexity index is (mostly) independent of the computational model, and that there thus is common ground in the study of non-computability between the BSS and TTE setting.
\end{abstract}

\section{Introduction}
There are two major paradigms for computability on functions on the real numbers: On the one hand, computable analysis in the tradition of \name{Grzegorczyk} \cite{grzegorczyk,grzegorczyk2} and \name{Lacombe} \cite{lacombe} as championed by \name{Weihrauch} \cite{weihrauch,weihrauchd} (see also the equivalent approaches by \name{Pour-El} and \name{Richards} \cite{pourel} or \name{Ko} \cite{ko}). On the other hand, the BSS-machines by \name{Blum}, \name{Shub} and \name{Smale} \cite{blum,blum2}, or the very similar real-RAM model. Incidentally, both schools claim to be in the tradition of \name{Turing}.

Computable analysis can, to a large extent, be understood as effective topology \cite{escardo,pauly-synthetic} -- this becomes particularly clear when one moves beyond just the real numbers, and is interested in computability on spaces of subsets or functionals. In particular, we find that the effective Borel hierarchy occupies the position analogous to the arithmetical hierarchy in classical recursion theory; and that incomputability of natural problems is typically a consequence of discontinuity. A more fine-grained view becomes possible in the framework of Weihrauch reducibility (more below).

In contrast, the study of BSS-computability is essentially a question akin to (logical) definability in algebraic structures. This causes the lack of a stable notion of BSS-computability on \emph{the reals}: BSS-computability on the ring $(\mathbb{R},+,\times,=)$ differs from BSS-computability on the unordered field $(\mathbb{R},+,\times,-,/,=)$, which in turn differs from BSS-computability on the ordered field $(\mathbb{R},+,\times,-,/,<)$. Taking into account as basic functions further maps such as square root or the exponential function induces additional variants. There are, however, also topological obstacles to being BSS-computable -- and as we shall demonstrate, these obstacles are common to all variants of BSS-machines.

\name{Hotz} and his coauthors \cite{hotz2,hotz3,gaertnerhotz} have introduced and studied a number of extensions of BSS-machines (called \emph{analytic machines}), which can (in various ways) make use of approximations. These additional features enable the ``computation'' of even more discontinuous functions, however, there are still topological limitations.

In the present article, we characterize the maximal degrees of discontinuity for functions computable by each of the algebraic machine models in the framework of Weihrauch reducibility. We thus continue and extend the work by \name{G\"artner} and \name{Ziegler} in \cite{zieglergaertner}. This allows us to differentiate between topological and algebraic reasons for non-computability in these models. Moreover, we can show that for sufficiently discontinuous problems the difference between the various computational models vanishes: This can be formalized using \name{Hansen}'s \emph{solvability complexity index} (SCI) \cite{hansen,hansen2}, which is the least number of limits one needs in order to solve a particular problem over some basic computational model. Our results show e.g.~that for values of 2 or greater the SCI based on the BSS-model coincides with the SCI based on the computable analysis model.

In Section 3 we characterise the computational power of generalized register machines with $\lpo$-tests (Proposition \ref{Proposition: C_N equiv_W lpo-diamond}) as well as BSS-machines over the reals with equality- and order-tests (Corollary \ref{corr:bss}). We show that their strength is captured by the Weihrauch degree $\C_\mathbb{N}$ of closed choice over the natural numbers, in the sense that any function computable by either type of machine is Weihrauch-reducible to $\C_\mathbb{N}$ and for each type of machine there exists a function which is Weihrauch-equivalent to $\C_\mathbb{N}$ and computable by that type of machine. We furthermore show that all BSS-computable \emph{total} functions are strictly below $\C_\mathbb{N}$ (Corollary \ref{Corollary: total BSS upper bound} and Proposition \ref{Proposition: total BSS lower bound}) and that the strength of BSS-machines with equality-tests but without order-tests is characterised by the strictly weaker Weihrauch degree $\lpo^*$ (Proposition \ref{Proposition: BSS with equality}).

In Section 4 we study the BSS-halting problem from the point of view of Weihrauch-reducibility. We show in particular that the halting problem of a machine over the signature $(\mathbb{R}, +, =, <)$ cannot be solved by any machine over a signature which expands $(\mathbb{R}, +, =, <)$ by continuous operations only. Therefore, the inability of BSS-machines to solve their halting problem already holds for topological reasons.

In Section 5 we prepare the discussion of analytic machines and the solvability complexity index by some more technical results on Weihrauch degrees. We introduce the Weihrauch degree of the problem $\Sort$ of ``sorting'' an infinite binary sequence. We discuss its position in the Weihrauch lattice in detail and prove in particular that $\Sort^n <_W \Sort^{n + 1}$ (Corollary \ref{Corollary: Sort^n < Sort^(n + 1)}) and that we have the absorption $\lim \star \lim \star \Sort \equivW \lim \star \lim$ (Corollary \ref{corr:low2}).

In Section 6 we characterise the strength of analytic machines. We show that the computational power of analytic machines is characterised by the Weihrauch degree $\Sort^*$ in the same way as the power of BSS-machines is characterised by $\C_\mathbb{N}$ (Observation \ref{obs:analytic} and Corollary \ref{corr:algdacsortstar}).

In Section 7 we combine the results of the two previous sections to show that the SCI of an uncomputable function over the BSS-model is the same as the SCI over the computable analysis model as soon as the SCI over either model is greater than or equal to 2 (Theorem \ref{theo:sci}).

\section{Background on the models and Weihrauch reducibility}

\subsection{Algebraic computation models}
We introduce the notion of a register machine over some algebraic structure, following \name{Ga\ss{}ner} \cite{gassner4,gassner3,gassner2} (1997+) and \name{Tavana} and \name{Weihrauch} (2011) \cite{tavana}. Other approaches to computation over algebraic structures were put forth e.g.~by \name{Tucker} and \name{Zucker} (2000) \cite{zucker} and \name{Hemmerling} (1998) \cite{hemmerling}. For this consider some algebraic structure $\mathfrak{A} = (A, f_1, f_2, \ldots, T_1, T_2, \ldots)$, where $A$ is a set, each $f_i$ is a (partial) function of type $f_i : \subseteq A^{k_i} \to A$, and each $T_i$ is a relation of type $T_i \subseteq A^{l_i}$. In the usual examples, the signatures will be finite, but this is not essential for our considerations.

Generalized register machines will compute functions of type $g : A^* \to A^*$. They have registers $(R_i)_{i \in \mathbb{N}}$ holding elements of $A$, and index registers $(I_n)_{n \in \mathbb{N}}$ holding natural numbers. Programs are finite lists of commands, consisting of:

\begin{itemize}
 \item standard register machine operations on the index registers
 \item copying the value of the register $R_{I_1}$ indexed by $I_1$ into $R_{I_0}$
 \item applying some $f_i$ to the values contained in $R_1, \ldots, R_{k_i}$ and writing the result into $R_0$
 \item branching to a line in the program depending on the value of $T_i$ on the values contained in $R_1, \ldots, R_{l_i}$
 \item \textrm{HALT}, in which case the values currently in the registers $R_0, \ldots, R_{I_0}$ constitute the output
\end{itemize}

Initially, the register $I_0$ contains the length of the input, all other $I_n$ start at $0$. The input is in $R_1, \ldots, R_n$, all other $R_i$ contain some fixed value $a_0 \in A$. If the program either fails to halt on some input, or invokes a partial function on some values outside its domain, the computed function is undefined on these values. We call a (partial) function $g :\subseteq A^* \to A^*$ $\mathfrak{A}$-computable, if there is a generalized register machine program computing it.

In analogy to the situation in computable analysis, we shall call sections of $\mathfrak{A}$-computable function $\mathfrak{A}$-continuous. This boils down to programs being allowed additional assignment operations $R_i := a$ for any element $a \in A$. These are usually included in the algebraic models, but have the undesirable consequence of there being more than countably many programs, if $A$ itself is uncountable.

The primary example of a structure will be $(\mathbb{R}, +, \times, <)$ (yielding BSS-computability). Secondary examples include $(\mathbb{R}, +, =)$ (additive machines with equality) and other combinations of continuous functions and tests in $\{=, <\}$.

\subsection{Analytic machines}
The analytic machines introduced in \cite{hotz2,hotz3} enhance BSS-machines by means to approximate functions. More generally, if we consider the algebraic structure $\mathfrak{A}$ to also carry a metric, we can define functions computable by strongly $\mathfrak{A}$-analytic machines and functions computable by weakly $\mathfrak{A}$-analytic machines in one of two equivalent ways: Either a generalized register machine receives an additional input $n \in \mathbb{N}$; thus computes a function $G : \subseteq \mathbb{N} \times A^* \to A^*$, which is then considered as weakly approximating $g : \subseteq A^* \to A^*$ iff $\forall \mathbf{a} \in \dom(g) \ \lim_{n \to \infty} G(n,\mathbf{a}) = g(a)$, and as strongly approximating $g$ iff $\forall \mathbf{a} \in \dom(g) \ \forall n \in \mathbb{N} \ d(G(n,\mathbf{a}), g(a)) < 2^{-n}$. Alternatively there is no extra input, and the machine keeps running, and produces an infinite sequence $p \in A^\omega$ (plus some information on the length of the desired output). The limit conditions are the same.

We shall speak just of ``functions computable by a strongly (weakly) analytic machine'' if the underlying structure is $(\mathbb{R}, +, \times, <)$. As functions computable by an analytic machine receive their input exactly, but produce their output in an approximative fashion, they are generally not closed under composition.

Further variants of analytic machines have been considered, which are not relevant for the present paper though. We refer to \cite{ziegler4} by \name{Ziegler} for an excellent discussion.

\subsection{Type-2 Turing machines}
The fundamental model for computable analysis/the TTE-framework \cite{weihrauchd} are the Type-2 Turing machines. Structurally, these do not differ from ordinary Turing machines with a designated write-once only output tape. What differs is that instead of halting and thus producing a finite output, the machine continues to run for ever. As long as it keeps writing on the output tape, this produces an infinite sequence in the limit. While some might object to the use of infinitely long computations, this model is realistic in as far as anything written on the output tape at some finite time constitutes a prefix to the infinite output (as each cell in the output tape can be changed just once). Thus, this model inherently captures approximating computations -- and its intricate connection to topology is perhaps unsurprising. In particular, we find that any computable function is automatically continuous w.r.t.~the standard topology on $\Cantor$ -- and vice versa, every continuous function becomes computable relative to some oracle.

Type-2 machines natively provide us with a notion of computability on $\Cantor$. This is then transferred to the spaces of actual interest by means of a representation. A \emph{represented space} is a pair $\mathbf{X} = (X, \delta_X)$ of a set $X$ and a partial surjection $\delta_X : \subseteq \Cantor \to X$. A (multi-valued) function between represented spaces is a partial relation $f \subseteq X \times Y$. We write $f : \subseteq \mathbf{X} \mto \mathbf{Y}$ for these; here $\subseteq$ denotes (potential) partiality, and $\mto$ (potential) multi-valuedness. We write $f(x) = \{y \in Y \mid (x,y) \in f\}$ and $\dom(f) := \{x \in \mathbf{X} \mid \exists y \in f(x)\}$. We recall that composition of multi-valued functions is defined via $\dom(g \circ f) := \{x \in \dom(f) \mid f(x) \subseteq \dom(g)\}$ and $z \in (g \circ f)(x)$ for $x \in \dom(g \circ f)$, if $\exists y \in f(x) \ \ z \in g(y)$.

For $f : \subseteq \mathbf{X} \mto \mathbf{Y}$ and $F : \subseteq \Cantor \to \Cantor$, we call $F$ a realizer of $f$ (notation $F \vdash f$), iff $\delta_Y(F(p)) \in f(\delta_X(p))$ for all $p \in \dom(f\delta_X)$. A function between represented spaces is called computable (continuous), iff it has a computable (continuous) realizer. Note that, unlike in the case of algebraic models, the behaviour of a Type-2 machine which computes a partial function is completely unconstrained on inputs outside of the function's domain.

\begin{figure}[h]
 $$\begin{CD}
\Cantor @>F>> \Cantor\\
@VV\delta_\mathbf{X}V @VV\delta_\mathbf{Y}V\\
\mathbf{X} @>f>> \mathbf{Y}
\end{CD}$$
\caption{The notion of a realizer}
\end{figure}

As we are primarily interested in computability on $\mathbb{R}$ and $\mathbb{R}^*$, we shall introduce the standard representations for these spaces. Fix some standard enumeration $\nu_\mathbb{Q} : \mathbb{N} \to \mathbb{Q}$. Then let $\rho(p) = x$ if $p = 0^{n_0}10^{n_1}1\ldots$ and $\forall i \in \mathbb{N} \ \ d(x,\nu_\mathbb{Q}(n_i)) < 2^{-i}$. In other words, a name for a real number is a sequence of rationals converging to it with some prescribed speed. We thus understand $\mathbb{R}$ to be the represented space $(\mathbb{R}, \rho)$. As we can form products and coproducts of represented spaces, we automatically obtain a representation for $\mathbb{R}^* = \coprod_{n \in \mathbb{N}} \mathbb{R}^n$.

We will encounter decision problems, and thus need spaces of truth-values. For this, we use both the space $\{0,1\}$ represented by $\delta_{\mathbf{2}}$ defined by $\delta_\mathbf{2}(p) = p(0)$, as well as Sierpi\'nski space $\mathbb{S}$. The latter has the underlying set $\{\bot,\top\}$ and the representation $\delta_\mathbb{S}$ with $\delta_\mathbb{S}(0^\omega) = \bot$ and $\delta_\mathbb{S}(p) = \top$ iff $p \neq 0^\omega$. As usual, we identify $0$ with $\bot$ and $1$ with $\top$. The space $\{0,1\}$ captures decidability, and $\mathbb{S}$ captures semi-decidability. The usual boolean connectives $\wedge$ and $\vee$ are computable on both spaces. Negation $\neg : \{0,1\} \to \{0,1\}$ is computable, whereas $\neg : \mathbb{S} \to \mathbb{S}$ is not computable.

We also make use of the represented space $\mathbb{N}$, represented via $\delta_\mathbb{N}^{-1}(n) = \{0^n1^\omega\}$. Any represented space naturally comes with a topology, namely the final topology along the representation, where the domain of the representation just inherits the subspace topology of the usual complete metric on $\Cantor$. For $\mathbb{N}$, this is the discrete topology, for $\mathbb{R}$ the usual Euclidean topology. Via the utm-theorem, we obtain the represented space $\mathcal{O}(\mathbf{X})$ of the open subsets of the represented space $\mathbf{X}$ in a canonical manner, by identifying them with continuous functions from $X$ to $\mathbb{S}$. By considering complements instead we obtain a representation of the space $\mathcal{A}(\mathbf{X})$. Here, the space $\mathcal{A}(\mathbf{X})$ is the space of closed subsets represented with \emph{negative} information, or equipped with the upper Fell topology. A representation via \emph{positive} information is obtained by identifying a closed set $A \subseteq \mathbf{X}$ with the open set of all open sets which intersect $A$. The corresponding represented space is denoted by $\mathcal{V}(\mathbf{X})$ and called the space of \emph{overt subsets} of $\mathbf{X}$. We only use the special cases $\mathcal{O}(\mathbb{N})$ and $\mathcal{A}(\mathbb{N})$ here though. One may consider $\mathcal{O}(\mathbb{N})$ to be represented by $\delta\colon \Cantor \to \mathcal{O}(\mathbb{N})$ with $n \in \delta(p)$ iff $01^{n+1}0$ is a subword of $p$; and thus $\mathcal{A}(\mathbb{N})$ by $\psi\colon \Cantor \to \mathcal{A}(\mathbb{N})$ with $n \in \psi(p)$ iff $01^{n+1}0$ is not a subword of $p$. Note that the computable points of $\mathcal{O}(\mathbb{N})$ are precisely the computably enumerable sets, and the computable points of $\mathcal{A}(\mathbb{N})$ are the co-c.e. sets.

\subsection{Weihrauch reducibility}
Weihrauch reducibility is a preorder on multivalued functions between represented spaces, and serves as a framework for comparing incomputability in the Type-2 setting, similar to the role of many-one or Turing reductions in classical recursion theory:

\begin{definition}[Weihrauch reducibility]
\label{def:weihrauch}
Let $f,g$ be multi-valued functions on represented spaces.
Then $f$ is said to be {\em Weihrauch reducible} to $g$, in symbols $f\leqW g$, if there are computable
functions $K,H:\subseteq\Cantor\to\Cantor$ such that $K\langle \id, GH \rangle \vdash f$ for all $G \vdash g$.
Accordingly, $f$ is said to be {\em continuously Weihrauch reducible} to $g$, in symbols $f \leqW^c g$, if there exist continuous functions $K$ and $H$ satisfying this condition.
\end{definition}

The relation $\leqW$ is reflexive and transitive. We use $\equivW$ to denote equivalence regarding $\leqW$,
and by $\leW$ we denote strict reducibility. By $\mathfrak{W}$ we refer to the partially ordered set of equivalence classes. As shown in \cite{paulyreducibilitylattice,brattka2}, $\mathfrak{W}$ is a distributive lattice, and also the usual product operation on multivalued function induces an operation $\times$ on $\mathfrak{W}$. The algebraic structure on $\mathfrak{W}$ has been investigated in further detail in \cite{paulykojiro,paulybrattka4}.

There are two relevant unary operations defined on $\mathfrak{W}$, both happen to be closure operators. The operation $^*$ was introduced in \cite{paulyreducibilitylattice,paulyincomputabilitynashequilibria} by setting $f^0 := \id_\Baire$, $f^{n+1} := f \times f^{n}$ and then $f^*(n,x) := f^n(x)$. It corresponds to making any finite number of parallel uses of $f$ available. Similarly, the \emph{parallelization} operation $\widehat{\phantom{f}}$ from \cite{brattka2,brattka3} makes countably many parallel uses available by $\widehat{f}(x_0, x_1, x_2, \ldots) := (f(x_0), f(x_1), f(x_2), \ldots)$.

We will make use of an operation $\star$ defined on $\mathfrak{W}$ that captures aspects of function composition. Following \cite{gherardi4,paulybrattka3cie}, let $f \star g := \max_{\leqW} \{f_0 \circ g_0 \mid f \equivW f_0 \wedge g \equivW g_0\}$. We understand that the quantification is running over all suitable functions $f_0$, $g_0$ with matching types for the function composition. It is not obvious that this maximum always exists, this is shown in \cite{paulybrattka4} using an explicit construction for $f \star g$. Like function composition, $\star$ is associative but generally not commutative. We use $\star$ to introduce iterated composition by setting $f^{(0)} := \id_\Baire$ and $f^{(n+1)} = f^{(n)} \star f$.

All computable multivalued functions with a computable point in their domain are Weihrauch equivalent, this degree is denoted by $1$.

An important source for examples of Weihrauch degrees that are relevant for the classification of theorems are the closed choice principles studied in e.g.~\cite{brattka3,paulybrattka}:
\begin{definition}
Given a represented space $\mathbf{X}$, the associated closed choice principle $\C_\mathbf{X}$ is the partial multivalued function $\C_\mathbf{X} : \subseteq \mathcal{A}(\mathbf{X}) \mto \mathbf{X}$ mapping a non-empty closed set to an arbitrary point in it.
\end{definition}

The Weihrauch degree corresponding to $\C_{\mathbb{N}}$ has received significant attention, e.g.~in \cite{brattka3,paulybrattka,paulymaster,mylatz,mylatzb,hoelzl,paulyoracletypetwo,pauly-fouche2,pauly-steinberg}.
			In particular, as shown in \cite{paulydebrecht}, a function between computable Polish spaces is Weihrauch reducible to $\C_\mathbb{N}$ iff it is piecewise computable iff it is effectively $\Delta^0_2$-measurable.

The second standard Weihrauch degree very relevant for our investigation will be $\lim$, with its representative $\lim : \subseteq \Baire \to \Baire$ defined via $\lim(p)(n) = \lim_{i \to \infty} p(\langle n, i\rangle)$. It was shown in \cite{brattka} that $\lim$ is Weihrauch-complete for $\Sigma^0_2$-measurable functions, and that, more generally, $\lim^{(n)}$ is Weihrauch-complete for $\Sigma^0_{n+1}$-measurable function. This line of research was continued in \cite{nobrega}.

The third standard Weihrauch degree we will refer to is $\lpo$, which has the eponymous representative $\lpo : \Cantor \to \{0,1\}$ mapping $0^\omega$ to $1$ and $p \neq 0^\omega$ to $0$. By virtue of having the same realizers, we also find $\id : \mathbb{S} \to \{0,1\}$ in this class. Furthermore, $\mathalpha{=0} : \mathbb{R} \to \{0,1\}$, $\mathalpha{=} : \mathbb{R} \times \mathbb{R} \to \{0,1\}$ and $\mathalpha{<} : \mathbb{R} \times \mathbb{R} \to \{0,1\}$ are members of the Weihrauch degree $\lpo$. In the context of computable analysis the degree was first introduced and named as such in \cite{weihrauchc}, based on earlier usage in constructive mathematics \cite{bridges6}.

For our purposes, the following representatives and properties of the degree $\C_\mathbb{N}$ are also relevant:

		\begin{lemma}[\cite{pauly-fouche2}]\label{lemma:cn}
			The following are Weihrauch equivalent:
			\begin{enumerate}
				\item $\C_\mathbb{N}$, that is closed choice on the natural numbers.
				\item $\textrm{UC}_\mathbb{N}$, defined via $\textrm{UC}_\mathbb{N} = \left ( \C_\mathbb{N} \right)|_{\{A \in \mathcal{A}(\mathbb{N}) \mid |A| = 1\}}$.
				\item $\min : \subseteq \mathcal{A}(\mathbb{N}) \to \mathbb{N}$ defined on the non-empty closed subsets of $\mathbb{N}$.
				\item $\max_{\mathcal{O}} : \subseteq \mathcal{O}(\mathbb{N}) \to \mathbb{N}$ defined on the non-empty bounded open subsets of $\mathbb{N}$.
				\item $\max_\Baire :\subseteq \mathbb{N}^\mathbb{N} \to \mathbb{N}$ defined by $\max_\Baire(p) = \max \{p(i) \mid i \in \mathbb{N}\}$.
				\item $\operatorname{Bound} : \subseteq \mathcal{O}(\mathbb{N}) \mto \mathbb{N}$, where $n \in \operatorname{Bound}(U)$ iff $\forall m \in U \ n \geq m$.
			\end{enumerate}
		\end{lemma}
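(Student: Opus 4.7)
The plan is to collapse all six items to a single Weihrauch degree via a network of reductions. Most reductions are short manipulations of positive and negative information; the one nontrivial ingredient is a guess-and-correct construction that produces a singleton closed set from an arbitrary non-empty $A \in \mathcal{A}(\mathbb{N})$, which is what powers the reductions into $\textrm{UC}_\mathbb{N}$ and $\operatorname{Bound}$.

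Several reductions are immediate: $\textrm{UC}_\mathbb{N} \leqW \C_\mathbb{N}$ by restriction; $\C_\mathbb{N} \leqW \min$ since $\min A \in A$; $\operatorname{Bound} \leqW \C_\mathbb{N}$ and $\max_\mathcal{O} \leqW \min$ both follow from the observation that the set of upper bounds of $U \in \mathcal{O}(\mathbb{N})$ is closed (its complement $\{n : \exists u \in U,\ u > n\}$ is semi-decidable from positive information on $U$); and $\max_\Baire \leqW \max_\mathcal{O}$ by viewing the range of a sequence as a positively-given open set. In the reverse direction, given negative information on $A$, let $g_t$ denote the least natural not yet evicted from $A$ by stage $t$. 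The sequence $g$ is non-decreasing, bounded above by $\min A$, and converges to $\min A$; hence $\max_\Baire(g) = \min A \in A$, yielding $\C_\mathbb{N} \leqW \max_\Baire$. Chaining these together gives $\C_\mathbb{N} \leqW \max_\Baire \leqW \max_\mathcal{O} \leqW \min$, leaving only the inequality $\min \leqW \C_\mathbb{N}$ to close the equivalence of items (1)--(5).

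This final inequality is the technical heart, and I obtain it as $\min \leqW \textrm{UC}_\mathbb{N} \leqW \C_\mathbb{N}$. Using the sequence $g_t$ again, define
\[
  B := \{\langle m, t\rangle \in \mathbb{N} \,:\, m \in A,\ g_t = m,\ g_s \neq m \text{ for every } s < t\}.
\]
Each defining clause is effectively refutable from negative information on $A$: the first directly, the other two by inspecting the finitely-determined values $g_0, \dots, g_t$. So negative information on $B$ is computable. Because $g$ is monotone and eventually constant at $\min A$, its range is a finite chain of naturals ending at $\min A$, and the only element of this chain lying in $A$ is $\min A$ itself; the third clause then pins $t$ down to the first stage at which $g$ hits that value. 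Hence $B = \{\langle \min A, t_0\rangle\}$ is a singleton, $\textrm{UC}_\mathbb{N}$ applies to it, and the first coordinate of its output is $\min A$. The main thing to verify is that monotonicity of $g$ really does force $B$ to be a singleton while the enumeration of its complement remains effective.

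To fold in $\operatorname{Bound}$, it suffices to prove $\textrm{UC}_\mathbb{N} \leqW \operatorname{Bound}$. Given $A = \{a\}$ via negative information, positively enumerate $U := \{0\} \cup \{0, 1, \dots, a-1\}$ by streaming in every $n < g_t$ at each stage $t$; since $g_t \to a$, this $U$ is non-empty and bounded. Any $n \in \operatorname{Bound}(U)$ satisfies $a \leq n + 1$, so the post-composition -- which has access to both the negative enumeration of $A$ and the number $n$ -- waits until $n+1$ of the $n+2$ elements of $\{0, \dots, n+1\}$ have been evicted from $A$, identifies the unique remaining candidate as $a$, and writes $0^a 1^\omega$ to the output tape. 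Deferred output is permitted here because the $\delta_\mathbb{N}$-name need only eventually appear on the tape. This closes all six equivalences.
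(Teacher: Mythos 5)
Your proof is correct. The paper states this lemma with a citation to \cite{pauly-fouche2} and gives no proof of its own, so there is nothing in the text to compare against; I have checked your argument on its own terms.

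The cycle $\textrm{UC}_\mathbb{N} \leqW \C_\mathbb{N} \leqW \max_\Baire \leqW \max_\mathcal{O} \leqW \min \leqW \textrm{UC}_\mathbb{N} \leqW \operatorname{Bound} \leqW \C_\mathbb{N}$ is closed and every link holds. Two details are worth spelling out a little more when writing this up, though you do flag both. First, in the reduction $\min \leqW \textrm{UC}_\mathbb{N}$, the crucial point is that the clauses ``$g_t = m$'' and ``$g_s \neq m$ for all $s < t$'' are genuinely \emph{decidable} once $t$ symbols of the $\psi$-name of $A$ have been read (because $g_t$ depends only on that finite prefix), while ``$m \in A$'' is co-semi-decidable; together this makes the complement of $B$ c.e.\ relative to the name of $A$, so $B$ is a computable point of $\mathcal{A}(\mathbb{N})$, and monotonicity of $g$ with limit $\min A$ forces $B$ to be the singleton $\{\langle \min A, t_0\rangle\}$. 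Second, in $\textrm{UC}_\mathbb{N} \leqW \operatorname{Bound}$, explicitly including $0$ in $U$ handles the degenerate case $A=\{0\}$, the bound $a \leq n+1$ guarantees $a \in \{0,\dots,n+1\}$, and the deferred write of $0^a1^\omega$ is legitimate for the representation $\delta_\mathbb{N}$ since output may be delayed arbitrarily. No gap.
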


\begin{lemma}[\cite{gherardi4}]
The following are Weihrauch equivalent:
\begin{enumerate}
\item $\C_\mathbb{N}$
\item $\lim_\Delta : \subseteq \mathbb{R}^\mathbb{N} \to \mathbb{R}$, where $\lim_{\Delta}$ maps an eventually constant sequence to its limit
\end{enumerate}
\end{lemma}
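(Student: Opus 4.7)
The plan is to establish both Weihrauch reductions separately, exploiting the alternative representatives of $\C_\mathbb{N}$ provided by Lemma \ref{lemma:cn}. For $\lim_\Delta \leqW \C_\mathbb{N}$ I would use the $\operatorname{Bound}$ representative (item 6), and for $\C_\mathbb{N} \leqW \lim_\Delta$ I would use the $\textrm{UC}_\mathbb{N}$ representative (item 2).

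For $\lim_\Delta \leqW \operatorname{Bound}$, starting from an eventually constant sequence $(x_n)$ of reals, the preprocessing $H$ should enumerate into an open set $U \subseteq \mathbb{N}$ every index $n$ for which the inequality $x_n \neq x_{n+1}$ becomes witnessed by the fast Cauchy approximations (i.e.\ once the rational approximations at some common precision produce disjoint intervals). Since inequality of reals is semi-decidable and $(x_n)$ is eventually constant, only finitely many $n$ satisfy $x_n \neq x_{n+1}$, so $U$ is bounded and thus a valid input to $\operatorname{Bound}$. Given any $N \in \operatorname{Bound}(U)$ returned by the oracle, we have $N \geq \max U$, so $x_m = x_{m+1}$ for every $m \geq N+1$, and hence $x_{N+1} = \lim_{n \to \infty} x_n$. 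The post-processing $K$ then simply reads off $x_{N+1}$ from the original input.

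For $\textrm{UC}_\mathbb{N} \leqW \lim_\Delta$, given a singleton closed set $A = \{n^*\}$ in the negative-information representation $\psi$, $H$ produces the integer-valued real sequence $y_k := \min\{n \in \mathbb{N} \mid n \text{ has not been enumerated into } \mathbb{N} \setminus A \text{ by stage } k\}$, presented as constant-rational reals. Every $n < n^*$ is eventually enumerated out of $A$, whereas $n^*$ itself never is, so $y_k$ stabilises at $n^*$; thus $(y_k) \in \dom(\lim_\Delta)$ and $\lim_\Delta(y_k) = n^*$. Then $K$ recovers $n^* \in A$ from the real approximation by reading precision sufficient to isolate the integer (precision $1/3$ suffices).

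The construction is essentially routine once the right representatives are chosen. The main thing to be careful about is that the open set $U$ produced in the first reduction is only genuinely bounded when the input lies in $\dom(\lim_\Delta)$; correctness of a Weihrauch reduction is not required on inputs outside the domain of the source problem, so this is not a genuine obstacle, but it is the point where the eventually-constant hypothesis is actually used. The only other minor subtlety is presenting the integer-valued sequence $(y_k)$ as a valid input to $\lim_\Delta$ in the standard representation $\rho$, which is trivial since integers admit an evident fast Cauchy presentation.
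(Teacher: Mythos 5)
The paper states this lemma as a citation to \cite{gherardi4} and gives no proof of its own, so there is no in-paper argument to compare against; I can only assess your proof on its merits, and it is correct. Both directions are clean. For $\lim_\Delta \leqW \operatorname{Bound}$: enumerating $U = \{n : x_n \neq x_{n+1}\}$ is computable because $\neq$ on $\mathbb{R}$ is semi-decidable from $\rho$-names, and $U$ is finite (hence a legitimate input to $\operatorname{Bound}$) precisely because the sequence is eventually constant; given $N \in \operatorname{Bound}(U)$, every $m > N$ satisfies $x_m = x_{m+1}$, so $x_{N+1}$ is the limit, and $K$ can read it off the original input (which, in the paper's $K\langle\id, GH\rangle$ convention, it has access to). Your remark that $U$ need not be bounded outside $\dom(\lim_\Delta)$ is exactly the right observation and is handled correctly by the partiality of a Weihrauch reduction. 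For $\textrm{UC}_\mathbb{N} \leqW \lim_\Delta$: the sequence $y_k = \min\{n : n$ not yet excluded from $A$ at stage $k\}$ is computable from the $\psi$-name (the search terminates since $01^{n+1}0$ is too long to appear for large $n$), is non-decreasing, is bounded above by $n^*$ (which is never excluded), hence stabilises at $n^*$; presenting integers as constant-rational fast-Cauchy names and extracting the integer from the $\rho$-name of the limit by rounding at precision $1/3$ are both routine. The choice of $\operatorname{Bound}$ and $\textrm{UC}_\mathbb{N}$ as the working representatives of $\C_\mathbb{N}$ is a sensible one and makes both directions essentially one-line constructions; nothing further is needed.
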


\begin{lemma}[\cite{mylatz,paulymaster}]\label{lemma:lpo}
\begin{enumerate}
\item For $f : \mathbf{X} \to \{0,\ldots,n\}$ we have $\C_\mathbb{N} \nleqW f$.
\item $\lpo \leW \lpo^* \leW \C_\mathbb{N}$
\end{enumerate}
\end{lemma}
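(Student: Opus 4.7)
I would argue by contradiction. Suppose $\C_\mathbb{N} \leqW f$ is witnessed by computable $K$ and $H$, and fix the realizer $G$ of $f$ that sends each input name to the unique canonical name $0^m 1^\omega$ of its $f$-value; this is well-defined since $\{0,\ldots,n\}$ inherits the discrete representation from $\mathbb{N}$. The composite realizer then splits into the $n+1$ continuous auxiliary maps $K^{(m)}(p) := K\langle p, 0^m 1^\omega \rangle$, one per possible value of $f(\delta_{\mathbf{X}}(H(p)))$. Evaluate each $K^{(m)}$ at the name $0^\omega$ for $\mathbb{N} \in \mathcal{A}(\mathbb{N})$: the outcome either has its first $1$ at some position $N_m$ (call $m$ \emph{good}) or is the constant $0^\omega$ (call $m$ \emph{bad}). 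Pick $k^* \in \mathbb{N} \setminus \{N_m : m \text{ good}\}$, which exists since the set on the right is finite. By continuity of each $K^{(m)}$ a sufficiently deep cylinder $[0^M] \subseteq \Cantor$ ensures that on $[0^M]$ every good $K^{(m)}$ has committed to the prefix $0^{N_m} 1$ and every bad $K^{(m)}$ has written at least $k^* + 1$ initial zeros. It then remains to construct some $p \in [0^M] \cap \dom(\psi)$ with $\psi(p) = \{k^*\}$ by appending, after the $0^M$-prefix, a zero-padded enumeration of the forbidden subwords $01^{j+1}0$ for $j \neq k^*$, taking care that no spurious $01^{k^*+1}0$ arises at a boundary. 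The output on this $p$ then cannot be the canonical name $0^{k^*} 1^\omega$: in the good case its first $1$ is at $N_{m(p)} \neq k^*$, and in the bad case position $k^*$ still carries a $0$. This contradicts the reduction.

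\textbf{Plan for the positive reductions in~(2).} $\lpo \leqW \lpo^*$ is immediate from $\lpo^*(1, p) = \lpo(p)$. For $\lpo^* \leqW \C_\mathbb{N}$, given input $(n; x_1, \ldots, x_n)$ I would construct the closed subset $A \subseteq \{0, \ldots, 2^n - 1\}$ by declaring $m$ to be ejected from $A$ as soon as, for some bit-position $i$ with the $i$-th bit of $m$ equal to $1$, a $1$ is observed in $x_i$. This is co-c.e.\ in $\mathbb{N}$, and $m \in A$ says exactly that $m$ bitwise underestimates $\lpo^*(x_1, \ldots, x_n)$; consequently $\max A$ in natural-number order equals the binary encoding of $\lpo^*(x_1, \ldots, x_n)$ itself. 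Computing $\max$ of a bounded non-empty closed subset of $\mathbb{N}$ reduces to $\C_\mathbb{N}$ by Lemma~\ref{lemma:cn} (e.g.\ by reflecting the bounded range and invoking $\min$).

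\textbf{Plan for the strict separations.} For $\lpo^* \nleqW \lpo$ it suffices to show $\lpo^2 \nleqW \lpo$. In any purported reduction the composite realizer is one of only two continuous branches $K\langle \cdot, 1^\omega\rangle$ or $K\langle \cdot, 01^\omega\rangle$. Pick three sequences in $\Cantor \times \Cantor$ converging to $(0^\omega, 0^\omega)$ realizing the three distinct $\lpo^2$-values $(0,1), (1,0), (0,0)$; the pigeonhole principle gives two of these sequences sharing a dominant branch $b$, and continuity of $K\langle \cdot, 0^b 1^\omega\rangle$ at $(0^\omega, 0^\omega)$ then forces two different canonical names in $\{0,1\}^2$ to coincide, which is impossible. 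For $\C_\mathbb{N} \nleqW \lpo^*$ I would use continuity of $H$: on some cylinder $[0^{M_0}]$ around $0^\omega$ the first (arity) coordinate of $H(p)$ is locked to a fixed $n_0$. A computable normalization $\rho : \dom(\psi) \to \dom(\psi) \cap [0^{M_0}]$ sending each $\psi$-name to a zero-padded enumeration of its forbidden subwords -- so that $\psi \circ \rho = \psi$ -- converts the supposed reduction into a witness for $\C_\mathbb{N} \leqW \lpo^{n_0}$, contradicting part~(1). The main technical obstacle throughout is the combinatorial bookkeeping around the boundary between zero-padding and inserted subwords $01^{j+1}0$, to guarantee that no new such subwords are introduced; this is a straightforward but careful case analysis on $1$-run lengths in the produced concatenations.
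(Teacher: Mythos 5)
The paper states this lemma as a citation to~\cite{mylatz,paulymaster} and gives no proof of its own, so I evaluate your attempt on its own merits. Your positive reductions are fine: $\lpo \leqW \lpo^*$ is trivial, and for $\lpo^* \leqW \C_\mathbb{N}$ the ``bitwise underestimation'' set $A \subseteq \{0,\dots,2^n-1\}$ is co-c.e., non-empty, bounded, and $\max A$ is the answer; reflecting to turn $\max$ into $\min$ and invoking Lemma~\ref{lemma:cn} closes it.

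The two non-reductions, however, share a genuine gap: you repeatedly invoke continuity of a branch $K^{(m)}$ (resp.\ $K\langle\cdot,0^b1^\omega\rangle$) at a point where that branch is not known to be defined. In part~(1) your dichotomy ``good'' ($K^{(m)}(0^\omega)$ has a first $1$) vs.\ ``bad'' ($K^{(m)}(0^\omega)=0^\omega$) is not exhaustive. The function $K$ is only constrained on pairs $\langle p, GH(p)\rangle$ with $p\in\dom(\C_\mathbb{N}\psi)$; for $m$ other than the actual $m(0^\omega)$, $K^{(m)}(0^\omega)$ can produce only finitely many bits, or none. For such $m$, the claim that ``on $[0^M]$ every bad $K^{(m)}$ has written at least $k^*+1$ initial zeros'' is false: on $[0^M]$ the machine may have written only $0^j$ with $j\le k^*$, and upon reading the non-zero tail of your constructed $p^*$ it may well extend this to $0^{k^*}1\cdots$, i.e.\ a correct name for $k^*$. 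Choosing $k^*$ first and then shrinking the cylinder cannot be made to work, because the set of values a not-yet-committed branch can still output on $[0^M]$ need not be bounded. The standard fix inverts the order and is adaptive: realize $\C_\mathbb{N}$ by $\lim_\Delta$ or $\max_\Baire$, start at $0^\omega$ (where the branch $m_0$ actually taken \emph{is} defined and commits to answer $0$ on a neighbourhood), move to a nearby input with answer $1$ to force a different branch $m_1\ne m_0$, and iterate; after $n+2$ steps the pigeonhole gives a contradiction.

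The argument for $\lpo^2\nleqW\lpo$ has the same flaw plus a second one. Pigeonhole over your three sequences may select a branch $b$ with $K\langle\cdot,0^b1^\omega\rangle$ undefined at $(0^\omega,0^\omega)$ (only $b_0:=\lpo(H(0^\omega,0^\omega))$ is guaranteed defined there). And even if it is defined, your conclusion does not follow from continuity alone: a partial continuous map into a discrete $4$-element space can happily assume three different values on three sequences converging to a point outside its domain (think ``position of the first $1$ modulo~$3$''). The actual proof must use more of the structure of $\lpo^2$, e.g.: the open branch-$0$ region must avoid all discontinuity points of $\lpo^2$, so the closed branch-$1$ region contains the whole line $\{0^\omega\}\times\Cantor$; but $\lpo^2$ is already discontinuous on that line while the branch-$1$ realizer $K\langle\cdot,0^{1}1^\omega\rangle$ composed with $\delta_{\{0,1\}^2}$ is locally constant there. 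Your reduction of $\C_\mathbb{N}\nleqW\lpo^*$ to part~(1) -- locking the arity by continuity of $H$ and precomposing with a computable $\psi$-name normalizer into the relevant cylinder -- is a sound idea, but of course inherits the gap in~(1).
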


\begin{lemma}[\cite{paulybrattka}]
\label{lem:paulybrattka}
\begin{enumerate}
\item $\C_\mathbb{N} \star \C_\mathbb{N} \equivW \C_\mathbb{N}$
\item $\lim \star \C_\mathbb{N} \equivW \lim$
\end{enumerate}
\end{lemma}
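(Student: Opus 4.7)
For both items the direction $\geqW$ is immediate (a single application of the right-hand side is always available via the $\star$-composition by using a computable input for the other factor), so I would focus on the upper bounds.

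For item (1), by the definition of $\star$ as a maximum it suffices to show that $f_0 \circ g_0 \leqW \C_\mathbb{N}$ for representatives $f_0, g_0 \equivW \C_\mathbb{N}$ of matching type. An instance of such a composition amounts to a dependent-choice problem: given $A \in \mathcal{A}(\mathbb{N})$ non-empty together with a uniformly computable family $B : \mathbb{N} \to \mathcal{A}(\mathbb{N})$ such that $B(n)$ is non-empty for every $n \in A$, return a pair $(n, m)$ with $n \in A$ and $m \in B(n)$. Using a computable pairing $\langle \cdot, \cdot \rangle : \mathbb{N}^2 \to \mathbb{N}$, I would encode the problem into a single instance of $\C_\mathbb{N}$ via
\[
C := \{\langle n, m \rangle \mid n \in A \text{ and } m \in B(n)\}.
\]
The complement of $C$ is the semidecidable union of $\{\langle n, m \rangle \mid n \notin A\}$ and $\{\langle n, m \rangle \mid m \notin B(n)\}$, so $C \in \mathcal{A}(\mathbb{N})$ is computable from the input, and it is non-empty since $A$ is non-empty and each relevant $B(n)$ is non-empty. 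A single call to $\C_\mathbb{N}(C)$ returns some $\langle n, m \rangle$, from which a valid answer is decoded by projection, yielding $\C_\mathbb{N} \star \C_\mathbb{N} \leqW \C_\mathbb{N}$.

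For item (2), the additional trick is that the outer $\lim$ can absorb the preceding $\C_\mathbb{N}$ via a diagonalization. An instance of $\lim \star \C_\mathbb{N}$ is $(A, g)$ with $A \in \mathcal{A}(\mathbb{N})$ non-empty and a uniformly computable family of $\lim$-inputs $g(n) \in \Baire$ for $n \in A$. Let $a_i$ be the least natural number not yet ruled out from $A$ after reading $i$ stages of the negative information for $A$; then $a_i$ is computable from the input, eventually constantly equal to $n^\ast := \min A$, and $n^\ast \in A$. I would then form the single $\lim$-input
\[
p(\langle n, i \rangle) := g(a_i)(\langle n, i \rangle).
\]
For every fixed $n$, the sequence $i \mapsto p(\langle n, i \rangle)$ agrees with $i \mapsto g(n^\ast)(\langle n, i \rangle)$ for all sufficiently large $i$, and thus has the same limit $\lim(g(n^\ast))(n)$. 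Hence $\lim(p) = \lim(g(n^\ast))$ is a legitimate output of $\lim \star \C_\mathbb{N}$, establishing $\lim \star \C_\mathbb{N} \leqW \lim$.

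The main technical obstacle, in my view, is neither construction in isolation but the bookkeeping needed to interpret $\star$ at the level of concrete representatives: one must check that the chosen $f_0$, $g_0$ have matching input/output types and that the encodings above really lift to computable (rather than merely continuous) reductions realizing the supremum defining $\star$. Once this is arranged, the pairing argument for (1) and the eventually-constant diagonalization for (2) are uniform and computable, and the reductions go through without further obstruction.
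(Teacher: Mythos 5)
The paper cites this lemma from \cite{paulybrattka} and does not reprove it, so there is no in-paper argument to compare against; I judge the proposal on its own.

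Item~(1) is correct in substance. The semidecidability of $\mathbb{N}\setminus C$ should be read operationally: for each $\langle n,m\rangle$, dovetail the attempt to certify $n\notin A$ with the attempt to certify $m\notin B(n)$; whenever $\langle n,m\rangle\notin C$ one of the two succeeds. As stated, the second summand $\{\langle n,m\rangle\mid m\notin B(n)\}$ need not be semidecidable in isolation when $B(n)$ is ill-defined for some $n\notin A$, but the union is semidecidable, which is all the construction requires, so the encoding into a single $\C_\mathbb{N}$-instance goes through.

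Item~(2) has the right diagonalization idea but contains a concrete gap. In the compositional product the inner map is only guaranteed to be defined on $n\in A$, whereas your $a_i$ ranges over numbers that are merely \emph{not yet} ruled out of $A$. For $a_i\notin A$ the evaluation $g(a_i)(\langle n,i\rangle)$ may diverge, so the $p$ you define need not be a total element of $\Baire$, and then it is not a legitimate input to $\lim$. The fix is a standard stalling device: at index $\langle n,i\rangle$, simulate $g(a_i)(\langle n,j\rangle)$ for $j=0,1,2,\ldots$, each for at most $i$ steps, and output the value for the largest $j$ that halted (a default, say $0$, if none). Since $a_i=\min A=:n^*$ for all sufficiently large $i$ and each $g(n^*)(\langle n,j\rangle)$ halts, the largest successful $j$ tends to infinity, and the produced values converge to $\lim(g(n^*))(n)$; the patched $p$ is total and has the correct limit. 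Your closing remarks gesture at ``bookkeeping'' but do not identify this totality problem, which is precisely the point where the reduction as written fails to type-check.
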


\subsection{On the difference between algebraic and topological models of computation}
\name{Penrose} \cite{penrose} posed and made popular the question whether the Mandelbrot set is computable -- albeit without specifying any formal definition of \emph{computable} for this question. In the BSS-model, a negative answer was readily obtained in \cite{blum2}. \name{Brattka} \cite{brattka-emperor}, however, argued that this result just reflects that the Mandelbrot set is not an algebraic object, without being meaningful for computability as naively understood: A very similar proof applies also to the epigraph of the exponential function -- which, as many\footnote{Including \name{Penrose} himself, see e.g.~\cite[Figure 4.5, p.~167]{penrose}.} would agree, \emph{ought} to be computable.(\footnote{The question whether the (distance function of the) Mandelbrot set is computable in the computable analysis sense is still open -- as shown by \name{Hertling} \cite{hertling9}, this would be implied by the hyperbolicity conjecture. See the book \cite{braverman} for a general discussion of Julia sets and computability.})

The usual focal point for disagreement between the two communities, however, is not about functions computable in the Type-2 sense but non-computable in the BSS-sense, but vice versa. As is commonly understood, and will be proven formally below, this boils down to equality (or any other non-trivial property of real numbers) being decidable in the BSS-model. \name{Brattka} and \name{Hertling} \cite{brattkahertling} proposed the \emph{feasible real RAM}'s, a variation on the BSS-model that rather than deterministic tests can only perform non-deterministic tests that may give a wrong answer for very close numbers. The functions approximable by a feasible real RAM are precisely those computable in the Type-2 sense; i.e.~allowing approximation and removing exact tests is precisely what is needed to move from BSS-computability to Type-2 computability.

The complexity of (semi)decidable sets in the two models was compared by \name{Zhong} \cite{zhong}, and \name{Boldi} and \name{Vigna} \cite{boldi}. In \cite{zhong} it is proved that every TTE-semi-decidable set is BSS semi-decidable, and a criterion is given under which the converse holds true. In \cite{boldi} it is shown amongst other things that if an open set $U \in \mathcal{O}(\mathbb{R}^n)$ is BSS decidable with constants $c_1, \dots, c_n$, then the Turing degree of some standard name of $U$ is below the jump of the degrees of the binary expansion of the constants. Furthermore, a particular set $U$ is constructed for which the degree of every standard name of $U$ is above the jump of the degrees of the binary expansion of the constants. It is also shown that the halting set of any BSS machine with constants $c_1, \dots, c_n$ is computably (Turing-)overt relative to the constants.

We can translate some of their results into our parlance. Let $\mathrm{OD}$ denote the space of open subsets of $\mathbb{R}^*$ that are decidable by a BSS-machine using constants, represented in the obvious way by a G\"odel-number of the machine together with names for the constants. Let $\mathrm{SD}$ denote the space of Halting sets of BSS-machines using constants, again represented in the obvious way. Then:

\begin{proposition}
\begin{enumerate}
\item $\id : \mathcal{O}(\mathbb{R}^*) \to \mathrm{SD}$ is computable.
\item $\lim \leqW \left ( \id : \mathrm{OD} \to \mathcal{O}(\mathbb{R}^*) \right )$.
\item There is some $g : \subseteq \Baire \mto \mathbb{N}$ such that $\left ( \id : \mathrm{OD} \to \mathcal{O}(\mathbb{R}^*) \right ) \leqW g \star \lim$.
\item There is some $g : \subseteq \Baire \mto \mathbb{N}$ such that $\left (\overline{\phantom{A}} : \mathrm{SD} \to \mathcal{V}(\mathbb{R}^*) \right ) \leqW g$, where $\overline{\phantom{A}}$ denotes the closure operator.
\end{enumerate}
\begin{proof}
\begin{enumerate}
\item This is the uniform version of \cite[Theorem 3.1]{zhong}.
\item This follows by noting that the proof of \cite[Theorem 10]{boldi} is completely uniform.
\item This follows from \cite[Theorem 7]{boldi}.
\item This is the statement of \cite[Corollary 6]{boldi}.
\end{enumerate}
\end{proof}
\end{proposition}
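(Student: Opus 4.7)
The plan is to handle each of the four items by unpacking and uniformizing the corresponding result of Zhong \cite{zhong} or Boldi--Vigna \cite{boldi}: each reference gives a pointwise statement about Turing degrees of standard names, and my task is to verify that the construction is uniform enough to lift to a Weihrauch reduction between the represented spaces $\mathcal{O}(\mathbb{R}^*)$, $\mathrm{OD}$, $\mathrm{SD}$, and $\mathcal{V}(\mathbb{R}^*)$.

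For (1), I would unpack the representation of $\mathcal{O}(\mathbb{R}^*)$: a name of $U$ is an enumeration of rational open balls whose union is $U$. From such an enumeration one constructs, in a Turing-computable way, a BSS-program with rational constants that, on input $x$, searches through the balls and halts as soon as $x$ lies in one of them; membership in a rational ball is a polynomial inequality, so the program is genuine. Its G\"odel number together with the rational constants is a valid $\mathrm{SD}$-name of $U$, and the whole assignment is computable. For (2), I would invoke Theorem 10 of \cite{boldi}, which encodes bits of an arbitrary $p \in \Cantor$ into a finite list of real constants and produces a BSS-program whose halting set $U_p$ is open and such that any standard $\mathcal{O}$-name of $U_p$ Turing-computes $p'$. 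Since the encoding $p \mapsto (\text{machine},\text{constants})$ is itself computable and $\lim$ is Weihrauch-equivalent to the Turing jump functional, this yields the desired reduction $\lim \leqW (\id : \mathrm{OD} \to \mathcal{O}(\mathbb{R}^*))$.

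For (3), I would use the converse direction in Theorem 7 of \cite{boldi}: from the machine, the constants, and the jump of the binary expansions of the constants, one Turing-computes an $\mathcal{O}$-name of the decided open set. Access to $\lim$ supplies the jump uniformly, while any residual discrete non-uniformity in assembling the output is captured by a multivalued choice $g : \subseteq \Baire \mto \mathbb{N}$ on the right, delivering the factorisation through $g \star \lim$. For (4), Corollary 6 of \cite{boldi} says the halting set of a BSS-machine is Turing-overt relative to the constants, and since $\mathcal{V}(\mathbb{R}^*)$ is exactly the space of overt subsets, the same argument produces an overt name from an $\mathrm{SD}$-name modulo a single Baire-valued choice $g$.

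The main obstacle across (2)--(4) is to trace the encodings in \cite{boldi} carefully enough to confirm uniformity in \emph{all} parameters, in particular the real-valued constants, rather than a fixed choice of them; once this uniformity is established the Weihrauch reductions are essentially routine, and the role of $g$ in (3) and (4) is merely to absorb the bounded non-uniformity that the Boldi--Vigna arguments leave open.
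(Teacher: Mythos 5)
Your overall architecture matches the paper's: the paper disposes of each item by pointing at the corresponding reference (Zhong's Theorem 3.1 for item 1; Boldi--Vigna's Theorem 10, Theorem 7, and Corollary 6 for items 2--4) and asserting that the constructions there are uniform, and you cite exactly the same four results and argue the same uniformity claim. For (2)--(4) you are doing what the paper does, with a bit more narrative but no genuinely different idea, and you are candid about where the real work lies (tracing the encodings in Boldi--Vigna to confirm uniformity).

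For item (1), however, your sketch contains a slip that would break the construction as stated. You propose producing ``a BSS-program with \emph{rational} constants'' that searches through the enumerated rational balls. But an $\mathcal{O}(\mathbb{R}^*)$-name is an arbitrary infinite enumeration, while the Gödel number of the BSS-machine is a single natural number and a finite tuple of rational constants carries only finitely many bits; for the resulting realizer to be continuous, the Gödel number must in fact be essentially constant, so the infinite information of the enumeration has to go somewhere. It must be encoded into the binary (or similar) expansion of a single \emph{real} constant, which the fixed BSS-program then decodes digit by digit (taking care to avoid the dyadic-rational ambiguity, e.g.\ by a padding scheme that guarantees the constant is irrational). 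This is the standard way to realize $\id : \mathcal{O}(\mathbb{R}^*) \to \mathrm{SD}$ uniformly, and it is what ``the uniform version of Zhong's Theorem 3.1'' amounts to. The fix is routine, but as written your item (1) would not compile to a valid reduction. A similar minor terminological slip occurs in item (2), where you speak of the ``halting set'' of the constructed machine although the map in question is $\id : \mathrm{OD} \to \mathcal{O}(\mathbb{R}^*)$ and one needs the set to be BSS-\emph{decidable}, not merely semi-decidable; Boldi--Vigna's Theorem 10 does supply a decidable set, so again this is a slip of phrasing rather than of substance.
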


While making these classifications precise, and attending to the numerous remaining questions on the complexity of sets in terms of Weihrauch reducibility seems like an interesting endeavour, we leave it to future work.


In \cite{MorozovKorovina}, the notion of semi-decidability in the TTE model is related to $\Sigma$-definability over the reals without equality: It is shown that the $\Sigma$-definable sets without equality are precisely the semi-decidable sets in the TTE-model. The main result of the paper is that there is no effective procedure which takes as input a finite formula that defines an open set with equality and takes it to a formula that defines the same open set without equality.

\subsection{Relativization}
Prima facie, several of our most general result might look unsatisfactory to the reader coming from the algebraic computation model side: We restrict our operations to be computable, hence presuppose the Type-2 notion of computability, and disallow the use of arbitrary constants that is customary for BSS-machines. These issues can be resolved directly, using the technique of \emph{relativization} from classical recursion theory. The idea here is that almost all computability-theoretic arguments remain true relative to some arbitrary, but fixed oracle $\Omega \in \Cantor$. This is of crucial importance for us due to the following:

\begin{fact}
A function is continuous iff it is computable relative to some oracle.
\end{fact}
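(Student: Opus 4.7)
The plan is to reduce this fact to its classical form for partial realizers on Cantor space, and then carry out the two standard directions. Since a multivalued function $f\colon \subseteq \mathbf{X} \mto \mathbf{Y}$ between represented spaces is by definition continuous (respectively, computable) exactly when it admits a continuous (respectively, computable) realizer $F\colon \subseteq \Cantor \to \Cantor$, and since relativizing a realizer by an oracle $\Omega$ corresponds exactly to relativizing $f$ by $\Omega$, it suffices to prove the statement for partial functions $F\colon \subseteq \Cantor \to \Cantor$: such an $F$ is continuous iff it is computable relative to some $\Omega \in \Cantor$.

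For the easy direction, suppose that $F$ is computed by some oracle Type-2 machine $M^\Omega$. Any particular bit written onto the output tape is produced after finitely many computation steps, during which only finitely many cells of the input tape and of the oracle tape are consulted. Hence the map $p \mapsto M^\Omega(p)$ is continuous on the set of inputs on which it keeps writing, and in particular continuous on $\dom(F)$. This is the standard observation that oracle Type-2 machines compute continuous functions.

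For the converse, let $F\colon \subseteq \Cantor \to \Cantor$ be continuous. For each finite binary word $w$ define $\phi(w)$ to be the longest binary word $v$ of length at most $|w|$ such that every $p \in \dom(F)$ extending $w$ satisfies that $F(p)$ extends $v$. Continuity of $F$ implies that for every $p \in \dom(F)$ the sequence of words $\phi(w_n)$, with $w_n$ the length-$n$ prefix of $p$, converges to $F(p)$: given a desired output precision $m$, continuity at $p$ produces some $n \geq m$ such that $F$ maps every element of $\dom(F)$ extending $w_n$ into the cylinder determined by the first $m$ bits of $F(p)$, which forces $\phi(w_n)$ to extend this prefix. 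Now encode $\phi$ as an oracle $\Omega \in \Cantor$ via a standard pairing of $\{0,1\}^* \times \{0,1\}^*$, and let a Type-2 machine read successive initial segments of its input $p$, consult $\Omega$ to obtain their $\phi$-values, and copy each newly determined output bit onto the output tape; the resulting $\Omega$-computable machine realizes $F$.

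The only mildly subtle step is confirming that $\phi(w_n) \to F(p)$ for $p \in \dom(F)$, which amounts to the equivalence between topological continuity on $\Cantor$ and its prefix-monotone combinatorial description. Partiality of $F$ poses no additional difficulty since, per the convention recalled above, the behaviour of a Type-2 machine on inputs outside the domain of the function it computes is unconstrained.
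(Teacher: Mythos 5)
The paper states this as a \textsc{Fact} without offering a proof: it is a standard folklore result in the TTE framework (see e.g.\ Weihrauch's book), and the paper only invokes it to justify relativization of later arguments. Your proof is a correct instance of the standard argument: reduce to realizers $F\colon\subseteq\Cantor\to\Cantor$, note the finite-use property for the forward direction, and for the converse encode into the oracle the ``longest determined prefix'' function $\phi$. Two small points you gloss over: first, when no $p\in\dom(F)$ extends $w$, ``the longest such $v$'' is not unique (every $v$ of length $\le|w|$ vacuously qualifies); one should just fix an arbitrary value, which is harmless since such $w$ are never prefixes of actual inputs in $\dom(F)$. Second, for the machine to be able to ``copy each newly determined output bit'' you are implicitly using that $\phi$ is monotone (prefix-preserving), i.e.\ $w\sqsubseteq w'$ implies $\phi(w)\sqsubseteq\phi(w')$; this does hold, because passing from $w$ to $w'$ only weakens the universal constraint over $\dom(F)$ while the candidate words form a chain of prefixes of any single $F(p)$, but it deserves a sentence. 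With those two remarks the argument is complete and is the one a reader would reconstruct from the paper's surrounding discussion of Type-2 machines and relativization.
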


All of our proofs relativize. Thus, for each of our results replacing each instance of \emph{computable} by \emph{continuous} again yields a true statement. Moreover, for the relativized version of a statement about an algebraic computation model, it does not change anything to allow arbitrary constants.

\section{The complexity of finitely many tests}

\subsection{Generalized register machines with $\lpo$-tests}

As explained above, the crucial distinguishing feature giving the algebraic computation models additional power is the ability to make finitely many tests, usually either equality or order. Both examples are Weihrauch equivalent to $\lpo$. Thus we are lead to the problem of classifying the computational power inherent in being allowed to make finitely many uses of $\lpo$. Note that we are not required to state any bounds in advance (which would just yield $\lpo^*$), but simply have to cease making additional queries to $\lpo$ eventually.

We can formalize this using the generalized register machines: We allow all computable functions on $\Cantor$ as functions, and $\lpo$ as test\footnote{Alternatively, we could have used the rather cumbersome Oracle-Type-Two machines suggested in \cite{paulyoracletypetwo}.}.

\begin{definition}
Let $\lpo^\diamond : \subseteq \mathbb{N} \times (\Cantor)^* \to (\Cantor)^*$ take as input a G\"odel-number of some generalized register machine $M$ on $\Cantor$ with computable functions (specified as part of the G\"odel-number) and $\lpo$-tests, as well as some input $(p_0,\ldots,p_n) \in (\Cantor)^*$ for such a machine. The output of $\lpo^\diamond$ is whatever $M$ would output on input $(p_0,\ldots,p_n)$.
\end{definition}

Of course, the preceding definition makes sense with some arbitrary multi-valued function $f$ in place of $\lpo$, and would give rise to an operation $^\diamond$ on the Weihrauch degrees. This operation is related to the \emph{generalized Weihrauch reductions} proposed by \name{Hirschfeldt} in \cite{hirschfeldt}. While a detailed investigation of $^\diamond$ seems highly desirable, it is beyond the scope of the present paper and thus relegated to future work.

\begin{proposition}\label{Proposition: C_N equiv_W lpo-diamond}
$\C_\mathbb{N} \equivW \lpo^\diamond$.
\begin{proof}
\begin{description}
\item[$\max_\Baire \leqW \lpo^\diamond$] We describe a generalized register machine program using computable functions and $\lpo$ that solves $\max_\Baire$. Given $p \in \Baire$ and $n \in \mathbb{N}$, we can compute some $p_n \in \Cantor$ such that $p_n \neq 0^\mathbb{N} \Leftrightarrow \exists i \in \mathbb{N} p(i) > n$. Starting with $i = 0$, we simply test $\lpo(p_i)$. If $\textrm{yes}$, we output $i$ and terminate. Else we continue with $i := i+1$.
\item[$\lpo^\diamond \leqW \C_\mathbb{N}$] Consider some generalized register machine with computable functions and $\lpo$-tests. Each computation path of the machine corresponding to some valid input is finite, in particular, uses $\lpo$ only finitely many times. We encode the results of the $\lpo$-tests along such a finite path by a sequence of numbers $a_1, \dots, a_m$ in the following way: If the result of the $i^{\textrm{th}}$ test is $1$, we put $a_i = 0$. If it is $0$, we put $a_i = c + 1$, where $c$ is a ``precision parameter'', intended to represent a bound on the occurrence of the first $1$ in the input to $\lpo$. Using a standard tupling function we can encode the sequence $a_1, \dots, a_m$ into a single natural number $\langle a_1, \dots, a_m \rangle$. Furthermore, we can arrange that every natural number encodes such a tuple.

Given a natural number $\langle a_1, \dots, a_m \rangle$ which encodes the results of the $\lpo$-tests along a finite path, we can check if the choices are infeasible. The number is rejected in three cases:

\begin{enumerate}
\item If the path does not end in a leaf.
\item If there exists $1 \leq i \leq m$ such that $a_i = 0$ but the input to the $i^\textrm{th}$ $\lpo$-test is non-zero.
\item If $a_i = c + 1$ but the input to the $i^\textrm{th}$ $\lpo$-test starts with more than $c$ zeroes.
\end{enumerate}

We can effectively enumerate all numbers that are rejected. This amounts to being able to compute the set $\mathrm{NR} \in \mathcal{A}(\mathbb{N})$ of numbers that are never rejected. We apply $\C_\mathbb{N}$ to this set to pick a number which is never rejected, which allows us to simulate the computation of the register machine in an otherwise computable fashion.
\end{description}
\end{proof}
\end{proposition}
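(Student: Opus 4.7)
The plan is to prove the two Weihrauch reductions separately. For the representative of $\C_\mathbb{N}$ I would use $\max_\Baire$ from Lemma \ref{lemma:cn}, which is better adapted to the register-machine setting than closed choice itself.

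For $\C_\mathbb{N} \leqW \lpo^\diamond$, I would describe a generalized register machine with computable operations and LPO-tests that solves $\max_\Baire$. Given $p \in \Baire$ in the domain, the machine iterates over candidate upper bounds $n = 0, 1, 2, \ldots$, computing at stage $n$ a sequence $q_n \in \Cantor$ such that $q_n \neq 0^\omega$ iff some $p(i)$ exceeds $n$. An LPO-test on $q_n$ decides whether $n$ is already an upper bound; the first $n$ for which LPO answers \emph{yes} is output. Since the maximum of $p$ exists, the procedure halts after finitely many tests.

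For $\lpo^\diamond \leqW \C_\mathbb{N}$, I would simulate an arbitrary register machine with LPO-tests by a single use of $\C_\mathbb{N}$ (equivalently, of the $\min : \subseteq \mathcal{A}(\mathbb{N}) \to \mathbb{N}$ from Lemma \ref{lemma:cn}). The key idea is that any halting computation uses only finitely many LPO-queries, so its sequence of outcomes can be encoded into a single natural number via tupling. To make \emph{inconsistent} codes effectively refutable, the encoding must also supply, for every LPO-query that returned $0$ (the input is not $0^\omega$), a precision parameter $c$ bounding the position of the first $1$ in that input; queries returning $1$ need no witness. I would then enumerate the codes that are refuted: those whose path does not end in HALT, those where a $1$-answer is contradicted by observing a $1$ in the corresponding input, and those whose precision parameter is too small to be verified. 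This makes the set $\mathrm{NR}$ of non-refuted codes an element of $\mathcal{A}(\mathbb{N})$, and it is non-empty because the true trace of the actual computation (with sufficiently generous witnesses) is always non-refuted. Applying $\C_\mathbb{N}$ to $\mathrm{NR}$ yields a code from which the machine's output can be reconstructed computably.

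The part I expect to require the most care is the encoding scheme: the precision parameters must be recorded in a way that lets any \emph{fake} $0$-answer be eventually exposed, while also ensuring that a code surviving refutation genuinely corresponds to a valid run of the machine. Once the encoding is set up so that non-refutation forces the simulated outcomes to coincide with the real LPO-answers on the queried inputs, the simulation itself is a routine bookkeeping argument.
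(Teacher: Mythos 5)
Your proposal matches the paper's proof essentially step for step: the same choice of $\max_\Baire$ as the representative of $\C_\mathbb{N}$ for the first reduction (iterating over candidate bounds $n$ with one $\lpo$-test per candidate), and the same encoding-plus-refutation scheme for the second reduction, including the identical idea of attaching a precision parameter to each negative $\lpo$-answer and enumerating refuted codes so that the non-refuted codes form an element of $\mathcal{A}(\mathbb{N})$. Nothing is missing or different; the argument is correct.
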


On a side note, let us consider two more computational models: First, the concept of finitely revising computation presented in \cite{ziegler3} by \name{Ziegler}: These are Type-2 machines equipped with the additional power to reset their output finitely many times. It is easy to see that being computable by a finitely revising machine is equivalent to being Weihrauch-reducible to $\lim_\Delta$. Second, non-deterministic Type-2 computation, also introduced by \name{Ziegler} \cite{ziegler2} and fleshed out further by \name{Brattka}, \name{de Brecht} and \me~in \cite{paulybrattka}: Here the machine may guess an element of an advice space, and either proceed to successfully compute a solution, or reject the guess at a finite stage (and there must be a chance of the former). As shown in \cite{paulybrattka}, being computable by a non-deterministic machine with advice space $\mathbf{Z}$ is equivalent to being Weihrauch reducible to $\C_\mathbf{Z}$. We thus arrive at the following:

\begin{theorem}
The following computational models are equivalent in the sense that they yield the same class of computable functions:
\begin{enumerate}
\item Generalized register machines with computable functions and $\lpo$ as test.
\item Finitely revising machines.
\item Non-deterministic machines with advice space $\mathbb{N}$.
\end{enumerate}
\end{theorem}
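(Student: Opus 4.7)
The plan is to observe that each of the three models has already been characterised (or will be characterised in a single sentence) in terms of Weihrauch reducibility to a specific degree, and that all three of these degrees have been shown to be equivalent to $\C_\mathbb{N}$. Thus the theorem reduces to chaining together three equivalences.

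First, I would recall that by Proposition~\ref{Proposition: C_N equiv_W lpo-diamond} the class of functions computable by a generalized register machine with computable functions and $\lpo$-tests, which is exactly the class of functions Weihrauch reducible to $\lpo^\diamond$, coincides with the class of functions Weihrauch reducible to $\C_\mathbb{N}$. This takes care of model (1).

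Next, for model (3), I would simply cite the result from \cite{paulybrattka} quoted in the preceding paragraph: a function is computable by a non-deterministic Type-2 machine with advice space $\mathbf{Z}$ iff it is Weihrauch reducible to $\C_\mathbf{Z}$. Specialising to $\mathbf{Z} = \mathbb{N}$ yields the characterisation in terms of $\C_\mathbb{N}$, identifying model (3) with model (1).

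Finally, for model (2), I would argue that a function is computable by a finitely revising Type-2 machine iff it is Weihrauch reducible to $\lim_\Delta$. The forward direction is immediate: such a machine produces a sequence of candidate outputs only finitely many of which differ, and applying $\lim_\Delta$ componentwise yields the final output from the (componentwise eventually constant) sequence of approximations, all in an otherwise computable fashion. The reverse direction is equally direct: given a reduction to $\lim_\Delta$, the output writes tentative versions as the approximations come in and resets them whenever a revision is required; since $\lim_\Delta$ acts on eventually constant sequences, only finitely many such resets are needed. Combining this with the Gherardi lemma that $\lim_\Delta \equivW \C_\mathbb{N}$ completes the loop and identifies model (2) with the other two.

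The only step that requires genuine care is the equivalence of model (2) with reducibility to $\lim_\Delta$, since it has been asserted in the surrounding text without proof. This is essentially bookkeeping, but the right pairing between output-revisions of a finitely revising machine and coordinates of $\lim_\Delta$ (and the right way to encode the length of the eventual output) needs to be set up cleanly; everything else is a direct appeal to already-stated results.
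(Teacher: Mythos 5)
Your proposal is correct and follows exactly the route the paper intends: the theorem is presented as a direct consequence of Proposition~\ref{Proposition: C_N equiv_W lpo-diamond} ($\lpo^\diamond \equivW \C_\mathbb{N}$), the cited result from \cite{paulybrattka} characterising non-deterministic machines via $\C_\mathbf{Z}$, the easy-to-see characterisation of finitely revising machines via $\lim_\Delta$, and the Gherardi lemma $\lim_\Delta \equivW \C_\mathbb{N}$. The paper does not spell out a separate proof at all, so your chaining of these four ingredients (and your sketch of the $\lim_\Delta$ equivalence, which the paper leaves to the reader) is precisely what is expected.
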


We could equivalently have used generalized register machines over $\mathbb{R}$, with partial computable functions over $\mathbb{R}$ and $=$ as test. As BSS-machines are a restricted case of these, it is clear that simulating a BSS-machine is no harder than solving $\lpo^\diamond$.

\subsection{BSS-machines over the reals}

So far we have only obtained an upper bound for the power of BSS-machines in the Weihrauch lattice. For a lower bound, we require some further representatives of the Weihrauch degree of $\C_\mathbb{N}$. Let $\mathbb{Q}^{\text{e}}_+$ denote the set of non-negative rational numbers understood as a subspace of the represented space $\mathbb{R}$, i.e.~represented by the appropriate post-restriction of $\rho$. In contrast, let $\mathbb{Q}^{\text{d}}_+$ be the discrete space of non-negative rational numbers, represented by $\delta_\mathbb{Q}(0^k10^n10^m1^\omega) = \frac{n}{m+1}$. Some of the following have already been shown in \cite{mylatz,paulymaster}.
\begin{proposition}
The following are Weihrauch equivalent:
\begin{enumerate}
\item $\C_\mathbb{N}$.
\item $\max_\Baire :\subseteq \mathbb{N}^\mathbb{N} \to \mathbb{N}$.
\item $\id_{\mathbb{Q}_+}^{\text{e},\text{d}} : \mathbb{Q}^{\text{e}}_+ \to \mathbb{Q}^{\text{d}}_+$.
\item $\operatorname{Numerator} : \mathbb{Q}^{\text{e}}_+ \to \mathbb{N}$, where $\operatorname{Numerator}(q) = n$ iff $\exists m \in \mathbb{N} \ \operatorname{gcd}(n,m) = 1$ and $|q| = \frac{n}{m}$.
\item $\operatorname{Denominator} : \mathbb{Q}^{\text{e}}_+ \mto \mathbb{N}$, where $\operatorname{Denominator}(q) = m$ iff $\exists n \in \mathbb{N} \ \operatorname{gcd}(n,m) = 1$ and $|q| = \frac{n}{m}$.(\footnote{This map is multivalued, as any positive integer is a valid output on input $0$. Restricting the map to positive inputs does not change the Weihrauch degree.})
\end{enumerate}
\begin{proof}
\begin{description}
\item[$\C_\mathbb{N} \leqW \max_\Baire$] Lemma \ref{lemma:cn}.
\item[$\max_\Baire \leqW \operatorname{Denominator}$]
W.l.o.g., assume that the input $p$ to $\max_\Baire$ is monotone and that $p(0) = 0$. We proceed to compute a non-negative real number $x$ which will happen to be rational (i.e.~we compute $x$ as an element of $\mathbb{Q}^{\text{e}}_+$). Our initial approximation to $x$ is $x_0 = 0$. If $p(n+1) = p(n)$, then $x_{n+1} = x_{n}$. If $p(n+1) > p(n)$, then we search for some $k,l \in \mathbb{N}$ such that $d \left (\frac{2l + 1}{2^{\langle k, p(n+1)\rangle}}, x_n \right ) < 2^{-n-1}$ -- which are guaranteed to exist. Then we set $x_{n+1} = \frac{2l + 1}{2^{\langle k, p(n+1)\rangle}}$. As the range of $p$ is finite, this sequence will stabilize eventually, and by construction, converges quickly to its rational limit.

Applying $\operatorname{Denominator}$ to $x$ will give us some $2^{\langle k, \max_\Baire(p)\rangle} \in \mathbb{N}$. By design of $\langle \ , \ \rangle$, we can extract $\max_\Baire(p)$ from this value.

\item[$\max_\Baire \leqW \operatorname{Numerator}$] Very similar to the reduction $\max_\Baire \leqW \operatorname{Denominator}$. On monotone input $p$ with $p(0) = 0$, we start with the approximation $x_0 = 1$. If $p(n+1) = p(n)$, then $x_{n+1} = x_n$. Otherwise, we search for $k, l \in \mathbb{N}$ such that $d \left (\frac{2^{\langle p(n+1), k\rangle}}{2l + 1}, x_n \right ) < 2^{-n-1}$, and set $x_{n+1} := \frac{2^{\langle p(n+1), k\rangle}}{2l + 1}$ for these values.  As the range of $p$ is finite, this sequence will stabilize eventually, and by construction, converges quickly to its rational limit $x$.

 Applying $\operatorname{Numerator}$ to $x$ will give us some $2^{\langle k, \max_\Baire(p)\rangle} \in \mathbb{N}$. By design of $\langle \ , \ \rangle$, we can extract $\max_\Baire(p)$ from this value.

 \item[$\operatorname{Denominator} \leqW \id_{\mathbb{Q}_+}^{\text{e},\text{d}}$] Trivial.

 \item[$\operatorname{Numerator} \leqW \id_{\mathbb{Q}_+}^{\text{e},\text{d}}$] Trivial.

 \item[$\id_{\mathbb{Q}_+}^{\text{e},\text{d}} \leqW \C_\mathbb{N}$] Given a non-negative real number $x \in \mathbb{R}_+$, we can compute the closed set $\{\langle n, m\rangle \in \mathbb{N} \mid mx = n, m \neq 0\} \in \mathcal{A}(\mathbb{N})$. If $x \in \mathbb{Q}^{\text{e}}_+$ this set is non-empty, so we can use $\C_\mathbb{N}$ to extract an element, which allows us to obtain $x \in \mathbb{Q}^{\text{d}}_+$.
\end{description}
\end{proof}
\end{proposition}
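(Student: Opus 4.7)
The plan is to prove the equivalences by establishing a chain of reductions
$\C_\mathbb{N} \leqW \max_\Baire \leqW \operatorname{Denominator} \leqW \id_{\mathbb{Q}_+}^{\text{e},\text{d}} \leqW \C_\mathbb{N}$,
a parallel chain through $\operatorname{Numerator}$, and then close up.
The first reduction $\C_\mathbb{N} \leqW \max_\Baire$ is already included in Lemma \ref{lemma:cn}, so it can be quoted.
The last two reductions in the chain are where the actual content lies, along with the trivial reductions $\operatorname{Denominator}, \operatorname{Numerator} \leqW \id_{\mathbb{Q}_+}^{\text{e},\text{d}}$, which hold because once we have a rational $q$ in the discrete space $\mathbb{Q}^{\text{d}}_+$ we can compute any of its invariants (numerator and denominator in lowest terms) by a finite search.

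The key idea for $\max_\Baire \leqW \operatorname{Denominator}$ is to construct, from a monotone input $p \in \Baire$ with $p(0)=0$, a $\rho$-name of a real number $x$ that is secretly rational and whose denominator in lowest terms encodes $\max_\Baire(p)$. The natural building blocks are dyadic rationals of the form $(2l+1)/2^{\langle k,p(n+1)\rangle}$; since the numerator is odd, such a fraction is already in lowest terms and its denominator pinpoints $p(n+1)$ via the pairing function. I would define $x_0 = 0$ and, in the inductive step, keep $x_{n+1} = x_n$ whenever $p(n+1) = p(n)$, and otherwise choose $k,l$ so that $(2l+1)/2^{\langle k,p(n+1)\rangle}$ lies within $2^{-n-1}$ of $x_n$ and set $x_{n+1}$ to this value. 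Since $p$ is bounded and monotone, it stabilises at some $N = \max_\Baire(p)$, so the $x_n$ are ultimately constant and the sequence is a fast-Cauchy name of a rational with denominator $2^{\langle k,N\rangle}$ in lowest terms; the outer reduction applies $\operatorname{Denominator}$ and extracts $N$ from the pairing. The reduction $\max_\Baire \leqW \operatorname{Numerator}$ is the same construction with the roles of numerator and denominator exchanged, using fractions $2^{\langle p(n+1),k\rangle}/(2l+1)$.

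For the closing reduction $\id_{\mathbb{Q}_+}^{\text{e},\text{d}} \leqW \C_\mathbb{N}$, given $x \in \mathbb{Q}^{\text{e}}_+$ I would compute the set $A_x := \{\langle n,m\rangle \in \mathbb{N} \mid m \neq 0 \text{ and } mx = n\}$ as an element of $\mathcal{A}(\mathbb{N})$. This is a $\mathcal{A}(\mathbb{N})$-name because $mx = n$ is a $\Pi^0_1$ condition on a real given by $\rho$. The set $A_x$ is non-empty precisely because $x \in \mathbb{Q}^{\text{e}}_+$, so $\C_\mathbb{N}$ returns some pair $\langle n,m\rangle$ from which the discrete rational $n/m$ can be read off.

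The only real obstacle lies in the two central reductions to $\operatorname{Denominator}$ and $\operatorname{Numerator}$: one must be careful that the constructed sequence of dyadic rationals is simultaneously (i) a valid fast-Cauchy name (rate $2^{-n}$), (ii) eventually constant once $p$ stabilises, and (iii) lands on a fraction whose lowest-terms form genuinely encodes the maximum. The choice of odd numerator $2l+1$ (respectively odd denominator) is precisely what guarantees the lowest-terms form without an additional cancellation step, and the pairing $\langle k,\cdot\rangle$ absorbs the unknown magnitude of the approximation parameter $l$. Everything else is routine.
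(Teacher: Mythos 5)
Your proposal takes essentially the same approach as the paper: the same chain through $\max_\Baire$, $\operatorname{Denominator}$, $\operatorname{Numerator}$ and $\id_{\mathbb{Q}_+}^{\text{e},\text{d}}$, the same dyadic-rational construction (odd numerator over a power-of-two denominator whose exponent encodes $p(n+1)$ via the pairing, and symmetrically for $\operatorname{Numerator}$), and the identical closed-set argument for $\id_{\mathbb{Q}_+}^{\text{e},\text{d}} \leqW \C_\mathbb{N}$. One shared wrinkle worth flagging: with $x_0 = 0$ the constructed real is $0$ whenever $p$ is constantly $0$, and since any positive integer is a valid $\operatorname{Denominator}$-output on input $0$, the outer reduction cannot then reliably recover $\max_\Baire(p) = 0$; this is mended by starting from a positive dyadic rational such as $x_0 = 1$, whose lowest-terms denominator is $2^{\langle 0,0\rangle}$, exactly as you (and the paper) already do in the $\operatorname{Numerator}$ case.
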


\begin{proposition}
\label{prop:idqed}
$\id_{\mathbb{Q}_+}^{\text{e},\text{d}}$ is computable by a machine over $(\mathbb{R}, +, =, 1)$.
\begin{proof}
Let $x$ be the input. We can test if $x = x + x$, in which case we know that $x = 0$. If this is not the case we compute for all pairs $n, m \in \mathbb{N}$ with $n,m \geq 1$, the numbers $mx$ and $n$ by repeated addition and test them for equality. If they are equal, then we have found a valid output in the pair $(n,m)$, if not, we consider the next pair.
\end{proof}
\end{proposition}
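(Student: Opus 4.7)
The plan is to reduce the problem to a brute-force search that only ever needs additions, the constant $1$, and exact equality tests on $\mathbb{R}$. Given the real input $x$, I first want to dispatch the case $x=0$ separately, because the discrete representation $\delta_\mathbb{Q}$ requires a specific shape for zero. Using only the allowed operations I can form $x+x$ and ask whether $x = x+x$; on $\mathbb{R}$ this holds precisely when $x=0$, at which point the machine writes out the name of $0 \in \mathbb{Q}_+^{\text{d}}$ and halts.

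For $x \neq 0$, I would dovetail over all pairs $(n,m) \in \mathbb{N}_{\geq 1}^2$ using the index registers to drive a standard enumeration. For each candidate pair, the machine constructs $n$ in one register by adding the constant $1$ to itself $n$ times, and $m\cdot x$ in another register by adding $x$ to itself $m$ times; then it fires a single equality test $mx = n$. The first pair for which this test succeeds witnesses $x = n/m$, and the machine then emits the corresponding name $0^k 1 0^n 1 0^{m-1} 1^\omega$ of $x$ in $\mathbb{Q}_+^{\text{d}}$ and halts.

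Correctness rests on two observations: every positive rational admits at least one pair $(n,m)$ of positive integers with $x = n/m$, so the search terminates; and conversely the equality $mx = n$ over the reals can only hold when $x$ is exactly this rational, so the first match is genuine. The main (and mild) obstacle is that multiplication is not in the signature, but this is bypassed because $n$ and $m\cdot x$ are integer multiples and so computable by iterated addition; beyond that, the argument is entirely bookkeeping about pairing functions on index registers and about matching the output format of $\delta_\mathbb{Q}$, both of which are routine inside the generalized register machine framework. The genuine content of the proposition is simply that an exact $=$-test on $\mathbb{R}$ is already enough to promote a Cauchy-style name of a rational into its discrete name.
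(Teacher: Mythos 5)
Your proof is correct and follows essentially the same approach as the paper's: first detecting $x=0$ via the test $x = x+x$, then dovetailing over pairs $(n,m)$ of positive integers, building $n$ and $mx$ by repeated addition (using the constant $1$), and testing $mx = n$ for equality. The extra detail you give about matching the output format of $\delta_\mathbb{Q}$ is just the routine bookkeeping the paper leaves implicit.
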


\begin{corollary}
\label{corr:bss}
For every algebraic computation model over every structure expanding ${(\mathbb{R},+,=,1)}$ not exceeding the computable functions and $\{=,<\}$ as tests, we find that:
\begin{itemize}
\item Every partial function computable in that model is Weihrauch reducible to $\C_\mathbb{N}$.
\item There is a (partial) function computable in that model that is Weihrauch equivalent to $\C_\mathbb{N}$.
\end{itemize}
\end{corollary}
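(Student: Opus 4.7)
The plan is to package the two immediately preceding results. For the first bullet, I would invoke the observation made right after Proposition \ref{Proposition: C_N equiv_W lpo-diamond}: any generalized register machine over $\mathbb{R}$ using partial computable real functions and $=$-tests (and by the same token $<$-tests) is a special case of a $\lpo^\diamond$-machine, because both $=$ and $<$ on $\mathbb{R}$ are representatives of the Weihrauch degree $\lpo$. Any algebraic computation model as in the statement is in turn a restriction of such a machine, so any function it computes is Weihrauch reducible to $\lpo^\diamond$, and hence, by Proposition \ref{Proposition: C_N equiv_W lpo-diamond}, to $\C_\mathbb{N}$.

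For the second bullet, I would transport Proposition \ref{prop:idqed} upwards. The program constructed there uses only $+$, the constant $1$, and the test $=$, so it remains a valid program in every structure expanding $(\mathbb{R},+,=,1)$ whose operations are computable and whose tests lie in $\{=,<\}$. It computes $\id_{\mathbb{Q}_+}^{\text{e},\text{d}}$, which by the proposition preceding \ref{prop:idqed} is Weihrauch equivalent to $\C_\mathbb{N}$. Thus $\id_{\mathbb{Q}_+}^{\text{e},\text{d}}$ witnesses the lower bound uniformly in every structure considered.

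The only point requiring genuine care is the translation of $=$- and $<$-tests into $\lpo$-queries within the register-machine formalism: given two register contents $x,y \in \mathbb{R}$, one computes from their $\rho$-names a Cantor-space sequence whose vanishing encodes $x=y$ (respectively $x<y$), then applies $\lpo$ to it and branches on the result. Since the original tests are total on $\mathbb{R}$, the resulting $\lpo$-calls are well-defined on every input, so the simulation introduces no spurious restriction on the domain of the computed function. Everything else follows immediately from the already-established material, so no new argument is needed.
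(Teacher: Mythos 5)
Your proof is correct and follows the same route the paper intends: the upper bound is exactly the remark following the equivalence theorem (BSS-machines are restricted $\lpo^\diamond$-machines, and $\lpo^\diamond \equivW \C_\mathbb{N}$ by Proposition~\ref{Proposition: C_N equiv_W lpo-diamond}), and the lower bound is Proposition~\ref{prop:idqed} combined with the Weihrauch equivalence $\id_{\mathbb{Q}_+}^{\text{e},\text{d}} \equivW \C_\mathbb{N}$. One tiny slip in your parenthetical: for the $<$-test the Cantor sequence's \emph{vanishing} must encode $x \geq y$ (since $x<y$ is only $\Sigma^0_1$), with the branch inverted accordingly, but this does not affect the argument.
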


Thus, the computational power of algebraic computation models is, from the perspective of topological computation models, characterized by the Weihrauch degree of $\C_\mathbb{N}$.

\subsection{BSS-machines which compute total functions}

It should be pointed out though that the characterisation in Corollary \ref{corr:bss} relies crucially on considering partial functions, too. For total functions, we obtain instead a family of upper bounds as follows:

\begin{definition}[\cite{debrecht5}]
Let $R \subseteq \mathbb{N} \times \mathbb{N}$ be a well-founded partial order. We define $\mathfrak{L}_R : \subseteq (\mathbb{N} \times \Baire)^\mathbb{N} \to \Baire$ as follows: A sequence $(n_i,x_i)_{i \in \mathbb{N}}$ where $n_i \in \mathbb{N}$ and $x_i \in \Baire$ is in the domain of $\mathfrak{L}_R$, if $x_i \neq x_{i+1} \Rightarrow (n_{i+1}, n_{i}) \in R$ and $x_i = x_{i+1} \Rightarrow n_i = n_{i+1}$. As $R$ is well-founded, these conditions imply that the sequence stabilizes eventually, and $\mathfrak{L}_R$ returns the corresponding limit $x_\infty$.
\end{definition}

\begin{theorem}[Computable Hausdorff-Kuratowski theorem \cite{pauly-ordinals}]
For a total function $f : \mathbb{R}^* \to \mathbb{R}^*$ the following are equivalent:
\begin{enumerate}
\item $f \leqW \C_\mathbb{N}$.
\item There exists some computable $R$ such that $f \leqW \mathfrak{L}_R$.
\end{enumerate}
\end{theorem}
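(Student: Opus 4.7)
First I would handle the easier implication $(2)\Rightarrow(1)$, which I claim holds uniformly in $R$ via $\mathfrak{L}_R \leqW \C_\mathbb{N}$. Given an input $(n_i,x_i)_{i\in\mathbb{N}}$ in $\dom(\mathfrak{L}_R)$, well-foundedness of $R$ forces the sequence $(x_i)$ to stabilize at some finite stage. The complement of
\[ S := \{ k\in\mathbb{N} \mid \forall j\ge k,\ x_j = x_k \} \]
is semi-decidable from the input (enumerate $k$ as soon as we witness some $j\ge k$ with $x_j\ne x_k$), so $S\in\mathcal{A}(\mathbb{N})$ and $S\ne\emptyset$. Calling $\C_\mathbb{N}$ on $S$ returns an index $k$, and we output $x_k = x_\infty$. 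Composing with the assumed $f\leqW \mathfrak{L}_R$ yields $f\leqW\C_\mathbb{N}$. (An equivalent way of seeing this is to observe that $\mathfrak{L}_R$ is piecewise computable, stratified by stabilization time, and appeal to the characterization of \cite{paulydebrecht} cited after Lemma~\ref{lemma:cn}.)

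The harder direction $(1)\Rightarrow(2)$ is where the content of the theorem lies. The plan is to exploit the same characterization in the opposite direction: since $f$ is total and $f\leqW \C_\mathbb{N}$, $f$ is piecewise computable, i.e.\ there is a uniformly computable partition $\mathbb{R}^* = \bigsqcup_{m\in\mathbb{N}} A_m$ into $\Delta^0_2$ sets together with computable functions $g_m$ agreeing with $f$ on $A_m$. I would then apply an effective Hausdorff--Kuratowski decomposition to each $A_m$, obtaining a uniformly computable well-founded tree $T_m$ on $\mathbb{N}$ whose nodes witness a representation of $A_m$ as a transfinite alternating difference of $\Sigma^0_1$ sets; glue the $T_m$'s into a single computable well-founded relation $R\subseteq\mathbb{N}\times\mathbb{N}$ by a disjoint sum construction. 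The reduction $f\leqW\mathfrak{L}_R$ is then produced by running all approximations in parallel: at stage $i$, maintain a current candidate index $m_i$ for the piece containing the input, set $x_i := g_{m_i}(\mathrm{input})$, and let $n_i$ code both $m_i$ and the current node of $T_{m_i}$. Whenever newly read input forces us to abandon $m_i$ or descend in $T_{m_i}$, both of these correspond by construction to a strict descent in $R$; the domain condition of $\mathfrak{L}_R$ is thus maintained, and well-foundedness guarantees stabilization to $f(\mathrm{input})$.

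The hard part will be the effective Hausdorff--Kuratowski step, i.e.\ uniformly extracting computable well-founded ranks from the pieces $A_m$ and verifying that the rank drops at every revision triggered by the input. The classical Hausdorff--Kuratowski theorem needs to be upgraded to a uniformly effective statement, and one has to cope with the fact that $\mathbb{R}^*$ is not zero-dimensional, so that the ranks must be interpreted along realizers in $\Baire$ and then transported back. Finally, one should check that the construction of $R$ from the trees $T_m$ preserves computability and well-foundedness; I would do this by replacing each $T_m$ with an order-isomorphic copy on a disjoint interval of $\mathbb{N}$, so that the only new pairs in $R$ are those internal to some $T_m$. Modulo these technicalities, which are essentially the content of \cite{pauly-ordinals}, the theorem follows.
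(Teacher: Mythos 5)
The paper does not prove this statement: it imports it verbatim from \cite{pauly-ordinals} and moves on, so there is no ``paper's own proof'' to compare your attempt against. Evaluating your sketch on its own terms, then:

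Your (2)~$\Rightarrow$~(1) direction is correct and essentially complete. The complement of $S$ is semi-decidable because inequality in $\Baire$ is semi-decidable, $S$ is non-empty by well-foundedness of $R$, and $\C_\mathbb{N}$ on $S$ returns the index of a stabilized tail. This is the expected argument.

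Your (1)~$\Rightarrow$~(2) direction is a plausible outline, and you correctly locate both the entry point (the characterization of $\leqW \C_\mathbb{N}$ via piecewise computability / effective $\Delta^0_2$-measurability from \cite{paulydebrecht}) and the genuine technical core (a uniformly effective Hausdorff--Kuratowski decomposition, with the complication that $\mathbb{R}^*$ is not zero-dimensional so one must work along realizers). But you explicitly defer that core to \cite{pauly-ordinals}, so the proposal does not actually \emph{prove} the nontrivial implication; it reverse-engineers the shape of the argument and relies on the cited reference to fill it in. That is a fair reconstruction, not a proof.

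One concrete detail in your reduction needs repair. The domain condition of $\mathfrak{L}_R$ is two-sided: besides $x_i \neq x_{i+1} \Rightarrow (n_{i+1},n_i) \in R$, it also requires $x_i = x_{i+1} \Rightarrow n_i = n_{i+1}$. Your construction updates $n_i$ whenever the candidate piece $m_i$ (or the tree node) changes, but if it happens that $g_{m_i}(\mathrm{input}) = g_{m_{i+1}}(\mathrm{input})$ as elements of $\Baire$, you would then output $x_i = x_{i+1}$ with $n_i \neq n_{i+1}$, violating the domain. You cannot semi-decide this coincidence, so you cannot simply suppress the update. The standard fix is to fold the index into the output, e.g.\ set $x_i := \langle 0^{n_i}1^\omega, g_{m_i}(\mathrm{input})\rangle$ so that any change of $n_i$ forces a change of $x_i$, and let the outer reduction $K$ project away the first component afterwards. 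With that adjustment the domain bookkeeping goes through; the remaining work is the effective Hausdorff--Kuratowski step you have flagged but not carried out.
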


Let $R \subseteq \mathbb{N} \times \mathbb{N}$ and $P \subseteq \mathbb{N} \times \mathbb{N}$ be two well-founded partial orders such that there exists an order-preserving map $f$ from $R$ to $P$, i.e.~ some $f : \mathbb{N} \to \mathbb{N}$ such that if $(n,m) \in R$, then $(f(n),f(m)) \in P$. It follows that $\mathfrak{L}_R \leqW^c \mathfrak{L}_P$. Conversely, results from \cite{hertling,debrecht5} show that this implication indeed reverses. Consider well-founded partial orders up to the equivalence notion induced by the existence of order-preserving maps in both directions is one construction of the countable ordinals -- and as shown in \cite{pauly-ordinals}, in can indeed be seen as a canonical one. Thus, for any countable ordinal $\alpha$ we can associate a continuous Weihrauch degree $\mathfrak{L}_\alpha$ as the degree of $\mathfrak{L}_R$ for any/every well-founded partial order $R$ with rank $\alpha$.

\begin{proposition}[\cite{debrecht5}]
For countable ordinals $\alpha < \beta$ we have $\mathfrak{L}_\alpha \leW^c \mathfrak{L}_\beta \leW^c \C_\mathbb{N}$.
\end{proposition}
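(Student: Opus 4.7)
The plan is to treat the two strict reductions separately, using the order-preserving-map characterisation stated just above the proposition: $\mathfrak{L}_R \leqW^c \mathfrak{L}_P$ holds if and only if there is an order-preserving map $R \to P$. Together with the ordinal-theoretic interpretation of well-founded orders in \cite{pauly-ordinals}, this reduces the first inequality to a combinatorial statement about ordinals, while the upper bound by $\C_\mathbb{N}$ will be obtained by a direct computable reduction and its strictness by a short bootstrap.

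For $\mathfrak{L}_\alpha \leW^c \mathfrak{L}_\beta$, fix representatives $R, P \subseteq \mathbb{N} \times \mathbb{N}$ of ranks $\alpha$ and $\beta$. Since $\alpha < \beta$ in the ordinal sense, the results of \cite{pauly-ordinals} supply an order-preserving map $R \to P$, yielding $\mathfrak{L}_\alpha \leqW^c \mathfrak{L}_\beta$. Strictness is the contrapositive: any order-preserving map $P \to R$ would place $P$ in an equivalence class of rank at most $\alpha$, contradicting rank $\beta > \alpha$.

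For the upper bound $\mathfrak{L}_\beta \leqW^c \C_\mathbb{N}$, I describe a computable reduction from $\mathfrak{L}_R$ for any fixed well-founded $R$ of rank $\beta$. Given input $(n_i, x_i)_{i \in \mathbb{N}} \in \dom(\mathfrak{L}_R)$, the set $S := \{ N \in \mathbb{N} \mid \forall i \geq N,\ x_i = x_N \}$ is a computable element of $\mathcal{A}(\mathbb{N})$, since the complementary condition ``$\exists i \geq N$ with $x_i \neq x_N$'' is semidecidable (inequality in $\Baire$ being so), uniformly in $N$ and in the input. Well-foundedness of $R$ forces eventual stabilisation of the $x_i$, so $S$ is nonempty. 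Applying $\C_\mathbb{N}$ picks some $N \in S$, and the output $x_N$ equals $\mathfrak{L}_R((n_i,x_i)_i)$. Strictness follows by a short diagonalisation: if $\C_\mathbb{N} \leqW^c \mathfrak{L}_\beta$, pick any countable $\gamma > \beta$; the reduction just given applied to a rank-$\gamma$ order yields $\mathfrak{L}_\gamma \leqW \C_\mathbb{N}$, and composing produces $\mathfrak{L}_\gamma \leqW^c \mathfrak{L}_\beta$, contradicting the strictness $\mathfrak{L}_\beta \leW^c \mathfrak{L}_\gamma$ established in the first part.

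The only nontrivial ingredient is the ordinal-theoretic content underpinning the first reduction, namely that rank really is a complete invariant for mutual order-preserving maps and is ordered exactly like the countable ordinals; this is the contribution of \cite{pauly-ordinals} and can simply be cited. The remaining steps — the closure of $S$ under negative information, and the bootstrap for strictness of the upper bound — are entirely routine.
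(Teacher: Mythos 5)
The paper gives no proof of this proposition: it is stated as a citation to \cite{debrecht5}, with the surrounding text only sketching the order-preserving-map characterisation of $\leqW^c$ between the $\mathfrak{L}_R$'s. So there is no internal argument to compare your attempt against, and I will instead assess it on its own.

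Your proof is correct and self-contained modulo the two facts the paper itself only cites, namely that $\mathfrak{L}_R \leqW^c \mathfrak{L}_P$ is equivalent to the existence of an order-preserving map $R \to P$ (\cite{hertling,debrecht5}) and that the resulting equivalence classes are a construction of the countable ordinals (\cite{pauly-ordinals}). The reduction $\mathfrak{L}_R \leqW \C_\mathbb{N}$ via the stabilisation set $S = \{N \mid \forall i \geq N,\, x_i = x_N\}$ is clean and correct: the complement is uniformly semidecidable since inequality of Baire points is semidecidable, well-foundedness of $R$ guarantees $S \neq \emptyset$, and the outer reduction just copies $x_N$ once $N$ is known. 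The strictness bootstrap (if $\C_\mathbb{N} \leqW^c \mathfrak{L}_\beta$ then $\mathfrak{L}_{\beta+1} \leqW^c \C_\mathbb{N} \leqW^c \mathfrak{L}_\beta$, contradicting the first inequality) is a tidy way to avoid a separate separation argument. One small point of imprecision: as phrased, ``fix representatives $R,P$ of ranks $\alpha,\beta$ \dots\ [pauly-ordinals] supplies an order-preserving map $R \to P$'' is not literally true for arbitrary representatives (e.g.\ there is no order-preserving map from the ordinal $\omega$ into a disjoint union of finite chains of unbounded length, though both have rank $\omega$). What you actually want is to take $P$ to be the ordinal $\beta$ itself as a linear order, compose the rank function $R \to \alpha$ with the inclusion $\alpha \hookrightarrow \beta$, and then appeal to the representative-independence of the degree $\mathfrak{L}_\beta$; that is exactly the content of the \cite{pauly-ordinals} reference, so the argument is sound, but the phrasing should make the choice of representative explicit.
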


\begin{corollary}
\label{Corollary: total BSS upper bound}
Let $f : \mathbb{R}^* \to \mathbb{R}^*$ be a total BSS-computable function. Then there is some countable ordinal $\alpha$ with $f \leqW^c \mathfrak{L}_\alpha$.
\end{corollary}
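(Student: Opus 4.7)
The plan is to chain together three results already in the excerpt, so the proof is essentially bookkeeping rather than requiring new ideas.

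First, since $f : \mathbb{R}^* \to \mathbb{R}^*$ is total and BSS-computable, Corollary \ref{corr:bss} applies: every function computable in an algebraic model over a structure expanding $(\mathbb{R},+,=,1)$ by computable functions and tests from $\{=,<\}$ is Weihrauch reducible to $\C_\mathbb{N}$. Hence $f \leqW \C_\mathbb{N}$. Totality is not needed here (partiality would be fine), but it is needed in the next step.

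Second, invoke the Computable Hausdorff--Kuratowski theorem stated just above. Because $f$ is total with codomain $\mathbb{R}^*$ and satisfies $f \leqW \C_\mathbb{N}$, the theorem yields some computable well-founded partial order $R \subseteq \mathbb{N} \times \mathbb{N}$ such that $f \leqW \mathfrak{L}_R$. This is the one place where the totality hypothesis is essential, so this is the conceptually crucial step (even though the work is done by the cited theorem).

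Third, let $\alpha$ be the ordinal rank of $R$. By the discussion preceding the proposition, ranks classify well-founded partial orders up to the equivalence induced by mutual existence of order-preserving maps, and this equivalence agrees with continuous Weihrauch equivalence of the associated $\mathfrak{L}_R$. Consequently $\mathfrak{L}_R \equivW^c \mathfrak{L}_\alpha$. Since $\leqW$ refines $\leqW^c$, we may compose to conclude
\[
f \leqW \mathfrak{L}_R \equivW^c \mathfrak{L}_\alpha,
\]
which gives $f \leqW^c \mathfrak{L}_\alpha$ as required. There is no real obstacle; the statement is a direct synthesis of Corollary \ref{corr:bss}, the Computable Hausdorff--Kuratowski theorem, and the definition of $\mathfrak{L}_\alpha$ for a countable ordinal $\alpha$.
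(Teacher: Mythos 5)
Your proposal is correct and matches the intended (implicit) argument: the paper states this as a corollary without proof because it follows by chaining Corollary \ref{corr:bss}, the Computable Hausdorff--Kuratowski theorem, and the definition of $\mathfrak{L}_\alpha$ as the continuous Weihrauch degree of $\mathfrak{L}_R$ for a well-founded partial order $R \subseteq \mathbb{N} \times \mathbb{N}$ of rank $\alpha$ (which is countable since $R$ lives on $\mathbb{N}$), exactly as you describe.
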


\begin{proposition}
\label{Proposition: total BSS lower bound}
For each countable successor ordinal $\alpha + 1$ there is some BSS-computable total function $f_{\alpha + 1} : \uint \to \mathbb{N}$ (using constants) with $f_{\alpha + 1} \geqW^c \mathfrak{L}_{\alpha + 1}$.
\begin{proof}
We proceed by induction over $\alpha + 1$. The claim for $\alpha = 0$ is witnessed by $\lpo$. It suffices to show that if the claim is true (uniformly) for $(\alpha_i)_{i \in \mathbb{N}}$ with $\alpha_i \leq \alpha_{i+1}$, then it is true for $\left ( \sup_{i \in \mathbb{N}} \alpha_i \right ) + 1 =: \alpha + 1$.

We define $f_{\alpha + 1} : \uint \to \mathbb{N}$ piecewise. If $x \in [\frac{1}{2i+1}, \frac{1}{2i}]$ for some $i \in \mathbb{N}$, then $f_{\alpha + 1}(x) := \langle i + 1, f_{\alpha_i + 1}(2i(2i+1)x - 2i)\rangle$. Otherwise, $f_{\alpha + 1}(x) := \langle 0, 0\rangle$. If the $f_{\alpha_i + 1}$ are either uniformly (in $i$) BSS-computable, or, alternatively, coded into a real parameter, this clearly yields a BSS-computable function.

{\bf Claim: $\mathfrak{L}_{\alpha + 1} \leqW^c f_{\alpha + 1}$}

We start writing a name for $x := 0$ while reading the input to $\mathfrak{L}_{\alpha + 1}$. If the first parameter ever changes, then it will move to some $n_i$ such that the lower cone of $n_i$ has some height $\alpha' + 1 \leq \alpha$. In particular, there must be some $j \in \mathbb{N}$ with $\alpha' + 1 \leq \alpha_j + 1$. The suitable pairs $(i, j)$ can be coded into a single real parameter. There will be some $k \geq j$ such that $[\frac{1}{2k+1},\frac{1}{2k}]$ is still within the scope of the current approximation to $0$.

The tail of the input to $\mathfrak{L}_{\alpha + 1}$ at the current moment is now also a valid input to $\mathfrak{L}_{\alpha_k + 1}$. Thus, we can continue to use the reduction $\mathfrak{L}_{\alpha_k + 1} \leqW^c f_{\alpha_k + 1}$, with $f_{\alpha_k + 1}$ scaled down into the interval $[\frac{1}{2k+1},\frac{1}{2k}]$.

To interpret the output of $f_{\alpha + 1}$, the only difference is whether it is of the form $\langle 0, 0\rangle$ or $\langle i+1, n\rangle$. In the former case, the input sequence to $\mathfrak{L}_{\alpha + 1}$ is constant, and we just read off the answer. In the latter case, there is at least one change -- so we can search for it, and then split $n := \langle i', n'\rangle$ to see whether there is a further change, and so on.

\end{proof}
\end{proposition}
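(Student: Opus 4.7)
The plan is to construct $f_{\alpha+1}$ by transfinite recursion, using a geometric gluing scheme in which scaled-down copies of the inductive functions are placed into intervals accumulating at a single point, with the point itself acting as the ``no change'' witness.

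For the base case $\alpha + 1 = 1$: the degree $\mathfrak{L}_1$ is essentially $\lpo$, since the admissible sequences change at most once and the output is determined by whether the change ever occurs. Choosing $f_1 \colon \uint \to \mathbb{N}$ by $f_1(x) = 1$ if $x = 0$ and $0$ otherwise gives a BSS-computable function (one equality test) Weihrauch-equivalent to $\lpo \equivW \mathfrak{L}_1$.

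For the inductive step with $\alpha = \sup_i \alpha_i$ and $\alpha_0 \leq \alpha_1 \leq \cdots$ (constant in the pure successor case, a fundamental sequence in the limit case), assume BSS-computable witnesses $f_{\alpha_i+1} \colon \uint \to \mathbb{N}$ for $\mathfrak{L}_{\alpha_i+1}$ are available uniformly in $i$, or coded into a single real constant as the statement permits. Partition $[0,1]$ by the pairwise disjoint closed intervals $I_i = [\tfrac{1}{2i+1}, \tfrac{1}{2i}]$ clustering at $0$, and define
\[
f_{\alpha+1}(x) = \begin{cases} \langle i+1,\, f_{\alpha_i+1}(\sigma_i^{-1}(x)) \rangle & \text{if } x \in I_i, \\ \langle 0, 0 \rangle & \text{otherwise,}\end{cases}
\]
where $\sigma_i \colon \uint \to I_i$ is the obvious affine bijection. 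Order tests locate which (if any) $I_i$ contains $x$, and the scaling $\sigma_i^{-1}$ is $+,\times$-definable, so $f_{\alpha+1}$ is BSS-computable.

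For the reduction $\mathfrak{L}_{\alpha+1} \leqW^c f_{\alpha+1}$, the outer map $H$ reads the input sequence $(n_j, x_j)_{j \in \mathbb{N}}$ and initially writes approximations converging to $0$. Upon the first change, the new first coordinate has rank strictly below $\alpha+1$, hence at most $\alpha_j + 1$ for some $j$; $H$ then chooses $k \geq j$ large enough that the interval $I_k$ is still consistent with everything written so far (possible because only finitely many digits have been committed and $I_k$ shrinks to $\{0\}$), and steers the rest of its output into $I_k$ via the inductive reduction $\mathfrak{L}_{\alpha_k+1} \leqW^c f_{\alpha_k+1}$ applied to the suffix of the sequence (which is itself a valid $\mathfrak{L}_{\alpha_k+1}$-input, since all further ranks lie in the lower cone of the new first coordinate). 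The decoder $K$ inspects the tag of $f_{\alpha+1}$'s output: $\langle 0, 0\rangle$ says the sequence was constant and we return $x_0$; $\langle i+1, n\rangle$ says $H$ committed to $I_i$, and we recursively decode $n$ with the inductive $K$.

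The principal obstacle is keeping $H$ genuinely continuous: the decision to commit to $I_k$ must be irrevocable and based on only finitely many input coordinates. This is handled by choosing $k$ sufficiently large at the moment of commitment so that the already-written prefix of the output is consistent with $I_k$; since the pre-commitment output only approximates $0$, and $I_k \subseteq [-2^{-n}, 2^{-n}]$ for all sufficiently large $k$, such a $k$ always exists. A secondary routine point is coding the witnesses for $(\alpha_i)_i$ and the pairings $(i, j)$ into a single real parameter so that $f_{\alpha+1}$ falls within a uniform BSS machine.
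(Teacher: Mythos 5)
Your proposal is correct and follows essentially the same argument as the paper's: same base case via $\lpo$, same piecewise definition of $f_{\alpha+1}$ on the intervals $[\tfrac{1}{2i+1}, \tfrac{1}{2i}]$ accumulating at $0$, and the same commitment strategy (choosing $k$ large enough so that $I_k$ is consistent with the finite prefix already written) to keep the inner reduction map continuous. The only difference is your slightly more explicit spelling-out of why the delayed commitment preserves continuity, which the paper leaves more terse.
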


\subsection{BSS-machines without order tests}

If we consider total functions, and drop the order test from the signature, the maximum Weihrauch degree reachable is even lower. This is caused by two properties of algebraic sets that are not shared with semi-algebraic sets: every descending chain of algebraic sets eventually stabilises and every proper algebraic subset of an irreducible variety $V$ is nowhere dense. These properties will cause any computation tree of an algorithm which computes a total function to be finite.

\begin{proposition}
\label{Proposition: BSS with equality}
If $f : \mathbb{R}^* \to \mathbb{R}^*$ is BSS-computable over $(\mathbb{R}, +, \times, =)$, then $f \leqW^c \lpo^*$. There is a function $g: \mathbb{R}^* \to \mathbb{R}^*$ that is BSS-computable over $(\mathbb{R}, +, \times, =)$ and satisfies $g \equivW \lpo^*$.
\begin{proof}
Let $f \colon \mathbb{R}^* \to \mathbb{R}^*$ be BSS-computable over $(\mathbb{R}, +, \times, =)$. Fix some BSS-machine computing $f$. For a fixed input dimension $n$, consider the computation tree of that machine. We claim that the tree is finite. It then follows that $f \leqW^c \lpo^*$, for we can bound the number of equality tests we need to simulate the machine in terms of the size of the input tuple alone.

If the computation tree is infinite it has an infinite path by K\H{o}nig's lemma. Each node on the path corresponds to an algebraic set, and the outgoing edge from each node is labelled ``$\in$'' or ``$\notin$'', depending on whether we branch on the condition that the input is in the algebraic set or outside of it. Since the Zariski topology on $\mathbb{R}^n$ is Noetherian, there are only finitely many edges labelled with ``$\in$'' in a non-trivial way. By this we mean the following: if we number the nodes on the path with numbers $0, 1, 2, \dots$ then there exists a number $N \in \mathbb{N}$ such that for any node $v$ on the path which is labelled with a number greater than $N$ and whose outgoing edge on the path is labelled with ``$\in$'', the computation on any input will branch to ``$\in$'' if it reaches this node. Let $V$ be the algebraic set which is defined by the nodes with number smaller than $N$ whose outgoing edge on the path is labelled with ``$\in$''. Let $(V_n)_n$ be the sequence of algebraic sets which correspond to the nodes whose outgoing edge is labelled with ``$\notin$''. Our goal is to show that there exists $x \in \mathbb{R}^n$ with $x \in V$ and $x \notin V_n$ for all $n \in \mathbb{N}$. This means that $x$ passes all tests on the infinite branch (since all tests which do not correspond to $V$ or one of the $V_n$'s are passed automatically), which means that the machine runs forever on input $x$, contradicting the totality of $f$. The set $V$ is a finite union of irreducible algebraic sets, at least one of which is not contained in any of the $V_n$'s. We can hence assume without loss of generality that $V$ is itself an irreducible algebraic variety. The sets $V_n \cap V$ are (potentially empty) proper algebraic subsets of $V$ and thus have dense open complement in the Euclidean topology on $V$. By the Baire category theorem, their countable union has dense open complement in $V$. In particular, this complement contains a point. This finishes the proof.

For the example $g$, consider the function $g : \mathbb{R}^* \to \mathbb{N}$ mapping an input tuple $(x_1, \ldots, x_n)$ to the set $|\{i \in \{1,\ldots,n\} \mid x_i = 0\}|$. This function is obviously BSS-computable over $(\mathbb{R}, +, \times, =)$, and easily seen to be Weihrauch-equivalent to $\lpo^*$.
\end{proof}
\end{proposition}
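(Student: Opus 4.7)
The plan is to show that for each fixed input dimension $n$, the computation tree of a BSS machine over $(\mathbb{R},+,\times,=)$ computing a total function is finite. Once this is established, only finitely many equality tests can possibly occur on any input of length $n$, and each equality test between polynomial expressions in the inputs is Weihrauch-equivalent to one instance of $\lpo$. Since we may simulate the machine by querying $\lpo$ on each potential test outcome (with a uniform bound depending only on $n$), continuous Weihrauch reducibility to $\lpo^*$ follows.

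To show finiteness of the computation tree, I would argue by contradiction: suppose the tree is infinite. Since the branching is finite (at each node we branch on whether a polynomial equation holds), K\"onig's lemma yields an infinite path. Each node on this path corresponds to an algebraic subset of $\mathbb{R}^n$, and the chosen edge labels it ``$\in$'' or ``$\notin$''. The key observation is that the intersection of all ``$\in$'' algebraic sets down the path, taken as a descending chain, must stabilise after some finite stage $N$ by the Noetherianity of the Zariski topology on $\mathbb{R}^n$. Let $V$ be the stabilised algebraic set and let $(V_k)_{k \in \mathbb{N}}$ be the algebraic sets corresponding to ``$\notin$'' edges further down the path. The goal is to produce some $x \in V \setminus \bigcup_k V_k$: such an $x$ satisfies every test along the infinite path and hence causes the machine to run forever, contradicting totality.

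For the existence of $x$, I would decompose $V$ into finitely many irreducible components. At least one such component $W$ is not contained in any single $V_k$ (otherwise $V$ itself would be contained in finitely many of them, eventually contradicting stabilisation of the ``$\in$'' chain). Then each $V_k \cap W$ is a proper Zariski-closed subset of the irreducible variety $W$, hence nowhere dense in $W$ with respect to the Euclidean topology. Since $W$ is a Baire space (closed subspace of $\mathbb{R}^n$), the Baire category theorem supplies a point in $W \setminus \bigcup_k V_k$, completing the contradiction. The main obstacle I expect is bookkeeping to ensure that the ``eventually trivial'' behaviour of the ``$\in$'' edges after stage $N$ really yields a single algebraic set $V$ whose points automatically pass all later ``$\in$'' tests, and that non-triviality of some component is preserved.

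For the witness $g \equivW \lpo^*$, I would simply take $g \colon \mathbb{R}^* \to \mathbb{N}$ defined by $g(x_1,\ldots,x_n) := |\{i \mid x_i = 0\}|$. It is obviously BSS-computable over $(\mathbb{R},+,\times,=)$ by performing one equality test per coordinate. For $\lpo^* \leqW g$: given an instance $(n, p_0,\ldots,p_{n-1}) \in \mathbb{N} \times \Cantor^n$ of $\lpo^*$, encode each $p_i$ as a real number $x_i$ with $x_i = 0$ iff $p_i = 0^\omega$ (using the standard construction witnessing $\lpo \equivW (\mathalpha{=0}\colon \mathbb{R} \to \{0,1\})$), then $g(x_1,\ldots,x_n)$ together with $n$ yields enough information to recover the full tuple of $\lpo$ answers. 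The reduction $g \leqW \lpo^*$ is immediate from the definition of $g$ and the fact that the arity $n$ is part of the input.
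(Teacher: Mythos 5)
Your proposal matches the paper's proof essentially verbatim: a contradiction via K\"onig's lemma, Noetherianity of the Zariski topology to stabilise the descending chain of ``$\in$''-sets, reduction to an irreducible component of the stabilised variety not contained in any single ``$\notin$''-set, and the Baire category theorem to produce a point surviving the entire infinite path, contradicting totality. The witness $g(x_1,\ldots,x_n)=|\{i\mid x_i=0\}|$ and the argument for $g\equivW\lpo^*$ are likewise the same.
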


\section{On the BSS-Halting problems}
Very much in analogy to the (classical) Halting problem and the investigation of the Turing degrees below it (Post's problem), the BSS-Halting problems are easily seen to be undecidable by the corresponding BSS-machines, and there are rich hierarchies to be found below them \cite{zieglermeer,gassner}. Very much unlike the classical setting, there is a natural undecidable set strictly below the Halting problem, namely $\mathbb{Q}$.

Let $\mathbb{H} \subseteq \mathbb{N} \times \mathbb{R}^*$ be the Halting problem for BSS-machines having access to $+$, $=$, and potentially additional computable operations and/or $<$ as test\footnote{Of course, changing the signature changes the set $\mathbb{H}$, but as the proof of Theorem \ref{theo:halting} is independent of these details, they do not change the Weihrauch degree of $\chi_\mathbb{H}$.}. This means that $(n, x_0,\ldots,x_n) \in \mathbb{H}$ iff $n$ is a G\"odel-number for a BSS-machine that on input $(x_0,\ldots,x_n)$ will eventually halt.

 Let $\chi_\mathbb{Q} : \mathbb{R} \to \{0,1\}$ be the characteristic function of $\mathbb{Q}$ (into the discrete space $\{0,1\}$), and $\chi_\mathbb{H} :  \mathbb{N} \times \mathbb{R}^* \to \{0,1\}$ be the characteristic function of $\mathbb{H}$. Finally, let $\isInfinite : \Cantor \to \{0,1\}$ be defined via $\isInfinite(p) = 1$ iff $|\{n \in \mathbb{N} \mid p(n) = 1\}| = \infty$.

\begin{theorem}
\label{theo:halting}
The following are Weihrauch-equivalent:
\begin{enumerate}
\item $\chi_\mathbb{Q} : \mathbb{R} \to \{0,1\}$
\item $\chi_\mathbb{H} : \mathbb{N} \times \mathbb{R}^* \to \{0,1\}$
\item $\isInfinite : \Cantor \to \{0,1\}$
\end{enumerate}
\begin{proof}
\begin{description}
\item[$\chi_\mathbb{Q} \leqW \chi_\mathbb{H}$]
Note that $\chi_\mathbb{Q} \equivW \chi_{\mathbb{Q}_+}$, where $\mathbb{Q}_+$ denotes the set of non-negative rational numbers. Now, $\chi_{\mathbb{Q}_+} \leqW \chi_\mathbb{H}$ is easily established: we just combine the original input with the program for $\id_{\mathbb{Q}_+}^{\text{e},\text{d}}$ from Proposition \ref{prop:idqed}. This will halt iff the input is rational.
\item[$\chi_\mathbb{H} \leqW \isInfinite$]
Given the G\"odel number of a BSS-machine $M$ and a standard name $p$ of a point $x \in \mathbb{R}^*$, we construct a sequence $(A_n)_n$ of Type-2 algorithms as follows: The $n^{\text{th}}$ algorithm simulates the machine $M$ on input $p$ until it reaches an equality- or order-test. It then tries to show that the result of the test is ``false'' with precision parameter $n$ (cf.~the proof of Proposition \ref{Proposition: C_N equiv_W lpo-diamond}). Otherwise it assumes that the result of the test is ``true'' and continues to simulate $M$ in this manner. If $n < m$, we say that $A_m$ \emph{refutes $A_n$ within $l$ steps} if $A_n$ and $A_m$ take different branches on the equality- or order-tests within $l$ steps of computation. This defines a relation between the numbers $m$, $n$, and $l$ which in general depends on $p$ (and not just on $x$). Note that this relation is decidable relative to $p$.

Now consider the following Type-2 algorithm: Assign a variable $n = 0$. For all pairs $(m,l) \in \mathbb{N}^2$ do the following: Run the machine $A_n$ for $l$ steps. If it doesn't halt within those $l$ steps, write a $1$ on the output tape. Test if $A_m$ refutes $A_n$ within $l$ steps. If so, put $n = m$ and write a $1$ on the output tape. Finally, write a $0$ on the output tape, and continue looping.

We claim that the sequence this algorithm produces contains finitely many $1$s if and only if $M$ halts on input $x$. Assume that the sequence contains finitely many $1$s. Then there exists $n \in \mathbb{N}$ such that $A_n$ halts within a finite number of steps and no $A_m$ refutes $A_n$ in any finite number of steps. But this means that $A_n$ correctly simulates $M$ on input $x$, for if it erroneously decides a test to be ``true'', it will eventually be refuted. Hence, $M$ halts on input $x$ within a finite number of steps. Conversely, if $M$ halts on input $x$, then it makes only finitely many equality- or order-tests before halting. Hence for sufficiently large $n$ the algorithm $A_n$ correctly simulates $M$ on input $x$, and thus is never refuted and halts after finitely many steps. It follows that our algorithm only writes finitely many $1$s on the output tape.

Hence we can decide if $M$ halts on input $x$ by applying $\isInfinite$ to the output of the algorithm.
\item[$\isInfinite \leqW \chi_\mathbb{Q}$] Compute the real number with the decimal expansion $$0.a_10a_200a_3000a_40000\ldots$$ where $a_i$ is the $i$th bit of the input. This number is rational, i.e. has a periodic decimal expansion, if and only if the $a_i$ are eventually $0$.
\end{description}
\end{proof}
\end{theorem}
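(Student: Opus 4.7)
The plan is to prove the equivalence via the cycle $\chi_\mathbb{Q} \leqW \chi_\mathbb{H} \leqW \isInfinite \leqW \chi_\mathbb{Q}$. The first and third reductions are essentially constructions, while the middle one is the conceptual heart of the theorem and the step I expect to be the main obstacle.

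For $\chi_\mathbb{Q} \leqW \chi_\mathbb{H}$, I would directly exploit Proposition \ref{prop:idqed}: the BSS algorithm constructed there halts on input $x$ precisely when $x \in \mathbb{Q}_+$ (the non-negative rationals). I would handle the negative case by first replacing $x$ with $|x|$ using a fixed BSS constant (or by branching on $x = x+x$ and, if needed, on the sign via an addition-equality trick). Wrapping this up gives a uniformly constructible G\"odel number $n(x)$ of a BSS-machine such that $(n(x),x) \in \mathbb{H}$ iff $x \in \mathbb{Q}$, so a single application of $\chi_\mathbb{H}$ yields $\chi_\mathbb{Q}(x)$.

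For $\isInfinite \leqW \chi_\mathbb{Q}$, the strategy is to encode the input sequence $p = (a_i)_{i \in \mathbb{N}}$ into a real number whose decimal expansion grows increasingly sparse, say $r(p) = \sum_{i \geq 1} a_i \cdot 10^{-(1 + 2 + \cdots + i)}$, so that consecutive bits of $p$ are separated by gaps of unbounded length in the expansion of $r(p)$. If only finitely many $a_i$ equal $1$, the expansion is eventually $0$, hence $r(p) \in \mathbb{Q}$; conversely, an expansion with unbounded gaps between nonzero digits cannot be eventually periodic, so if infinitely many $a_i$ equal $1$, then $r(p) \notin \mathbb{Q}$. The real number $r(p)$ is computable from $p$, so a single query to $\chi_\mathbb{Q}$ at $r(p)$ decides $\isInfinite(p)$.

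The main obstacle is $\chi_\mathbb{H} \leqW \isInfinite$. Here I would, given $(n,x)$ and a name $p$ of $x$, construct a family of Type-2 simulations $(A_k)_{k \in \mathbb{N}}$ of the BSS-machine coded by $n$, where $A_k$ speculatively answers each equality/order test negatively but only after it has verified the inequality up to precision $2^{-k}$ from $p$; if within that precision the test looks positive, $A_k$ instead branches as though it were true. A later simulation $A_m$ with $m > k$ can \emph{refute} $A_k$ if, using its finer precision, it takes a different branch at some test within a finite number of steps. Now I would design a single Type-2 procedure that maintains a current guess $k$, dovetails over pairs $(m,\ell)$ checking whether $A_k$ has halted within $\ell$ steps and whether some $A_m$ refutes $A_k$ within $\ell$ steps: it writes a $1$ whenever $A_k$ has not yet halted or whenever it updates $k := m$ upon a refutation, and otherwise writes a $0$. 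The key verification is the two-way claim: if the machine halts on $x$, then for some large enough $k$ the simulation $A_k$ agrees with the true BSS-run throughout (since only finitely many tests are performed, each with a fixed answer that will eventually be detectable at sufficient precision), and so $k$ stabilises and only finitely many $1$s are written; conversely, if the value of $k$ ever stabilises and $A_k$ halts, then $A_k$ is never refuted, which forces its simulated run to be genuinely correct, so the machine halts on $x$. Applying $\isInfinite$ (composed with negation, which is computable $\{0,1\} \to \{0,1\}$) to the output stream then decides $\chi_\mathbb{H}$. The subtlety lies in getting the \emph{refutation} relation genuinely decidable relative to $p$ while ensuring every wrong speculative branch is eventually refuted; once that bookkeeping is set up, the rest of the argument is a compactness-style observation about finitely many tests.
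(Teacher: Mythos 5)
Your proposal follows the same three-step cycle $\chi_\mathbb{Q} \leqW \chi_\mathbb{H} \leqW \isInfinite \leqW \chi_\mathbb{Q}$ as the paper, and the ideas match in each step: the middle reduction (which you rightly identify as the conceptual heart) is precisely the paper's construction of a family of simulators $A_n$ answering tests negatively only when a witness is found at precision $n$, a refutation relation decidable relative to the name $p$, a meta-algorithm that dovetails over $(m,\ell)$ and emits $1$s while updating its current guess, and a verification that finitely many $1$s are written iff the machine halts; the third reduction is the same unbounded-gap decimal encoding. The only small wrinkle is in the first reduction: the paper simply computes $\lvert x\rvert$ on the Type-2 side (so $\chi_\mathbb{Q}(x)=\chi_{\mathbb{Q}_+}(\lvert x\rvert)$) and then pairs with the fixed machine from Proposition~\ref{prop:idqed}, whereas your suggestion to compute $\lvert x\rvert$ ``using a fixed BSS constant'' or via an ``addition-equality trick'' does not work as stated over the signature $(\mathbb{R},+,=,1)$ (there is no subtraction or order test available); the easy fix, if you insist on doing it machine-side, is to modify the machine to search over tests of the form $mx+n=0$ as well as $mx=n$, but the paper's realizer-side $\lvert x\rvert$ is cleaner. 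Apart from that detail, this is essentially the paper's own proof.
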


This shows that the role of (local) cardinality for BSS-reducibility demonstrated in \cite{russellmiller} depends on the set of operations available in the reduction. Further, note that $\isInfinite \nleqW^c \C_\mathbb{N}$ is easily seen: While $\C_\mathbb{N}$ is $\Delta^0_2$-measurable (cf.~e.g.~\cite{paulydebrecht}), $\isInfinite$ is the characteristic function of a $\Pi^0_2$-complete set, thus not even $\Sigma^0_2$-measurable. By \cite{brattka}, levels of Borel measurability are preserved under Weihrauch reductions. Thus, the inability of BSS-machines to decide their Halting problem already holds for topological reasons (in particular, adding more continuous operations never allows a machine to solve a more restricted Halting problem).

As a side note, we shall point out that $\isInfinite$ also is the degree of deciding whether a Type-2 computable function is well-defined on some particular input:

\begin{proposition}
Let $\isDefined : \mathbb{N} \times \Cantor \to \{0,1\}$ be defined via $\isDefined(n,p) = 1$ iff the $n$-th Type-2 machine produces some $q \in \Cantor$ on input $p$. Then:
\[\isDefined \equivW \isInfinite\]
\begin{proof}
Let $s$ be an index for the Type-2 machine that copies each $1$ from the input to the output, and skips all $0$s. Then $\isInfinite(p) = \isDefined(s,p)$.

For the other direction, note that we can simulate the $n$-th Type-2 machine while writing $0$s. Whenever the machine outputs something, we write a $1$. Applying $\isInfinite$ to the output produces an answer for $\isDefined$.
\end{proof}
\end{proposition}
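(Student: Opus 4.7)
The plan is to establish the two Weihrauch reductions separately, both of which should be direct once we identify the appropriate Type-2 machine as an encoder/decoder. The intuition is that producing an infinite element of $\Cantor$ is operationally the same as writing infinitely many symbols on the output tape, so $\isDefined$ and $\isInfinite$ are really asking the same question modulo a change of presentation.

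For $\isInfinite \leqW \isDefined$, I would fix once and for all an index $s \in \mathbb{N}$ for a Type-2 machine that reads its input $p$ cell by cell and copies each $1$ from $p$ to the output tape while skipping over each $0$ without writing. The $H$ of the reduction sends $p$ to the pair $(s,p)$, and $K$ is the identity on $\{0,1\}$. Then the machine indexed by $s$ produces an element of $\Cantor$ (i.e., writes on the output tape infinitely often) if and only if $p$ contains infinitely many $1$s, which gives $\isInfinite(p) = \isDefined(s,p)$.

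Conversely, for $\isDefined \leqW \isInfinite$, given an input $(n,p)$ I would uniformly simulate the $n$-th Type-2 machine on $p$ while producing on the side a sequence $q \in \Cantor$ built as follows: each time the simulated machine writes a new symbol to its output tape, append a $1$ to $q$; between such events, keep writing $0$s so that $q$ is a well-defined infinite sequence. Then $q$ contains infinitely many $1$s precisely when the simulation produces infinitely many output symbols, i.e., when the $n$-th machine is defined on $p$. So $H(n,p) = q$ and $K$ the identity finish the reduction.

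There is no real obstacle here; the only thing worth being careful about is that the construction of $q$ in the second direction must be total on all inputs in $\mathbb{N} \times \Cantor$, including those for which the $n$-th machine diverges without ever writing to the tape — but the padding with $0$s already handles this case, yielding $q = 0^\omega$ and hence $\isInfinite(q) = 0$, as required.
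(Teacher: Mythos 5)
Your proof is correct and follows exactly the same approach as the paper: the same copying machine $s$ for the direction $\isInfinite \leqW \isDefined$, and the same $0$-padded simulation emitting $1$s on output events for $\isDefined \leqW \isInfinite$. The only difference is that you spell out the reduction witnesses $H$ and $K$ and the edge case of a divergent machine explicitly, which the paper leaves implicit.
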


\section{The Weihrauch degree of sorting and other problems}
In this section we will investigate some Weihrauch degrees, in particular the degree of sorting some $p \in \Cantor$ by order. This degree will turn out to be crucial in characterizing the power of strongly analytic machines later.

Let $\mathfrak{a} = (a_n)_{n \in \mathbb{N}}$ be a computable, infinite, repetition-free and dense sequence in the complete computable metric space $\mathbf{X}$. Let $\textrm{Type}_{\mathfrak{a}} : \mathbf{X} \to \uint$ be defined via $\textrm{Type}_{\mathfrak{a}}(a_n) = 2^{-n}$ and $\textrm{Type}_{\mathfrak{a}}(x) = 0$ if $x \notin \operatorname{range}(\mathfrak{a})$.

Let $\Sort : \Cantor \to \Cantor$ be defined via $\Sort(p) = 0^n1^\omega$ iff $p$ contains exactly $n$ times the bit $0$, and $\Sort(p) = 0^\omega$ iff $p$ contains infinitely many $0$s.

\begin{theorem}
\label{theo:sorttype}
$\Sort \equivW \textrm{Type}_\mathfrak{a}$ for any $\mathfrak{a}$.
\begin{proof}
$\textrm{Type}_\mathfrak{a} \leq_{\textrm{W}} \Sort$

Let $x$ be the input to $\textrm{Type}_\mathfrak{a}$. Start testing if $x = a_0$. While this is possible, write $1$'s on the input to $\Sort$. If $x \neq a_0$ is ever proven, write a single $0$ and proceed to test if $x = a_1$ instead, while again writing $1$'s. Repeat indefinitely. The outer reduction witness is given by computable $K : \subseteq \Cantor \times \Cantor \to \uint$ with $K(p, 0^n1^\omega) = 2^{-n}$ and $K(p, 0^\omega) = 0$.

$\Sort \leqW \textrm{Type}_{\mathfrak{a}}$

As long as we find only $1$'s in the input to $\Sort$, start writing $a_0$ as the input $x$ to $\textrm{Type}_{\mathfrak{a}}$. If a $0$ is read (at time $t$), we have specified $x$ by fixing $x \in B(a_0, 2^{-t})$. As the $(a_n)_{n \in \mathbb{N}}$ are dense, we can compute an injective function $f : \mathbb{N} \to \mathbb{N}$ such that $(a_{f(n)})_{n \in \mathbb{N}}$ ranges over all $a_n$ in $B(a_0, 2^{-t})$ (and we may assume that $f(0) = 0$). We proceed to write approximations to $a_{f(1)}$ while we read $1$'s on the input to $\Sort$. If the next $0$ is read, we again compute a suitable subsequence and switch our approximations to the next element, and so on.

Given the output of $\textrm{Type}_\mathfrak{a}$, we can start by deciding whether it is $1$ or not. If it is $1$, the output of $\Sort$ must be $1^\omega$. If not, then the input to $\Sort$ must contain some $0$, and we can search until we find the first one at position $t$. Knowing $t$ means we can recover the computable function $f$ constructed in the inner reduction witness. Then we test whether the output of $\textrm{Type}_{\mathfrak{a}}$ is $2^{-f(1)}$. If so, the output of $\Sort$ is $01^\omega$. If not, there is a second $0$ somewhere in the input at time $t'$, etc.
\end{proof}
\end{theorem}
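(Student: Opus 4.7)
The plan is to prove both reductions separately, using the semi-decidable inequality test $x \neq a_n$ in the complete computable metric space $\mathbf{X}$ as the bridge between equality-testing on $\mathbf{X}$ and the $0$/$1$ pattern read by $\Sort$.

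For $\textrm{Type}_\mathfrak{a} \leqW \Sort$, on input $x \in \mathbf{X}$ I would have the inner witness enumerate the candidates $a_0, a_1, a_2, \ldots$ and attempt to refute each in turn. While the current candidate $a_n$ has not yet been certified different from $x$, write $1$s on the input tape to $\Sort$; as soon as $x \neq a_n$ is semi-decided, write a single $0$ and move on to $a_{n+1}$. If $x = a_m$, then exactly $a_0, \ldots, a_{m-1}$ are eventually refuted while $a_m$ is never refuted, so the input to $\Sort$ contains exactly $m$ zeros and $\Sort$ returns $0^m 1^\omega$; if $x \notin \operatorname{range}(\mathfrak{a})$, every candidate is refuted and $\Sort$ returns $0^\omega$. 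The outer witness is then the continuous map $0^m 1^\omega \mapsto 2^{-m}$, $0^\omega \mapsto 0$ into $\uint$.

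For $\Sort \leqW \textrm{Type}_\mathfrak{a}$, on input $p \in \Cantor$ I would build $x \in \mathbf{X}$ as a limit of points from the dense sequence, driven by $p$. Begin by writing approximations to $a_0$. Whenever the $k$-th zero of $p$ is read at time $t_k$, we have committed to $x \in B(\text{current target}, 2^{-t_k})$; by density one picks a new target $a_{g(k)}$ strictly inside this ball via a computable injective $g : \mathbb{N} \to \mathbb{N}$ with $g(0) = 0$, and continues approximating it. If $p$ has exactly $n$ zeros, the target stabilises and $x = a_{g(n)}$, giving $\textrm{Type}_\mathfrak{a}(x) = 2^{-g(n)}$. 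If $p$ has infinitely many zeros, the nested balls define a single point $x \in \mathbf{X}$; by shrinking each new ball enough to exclude $a_0, \ldots, a_k$ at step $k$ we arrange that $x \notin \operatorname{range}(\mathfrak{a})$ and hence $\textrm{Type}_\mathfrak{a}(x) = 0$. The outer witness then recomputes the $g(k)$ from $p$ and uses the approximation to $r := \textrm{Type}_\mathfrak{a}(x)$ to search in parallel, for each already-seen zero count $k$, for either another zero in $p$ (triggering another $0$ in the output of $\Sort$) or a semi-decidable certificate that $r$ lies in a small neighbourhood of $2^{-g(k)}$ separated from all other candidate values $2^{-g(i)}$ (triggering the commitment to a tail $1^\omega$).

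The hard part will be the inner witness of the second reduction: the sequence of switching targets must simultaneously converge to a well-defined $x$ (handled by shrinking balls and completeness of $\mathbf{X}$), reach $a_{g(n)}$ when there are $n < \infty$ zeros (automatic from stabilisation), and avoid the entire enumerated sequence when there are infinitely many zeros (requiring progressive exclusion of more $a_m$'s at each switch). The outer witness then works because the set of candidate values $\{2^{-g(i)} : i \in \mathbb{N}\} \cup \{0\}$ is effectively discrete in $\uint$, so $r$ can be disambiguated from its finite-precision approximations using data derivable from $p$ alone.
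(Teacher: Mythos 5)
Your proof is correct and takes essentially the same approach as the paper: refute candidates $a_0, a_1, \ldots$ in turn for $\textrm{Type}_\mathfrak{a} \leqW \Sort$, and for the converse build $x$ by switching targets within shrinking balls driven by the zero positions of $p$, then disambiguate via the isolation of $2^{-g(k)}$ in the value set. Your explicit step of shrinking each ball at stage $k$ to exclude $a_0,\ldots,a_k$, thereby forcing $x\notin\operatorname{range}(\mathfrak{a})$ when $p$ has infinitely many zeros, makes precise a point that the paper's construction (switching to the next element of a freshly computed enumeration of the surviving $a_n$) leaves implicit.
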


Let $\isFinite_{\mathbb{S}} : \Cantor \to \mathbb{S}$ be defined via $\isFinite_{\mathbb{S}}(p) = \top$ iff $p$ contains finitely many $1$s. Let $\isInfinite_{\mathbb{S}} : \Cantor \to \mathbb{S}$ be defined via $\isInfinite_{\mathbb{S}}(p) = \top$ iff $p$ contains infinitely many $1$s. Let $\textrm{TC}_\mathbb{N} : \mathcal{A}(\mathbb{N}) \mto \mathbb{N}$ be the total continuation of $\C_\mathbb{N}$, i.e.~$p \in \textrm{TC}_\mathbb{N}(A)$ iff $p \in A \vee A = \emptyset$.

\begin{proposition}
\label{prop:isinfiniteetc}
\hfill \begin{enumerate}
\item $\C_\mathbb{N} \leW \Sort \leW \widehat{\Sort} \equivW \lim$
\item $\isInfinite_{\mathbb{S}} \leW \textrm{TC}_\mathbb{N}$
\item $\isFinite_{\mathbb{S}} \leW \Sort$
\item $\isInfinite_{\mathbb{S}} \nleqW \Sort$
\item $\isFinite_{\mathbb{S}} \nleqW \textrm{TC}_\mathbb{N}$
\end{enumerate}
\begin{proof}
\hfill \begin{description}
\item[[Claim: $\C_\mathbb{N} \leqW \Sort$]] We show $\max_{\mathcal{O}} \leqW \Sort$ instead and appeal to Lemma \ref{lemma:cn}. Given some set $A \in \mathcal{O}(\mathbb{N})$, we start writing $1$'s. In addition, we make sure that we write exactly as many $0$s as the largest number encountered in $A$ so far. If $A$ is guaranteed to be finite, then the resulting sequence contains only finitely many $0$'s, and after it was sorted, we can read off how many there are -- which returns $\max_\mathcal{O} A$.
\item[[Claim: $\Sort \leqW \lim$]] Given some $p \in \Cantor$, let $q(\langle n, i\rangle) = 0$ if $p_{\leq i}$ contains at least $n$ $0$s, and $q(\langle n, i\rangle) = 1$ otherwise. Then $\Sort(p) = \lim(q)$.

\item[[Claim: $\widehat{\Sort} \leqW \lim$]] From $\Sort \leqW \lim$ with $\widehat{\lim} \equivW \lim$.

\item[[Claim: $\lim \leqW \widehat{\Sort}$]] From $\C_\mathbb{N} \leqW \Sort$ and $\lim \equivW \widehat{\C_\mathbb{N}}$ (see \cite[Example 3.10]{paulybrattka}).

\item[[Claim: $\lim \nleqW \Sort$]] $\Sort$ maps every input to some computable output. $\lim$ maps some computable inputs to non-computable outputs.

\item[[Claim: $\isInfinite_{\mathbb{S}} \leW \textrm{TC}_\mathbb{N}$]] Given $p \in \Cantor$, we can compute the set
\[\{n \in \mathbb{N} \mid p \textnormal{ contains no more than } n \textnormal{ 1 s}\} \in \mathcal{A}(\mathbb{N}).\] Apply $\textrm{TC}_\mathbb{N}$ to obtain some $m \in \mathbb{N}$. Now read $p$ while writing $0$s. If ever the $(m+1)$-st $1$ in $p$ is found, then the input to $\textrm{TC}_\mathbb{N}$ must have been the empty set, i.e.~$p$ must contain infinitely many $1$s. Switching the output to writing $1$s from then on causes the output to be correct.

That the reduction is strict follows from the fact that $\C_\mathbb{N} \leqW \textrm{TC}_\mathbb{N}$ (by definition), whereas $\isInfinite_{\mathbb{S}}$ has a codomain with just $2$ elements.

\item[[Claim: $\isFinite_{\mathbb{S}} \leW \Sort$]] Let $S : \Cantor \to \Cantor$ be the map that swaps $0$s and $1$s. Then $\isFinite_\mathbb{S}(p) = \delta_\mathbb{S} \circ \Sort \circ S$.

As above, that the reduction is strict follows from the fact that $\C_\mathbb{N} \leqW \Sort$ as shown above, whereas $\isFinite_{\mathbb{S}}$ has a codomain with just $2$ elements.

\item[[Claim: $\isInfinite_{\mathbb{S}} \nleqW \Sort$]] As $\Sort \leqW \lim$, we find that $\Sort$ is $\Sigma^0_2$-measurable, and so is every function reducible to it (cf.~\cite{brattka}). But $\operatorname{isInfinite}_\mathbb{S}^{-1}(\{\top\})$ is a $\Pi^0_2$-complete set.

\item[[Claim: $\isFinite_{\mathbb{S}} \nleqW \textrm{TC}_\mathbb{N}$]] Assume that $\isFinite_{\mathbb{S}} \leqW \textrm{TC}_\mathbb{N}$ via some witnesses $K'$, $H$. By composing $K'$ with $\delta_\mathbb{S}$ and $\delta_\mathbb{N}^{-1}$, we obtain computable $K : \subseteq \Cantor \times \mathbb{N} \to \mathbb{S}$. We can assume $K$ to be total, as we can let it write $0$ infinitely many times without changing the value of the output. Note that we can assume that this procedure for making $K$ total preserves single-valuedness since $\mathbb{N}$ is discrete. We can turn $H$ into a computable function $h : \{0,1\}^* \to \mathbb{N}^*$ such that $n \notin \psi(H(p))$ iff $\exists l. \ (n + 1) \in h(p_{\leq l})$, where $p_{\leq l}$ is the prefix of $p$ of length $l$. We will reason with $K$ and $h$ in the following.

    As $\isFinite_{\mathbb{S}}(0^\omega) = \top$, there have to be $l_0, k \in \mathbb{N}$ such that $K(0^{l_0}\{0,1\}^\omega, k) = \top$. As $\isFinite_{\mathbb{S}}(0^{l_0}1^\omega) = \bot$, there has to be some $l_1 \in \mathbb{N}$ such that $h(0^{l_0}1^{l_1}) \ni k + 1$.

    Now we proceed in stages $i \in \mathbb{N}$, each associated with some current prefix $p_i \in \{0,1\}^*$. We start with $i = 0$ and $p_0 = 0^{l_0}1^{l_1}$. In stage $i$, consider $K(p_i0^\omega, i)$. If this is $\bot$, then we must have $i \notin \psi(H(p_i0^\omega))$ (otherwise the reduction could answer $\bot$ wrongly), so there is some $j$ such that $h(p_i0^{j}) \ni i + 1$. We set $p_{i+1} := p_i0^{j}1$ and continue with the next stage. If $K(p_i0^\omega, i) = \top$, then this is already determined by some finite prefix $p_i0^{j}$. Hence, we must have that $i \notin \psi(H(p_i0^{j}1^\omega))$ (otherwise the reduction could answer $\top$ wrongly), so there is some $m$ with $h(p_i0^{j}1^{m}) \ni i+1$. We set $p_{i+1} := p_i0^{j}1^{m}1$ and continue with the next stage.

    We find that $p = \lim_{i \to \infty} p_i \in \Cantor$ is well-defined, contains infinitely many $1$s and satisfies that $\psi(H(p)) = \emptyset$. Thus, a realiser of $\textrm{TC}_\mathbb{N}$ may answer anything on input $H(p)$, in particular it may answer $k$. But $K(p,k) = \top$, so the reduction fails.

\item[[Claim: $\Sort \nleqW \C_\mathbb{N}$]] By combining $\isFinite_{\mathbb{S}} \leW \Sort$ and $\isFinite_{\mathbb{S}} \nleqW \textrm{TC}_\mathbb{N}$, and noting that $\C_\mathbb{N} \leqW \textrm{TC}_\mathbb{N}$ by definition.

\end{description}
\end{proof}
\end{proposition}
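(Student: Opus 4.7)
The plan is to prove claims (1)--(3) by direct reduction constructions, claim (4) by a Borel-measurability argument, and claim (5) by a direct diagonalisation against arbitrary reduction witnesses, which will be the hard part. For claim (1), I would first observe that $\Sort \leqW \lim$ holds essentially by definition: the $n$th bit of $\Sort(p)$ is $0$ iff $p$ contains at least $n+1$ zeros, which is a $\Sigma^0_1$ predicate and hence computable in the limit. The lower bound $\C_\mathbb{N} \leqW \Sort$ I would prove via the representative $\max_{\mathcal{O}}$ from Lemma \ref{lemma:cn}: given an enumeration of a bounded $A \in \mathcal{O}(\mathbb{N})$, emit $1$s continuously while inserting an extra $0$ whenever the running maximum of $A$ strictly increases, so that $\Sort$ of the resulting $\Cantor$-sequence has exactly $\max A$ many zeros. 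The equivalence $\widehat{\Sort} \equivW \lim$ then follows from $\widehat{\lim} \equivW \lim$, $\Sort \leqW \lim$, and $\lim \equivW \widehat{\C_\mathbb{N}}$; strictness $\Sort \leW \widehat{\Sort}$ is witnessed by $\Sort$ preserving computability of its inputs while $\lim$ does not, and $\C_\mathbb{N} \leW \Sort$ will follow a posteriori from claims (3) and (5), since $\Sort \equivW \C_\mathbb{N} \leqW \textrm{TC}_\mathbb{N}$ together with $\isFinite_\mathbb{S} \leqW \Sort$ would contradict (5).

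For claims (2) and (3) the reductions are short. For (2), the set $\{n \in \mathbb{N} \mid p \text{ contains at most } n \text{ ones}\}$ is uniformly computable in $p$ as an element of $\mathcal{A}(\mathbb{N})$; feeding it to $\textrm{TC}_\mathbb{N}$ yields either an honest upper bound or, if the set is empty, an arbitrary $m$, but a realiser for $\isInfinite_\mathbb{S}$ can continue to scan $p$ for an $(m{+}1)$st one and commit to $\top$ upon finding one. For (3), the swap $S : \Cantor \to \Cantor$ of $0$ and $1$ gives $\isFinite_\mathbb{S}(p) = \delta_\mathbb{S}(\Sort(Sp))$, since $\Sort(Sp) = 0^\omega$ exactly when $p$ has infinitely many ones. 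Strictness in both cases follows from $\C_\mathbb{N}$ lying below $\textrm{TC}_\mathbb{N}$ and $\Sort$ but not below any problem with a two-element codomain, a variant of Lemma \ref{lemma:lpo}(1). Claim (4) is then immediate from (1): $\Sort \leqW \lim$ makes $\Sort$ $\Sigma^0_2$-measurable, so by the preservation theorem of \cite{brattka} is every problem below it, whereas $\isInfinite_\mathbb{S}^{-1}(\top)$ is the canonical $\Pi^0_2$-complete set.

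The hard part will be claim (5), $\isFinite_\mathbb{S} \nleqW \textrm{TC}_\mathbb{N}$, because $\textrm{TC}_\mathbb{N}$'s multivaluedness on the empty input blocks any Borel-hierarchy argument and forces a direct diagonalisation. Assuming witnesses $K', H$ for the reduction, I would first normalise $K'$ after composition with the representations into a total single-valued $K : \Cantor \times \mathbb{N} \to \mathbb{S}$, and rewrite the inner witness as a finitary monotone enumeration $h : \{0,1\}^* \to \mathbb{N}^*$ with $n \notin \psi(H(p))$ iff $(n{+}1) \in h(p_{\leq l})$ for some $l$. The plan is then to construct a bad input $p$ of shape $0^{l_0}1^{l_1}0^{l_2}1^{l_3}\cdots$ in stages, starting from an initial block chosen so that $K$ already commits to $\top$ on some witness $k$ for the all-zero continuation while $h$ already excludes $k$ from $\psi(H(p))$. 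At stage $i$, I look at $K(p_i 0^\omega, i)$: if $\bot$, then $i$ must be excluded from the $\textrm{TC}_\mathbb{N}$-set (otherwise the answer $i$ would falsify $\bot$), so I extend $p_i$ by enough zeros to force $h$ to exclude $i$ and then a single $1$; if $\top$, the commitment is already fixed on some finite prefix, so I extend by a zero block locking in that commitment, then enough ones for $h$ to exclude $i$, then a single $1$. The limit $p$ then has infinitely many ones, so $\isFinite_\mathbb{S}(p) = \bot$, yet $\psi(H(p)) = \emptyset$, so a $\textrm{TC}_\mathbb{N}$-realiser may legitimately return the original $k$, on which $K$ answers $\top$ --- contradicting the reduction. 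The delicate step will be the simultaneous maintenance of the two commitment schedules for $K$ and $h$ along the way, which is precisely what the permissiveness of $\textrm{TC}_\mathbb{N}$ on $\emptyset$ allows us to exploit.
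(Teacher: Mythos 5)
Your proposal is correct and follows essentially the same route as the paper's proof: the same reductions for claims (1)--(3), the identical $\Sigma^0_2$-measurability argument for (4), the identical stage-by-stage diagonalisation against normalised witnesses $K,h$ for (5) exploiting the permissiveness of $\textrm{TC}_\mathbb{N}$ on the empty set, and the same a posteriori derivation of $\C_\mathbb{N} \leW \Sort$ from (3) and (5).
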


We next wish to show that $\Sort \leW \Sort^2 \leW \Sort^3 \leW \ldots$. For this, we need a slight generalization of the Squashing Theorem from \cite{shafer} (cf.~the development in \cite{tahina-phd}). Rather than using the notion of a finitely-tolerant function as employed there, we generalize this to weakly finitely-tolerant functions. The proof remains unaffected by this, though.

\begin{definition}
Call $f : \Baire \mto \Baire$ weakly finitely-tolerant, if there is a computable function $A$ such that for any $\lambda, \lambda' \in \mathbb{N}^*$, $p,q \in \Baire$, if $q \in f(\lambda p)$, then $A(q,\lambda,\lambda') \in f(\lambda' p)$.
\end{definition}

\begin{theorem}[Squashing Theorem \cite{shafer}]
If $f : \Baire \mto \Baire$ is weakly finitely tolerant and $f \equivW f \times f$, then $f \equivW \widehat{f}$.
\end{theorem}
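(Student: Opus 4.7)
The direction $f \leqW \widehat{f}$ is immediate by projecting to a single coordinate, so the task is to prove $\widehat{f} \leqW f$. My plan is to lift the witnesses of $f \times f \leqW f$ to a reduction $\widehat{f} \leqW f$ by an ``infinite right-nesting'' of inputs, using weak finite-tolerance to patch up the limiting behaviour.

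Fix computable $h, k : \subseteq \Cantor \to \Cantor$ with $k\langle \id, Gh \rangle \vdash f \times f$ for every realizer $G$ of $f$. Given $\mathbf{p} = (p_0, p_1, p_2, \ldots) \in \dom(\widehat{f}) = \dom(f)^\mathbb{N}$, I would construct a single input $P \in \Cantor$ to $f$ by $P := h(p_0, h(p_1, h(p_2, \ldots)))$. Continuity of $h$ guarantees that any finite prefix of $P$ depends only on finitely many bits of finitely many $p_i$, so $P$ is uniformly computable from a name of $\mathbf{p}$. Writing $P_n := h(p_n, h(p_{n+1}, \ldots))$, we have $P = h(p_0, P_1)$ and in general $P_n = h(p_n, P_{n+1})$.

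Given $q_0 := G(P)$, I would then extract answers by iterating $k$: apply $k\langle (p_0, P_1), q_0 \rangle$ to obtain $(r_0, q_1)$ with $r_0 \in f(p_0)$ and $q_1 \in f(P_1)$, then $k\langle (p_1, P_2), q_1 \rangle$ to obtain $(r_1, q_2)$, and so on. The tuple $(r_0, r_1, \ldots)$ realizes $\widehat{f}(\mathbf{p})$, and the whole extraction can be packaged as one computable outer witness $K$ applied to $\langle \mathbf{p}, q_0 \rangle$, since each bit of the output depends computably on only finitely many bits of $\mathbf{p}$ and $q_0$.

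The main obstacle is justifying that each $P_n$ actually lies in $\dom(f)$, so that applying $G$ to $P$ and iterating $k$ are legitimate. The naive inductive argument fails because the ``infinite tail'' $P_{n+1}$ is only presented as the limit of its finite prefixes; without further hypotheses, $h$ need not carry such limits into $\dom(f)$. This is precisely where weak finite-tolerance enters: given any prefix $\lambda$ of the true input and any $f$-answer on an extension $\lambda p$, the tolerance witness $A$ converts it to an $f$-answer on $\lambda' p$ for any other prefix $\lambda'$ of the same length. At each finite stage of the construction we have committed only to a finite prefix of $P$, and the residual ``completion'' may be chosen freely; applying $A$ allows us to align these finite commitments with the true $P$ and propagate the corrections to the intermediate outputs $q_n$, so that the iterated $k$-extraction produces genuine elements of $f(p_n)$ in the limit. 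Verifying that all these tolerance-based corrections assemble into a single computable $K$ is the technical core of the argument.
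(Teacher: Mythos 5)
You correctly isolate the obstruction (is the infinite nesting $P = h(p_0, h(p_1, h(p_2, \ldots)))$ in $\dom(f)$?), but the final paragraph does not actually overcome it, and in fact the construction you propose can fail outright. Weak finite tolerance makes $\dom(f)$ a tail set and lets you transport $f$-answers across finite prefix changes, but the infinite nesting is \emph{not} a finite-prefix modification of any finitely nested term $h(p_0, \ldots, h(p_{n-1}, z)\ldots) \in \dom(f)$, so tolerance gives you no handle on it. Concretely, take $f : \subseteq \Baire \mto \Baire$ with $\dom(f) = \{p : p(n) = 0 \text{ for almost all } n\}$ and $f(p) = \{0^\omega\}$; this $f$ is computable, weakly finitely tolerant (take $A$ constantly $0^\omega$), and $f \times f \leqW f$ is witnessed by the interleaving pairing $h(p,q)(2n)=p(n)$, $h(p,q)(2n+1)=q(n)$ together with a constant $k$. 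On the valid input $p_i = 10^\omega$ for every $i$, your nesting $P$ satisfies $P(2^k - 1) = p_k(0) = 1$ for all $k$, so $P$ has infinitely many nonzero entries and $P \notin \dom(f)$; a realizer $G$ of $f$ is then unconstrained on $P$ and the iterated $k$-extraction breaks down. (There is also a preliminary well-definedness concern: if $h$ reads unboundedly far into its second argument before committing an output bit, the recursion defining $P$ need not bottom out. This can be fixed by tolerance-based padding of $h$, but you would need to say how.)

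The sentence deferring ``all these tolerance-based corrections assemble into a single computable $K$'' to a ``technical core'' is precisely where the proof has to happen, and it is not mere bookkeeping. The argument in the cited sources does not feed the infinite nesting to a realizer of $f$. Instead it builds the single instance $P$ stage by stage from the \emph{finitely} nested terms (which genuinely lie in $\dom(f)$), calling the tolerance function $A$ at each stage both to keep the already-written prefix of $P$ consistent with the newest finitely nested term and to patch the answers that are passed back through $k$, and one must verify that the $P$ so obtained really is a valid input. Your proposal has the right pieces in view, but as written it asserts the crucial domain-membership rather than establishing it.
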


\begin{corollary}
\label{Corollary: Sort^n < Sort^(n + 1)}
For any $n \in \mathbb{N}$, $\Sort^n \leW \Sort^{n+1}$.
\begin{proof}
It is easy to see that any $\Sort^i$ is weakly finitely tolerant. Thus, if the claim were false, the Squashing Theorem would imply $\Sort^i \equivW \widehat{\Sort} \equivW \lim$ (from Proposition \ref{prop:isinfiniteetc} (1)), but the left-hand side has only computable outputs, whereas the right-hand side maps some computable inputs to non-computable outputs.
\end{proof}
\end{corollary}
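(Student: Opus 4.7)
The plan is to proceed by contradiction and apply the Squashing Theorem just stated. Suppose some $n \in \mathbb{N}$ witnesses $\Sort^n \equivW \Sort^{n+1}$; the aim is to show that $\Sort^n$ then satisfies both hypotheses of the Squashing Theorem, which forces $\Sort^n \equivW \lim$. This will contradict a simple observation about the range of $\Sort^n$.

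First I would check that every $\Sort^i$ is weakly finitely tolerant. Thinking of $\Sort$ coordinate-wise, its output simply reports in unary the number of zeros in the input, so prepending a finite prefix $\lambda$ contributes only a known, finite shift to that count; given any output $q$ for $\Sort^i(\lambda p)$ along with the finite words $\lambda, \lambda'$, one can recover a valid output for $\Sort^i(\lambda' p)$ by adjusting the leading block of zeros in each of the $i$ coordinates by $\lambda'$-many minus $\lambda$-many zeros, or by leaving a $0^\omega$ coordinate unchanged. This clearly defines a single computable $A$ as required by the definition. Second, I would upgrade the assumed $\Sort^n \equivW \Sort^{n+1}$ to the absorption property $\Sort^n \equivW \Sort^n \times \Sort^n$. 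A short induction on $k$ yields $\Sort^n \equivW \Sort^{n+k}$ for every $k \geq 0$: in the step, $\Sort^{n+k+1} \equivW \Sort^{n+k} \times \Sort \equivW \Sort^n \times \Sort = \Sort^{n+1} \equivW \Sort^n$. Taking $k = n$ gives $\Sort^n \equivW \Sort^{2n} \equivW \Sort^n \times \Sort^n$.

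With both hypotheses in place, the Squashing Theorem delivers $\Sort^n \equivW \widehat{\Sort^n}$. Since parallelization absorbs finite products, $\widehat{\Sort^n} \equivW \widehat{\Sort}$, and by Proposition~\ref{prop:isinfiniteetc}(1) we have $\widehat{\Sort} \equivW \lim$. Chaining gives $\Sort^n \equivW \lim$, which is the desired contradiction: every value of $\Sort$ is either $0^\omega$ or of the form $0^k 1^\omega$, hence computable, so $\Sort^n$ produces only computable outputs; whereas $\lim$ sends some computable inputs to non-computable outputs, so $\lim \nleqW \Sort^n$. I expect the main subtlety to be the weak finite tolerance verification, specifically ensuring that the compensating map $A$ really is a single computable function independent of $\lambda, \lambda'$ and acts uniformly across all $i$ coordinates; the rest of the argument is essentially a mechanical assembly of results already collected above.
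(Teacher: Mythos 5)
Your proof is correct and follows essentially the same strategy as the paper: argue weak finite tolerance, invoke the Squashing Theorem to get $\Sort^n \equivW \widehat{\Sort} \equivW \lim$, and derive a contradiction from the fact that $\Sort^n$ has only computable outputs. You usefully make explicit the induction step from $\Sort^n \equivW \Sort^{n+1}$ to $\Sort^n \equivW \Sort^n \times \Sort^n$, which the paper treats as immediate; the only tiny wrinkle is that the conclusion $\widehat{\Sort^n} \equivW \widehat{\Sort}$ requires $n \geq 1$, so for $n = 0$ one should apply the Squashing Theorem to $\Sort^{n+1}$ instead (or just observe that $\id_\Baire \equivW \Sort$ is impossible since $\Sort$ is discontinuous), but this edge case is equally unaddressed in the paper's own phrasing.
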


\begin{proposition}
\begin{enumerate}
\item $\Sort \star \lpo \equivW \Sort \times \lpo$
\item $\Sort \star \C_\mathbb{N} \equivW \Sort \times \C_\mathbb{N}$
\end{enumerate}
\begin{proof}
Note that we can extend any computable partial function $f: \subseteq \Baire \to \Cantor$ to a computable total function $F: \Baire \to \Cantor$ such that $\Sort \circ f (p) = \Sort \circ F (p)$ for any $p \in \dom(f)$ -- just add infinitely many $1$'s to the (partial) output of $f$. Furthermore, for both cases it suffices to show the $\leqW$-direction, the $\geqW$-direction trivially holds.
\begin{enumerate}
\item As long as the input to $\lpo$ is consistent with $0^\mathbb{N}$, we use the corresponding input to $\Sort$. If we ever read a $1$ in the input to $\lpo$, we restart writing the input to $\Sort$ corresponding to the now known output of $\lpo$. By looking at the output of $\lpo$ (on the right hand side), we can determine in which case we are. In the former, the output of $\Sort$ already is correct. In the latter, we can find out the precise finite prefix of the input to $\Sort$ we had written when encountering the $1$ in the input to $\lpo$, count the $0$s in that prefix and adjust the output of $\Sort$ accordingly.
\item We use $\max_\Baire$ instead of $\C_\mathbb{N}$. The argument proceeds similar as above: Start by providing the input to $\Sort$ that would correspond to $\max_\Baire$ outputting $0$. Once we learn that $\max_\Baire$ will provide a larger value, switch to the corresponding input to $\Sort$. Repeat as required. We can then use $\max_\Baire$ (on the right hand side) to determine the length of the finite wrong prefix fed to $\Sort$, and change the output of $\Sort$ accordingly to fix it.
\end{enumerate}
\end{proof}
\end{proposition}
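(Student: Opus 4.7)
The plan is that $\geqW$ is automatic from $f \times g \leqW f \star g$, so only $\leqW$ needs work. The key observation is that $\Sort$ tolerates finite wrong prefixes: if $w \in \{0,1\}^*$ contains $k$ zeros and $p \in \Cantor$, then $\Sort(wp)$ coincides with $\Sort(p)$ up to $k$ extra leading zeros (and both equal $0^\omega$ if $p$ has infinitely many zeros). In either case, $\Sort(p)$ is recoverable from $\Sort(wp)$ by dropping the first $k$ symbols.

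Unpacking $\Sort \star f$ as ``apply $f$ once, then a computable $G$, then $\Sort$'', my reduction to $\Sort \times f$ proceeds as follows. The right-hand side passes $f$'s intended input to $f$ directly; the left-hand side optimistically guesses $f$'s answer and begins feeding $\Sort$ the corresponding prefix of $G(\cdot, \text{guess})$. If the true input to $f$ ever refutes the current guess, the witness switches to the continuation corresponding to the updated guess, thereby prefixing $\Sort$'s intended input with a finite wrong prefix. When both parallel computations return, $f$'s output together with the original input locates the time of the last switch, hence the number $k$ of zeros in the wrong prefix, and the outer witness strips $k$ leading symbols from $\Sort$'s output.

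For part (1), the default guess is $\lpo(q) = 1$ (i.e.\ $q = 0^\omega$); if ever a $1$ appears in $q$, the guess flips once and for all to $0$, and $\lpo$'s true output distinguishes the two cases. For part (2), I would use $\max_\Baire$ as a representative of $\C_\mathbb{N}$ via Lemma \ref{lemma:cn}, starting with the guess $0$ and updating whenever a larger value is encountered in the input; only finitely many updates can occur, and $\max_\Baire$'s true value together with the input pinpoints the last switch, hence again the length of the wrong prefix.

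The main obstacle I anticipate is not strategic but technical: ensuring that $G$ may be taken total (arranged by padding its partial output with $1^\omega$, which leaves $\Sort \circ G$ unchanged on the original domain) so that $\Sort$'s input is well-defined throughout the simulation, and verifying that the tolerance observation applies uniformly across the finite- and infinite-zero cases, so that the same stripping procedure corrects $\Sort$'s output regardless. Once those routine points are settled, the book-keeping in part (2) is structurally identical to part (1), differing only in admitting an unbounded (but finite) number of switches rather than at most one.
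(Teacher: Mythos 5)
Your proposal is correct and takes essentially the same approach as the paper's proof: optimistically guess the output of $\lpo$ (resp.\ $\max_\Baire$), feed $\Sort$ the corresponding input via a totalised (padded-with-$1$s) intermediate function, switch upon refutation, and then use the right-hand copy of $\lpo$ (resp.\ $\max_\Baire$) together with the original input to reconstruct the spurious prefix and strip the corresponding number of leading zeros from $\Sort$'s output. Your explicit formulation of the ``finite-prefix tolerance'' of $\Sort$ is a clean way to phrase what the paper leaves implicit in its ``adjust the output accordingly'' step.
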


\begin{proposition}
\label{prop:sortcnisinfinite}
$\textrm{TC}_\mathbb{N} \leqW \C_\mathbb{N} \star \isFinite_{\mathbb{S}}$ and $\Sort \leqW \C_\mathbb{N} \star \isInfinite_{\mathbb{S}}$.
\begin{proof}
{\bf First claim}: Given some $A \subseteq \mathbb{N}$, we can compute $p_A \in \Cantor$ such that $|\{k \in \mathbb{N} \mid p_A(k) = 1\}| \geq n$ iff $\{0,\ldots,n-1\} \cap A = \emptyset$. Apply $\isFinite_{\mathbb{S}}$ to this, and then $\left (\id : \mathbb{S} \to \{0,1\} \right ) \equivW \lpo$ to the output. If the answer is $1$, the original input is a valid input for $\C_\mathbb{N}$. If we learn $0$ from the first part, feed a name for $\mathbb{N}$ to $\C_\mathbb{N}$. The output of $\C_\mathbb{N}$ yields a solution to $\textrm{TC}_\mathbb{N}$ in either case. Thus, $\textrm{TC}_\mathbb{N} \leqW \C_\mathbb{N} \star \lpo \star \isFinite_{\mathbb{S}} \equivW \C_\mathbb{N} \star \isFinite_{\mathbb{S}}$. Here we used $\lpo \leqW \C_\mathbb{N}$, and $\C_\mathbb{N} \star \C_\mathbb{N} \equivW \C_\mathbb{N}$ (as recalled in Lemma \ref{lem:paulybrattka}).

{\bf Second claim}: We show that $\Sort \leqW \max_\Baire \star \lpo \star \isInfinite_{\mathbb{S}}$ instead. Let $S : \Cantor \to \Cantor$ swap $0$ and $1$ componentwise, and let $p \in \Cantor$ be the original input to $\Sort$. Apply $\id : \mathbb{S} \to \{0,1\}$ to the output of $\isInfinite_{\mathbb{S}}$ on input$S(p)$. If we receive a $1$, feed $0^\mathbb{N}$ to $\max$, and answer $0^\mathbb{N}$ for $\Sort$. Else, define $q \in \Baire$ by $q(n) = | \{k \leq n \mid p(k) = 0\}|$, and note that $q \in \dom(\max_\Baire)$. Then $0^{\max q}1^\omega$ is the correct output to $\Sort$.
\end{proof}
\end{proposition}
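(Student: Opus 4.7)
The plan for both reductions follows a common template: query a Sierpi\'nski-valued predicate to discover which case we are in, harden the Sierpi\'nski answer into a Boolean via $\lpo$, and then invoke $\C_\mathbb{N}$ on an input chosen according to the branch. The absorption identities $\lpo \leqW \C_\mathbb{N}$ (Lemma \ref{lemma:lpo}) and $\C_\mathbb{N} \star \C_\mathbb{N} \equivW \C_\mathbb{N}$ (Lemma \ref{lem:paulybrattka}), together with associativity of $\star$, will then collapse the intermediate three-fold composition into the two-fold composition that the statement demands.

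\textbf{For the first claim,} given $A \in \mathcal{A}(\mathbb{N})$ presented by an enumeration of its complement, I would compute $p_A \in \Cantor$ by writing a $1$ each time a fresh initial segment $\{0,\dots,k\}$ becomes certified to be disjoint from $A$, and $0$s otherwise. If $A \neq \emptyset$, the least element $m$ of $A$ is never enumerated out, so only finitely many fresh initial segments can be fully exhausted, hence $p_A$ has finitely many $1$s; if $A = \emptyset$, every initial segment is eventually exhausted and $p_A$ has infinitely many $1$s. Thus $\isFinite_{\mathbb{S}}(p_A) = \top$ iff $A \neq \emptyset$. Converting by $\lpo$ (which realises $\id: \mathbb{S} \to \{0,1\}$), I branch: on answer $1$ feed $A$ to $\C_\mathbb{N}$; on answer $0$ feed a computable name of $\mathbb{N}$ to $\C_\mathbb{N}$. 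Either output is a legitimate value of $\textrm{TC}_\mathbb{N}(A)$, giving $\textrm{TC}_\mathbb{N} \leqW \C_\mathbb{N} \star \lpo \star \isFinite_{\mathbb{S}}$, which absorbs to $\C_\mathbb{N} \star \isFinite_{\mathbb{S}}$.

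\textbf{For the second claim,} given $p \in \Cantor$, let $S$ swap $0$s and $1$s componentwise so that $S(p)$ has infinitely many $1$s iff $p$ has infinitely many $0$s. Apply $\isInfinite_{\mathbb{S}}$ to $S(p)$ and harden via $\lpo$. On answer $1$, the correct output of $\Sort$ is $0^\omega$, which can be produced directly (feeding any dummy closed set to $\C_\mathbb{N}$ to satisfy the shape of the composition). On answer $0$, the sequence $q(n) := |\{k \leq n \mid p(k) = 0\}|$ is bounded and thus lies in $\dom(\max_\Baire)$; since $\max_\Baire \equivW \C_\mathbb{N}$ by Lemma \ref{lemma:cn}, applying it returns the total count $N$ of $0$s in $p$, from which $\Sort(p) = 0^N 1^\omega$ is immediate. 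This yields $\Sort \leqW \max_\Baire \star \lpo \star \isInfinite_{\mathbb{S}}$, and the same absorption produces the desired bound.

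The only conceptually delicate point is verifying that these three-step algorithms really fit into the $\star$-product—specifically, that the $\lpo$ step in the middle can be absorbed on either side. This is entirely formal once one invokes $\lpo \leqW \C_\mathbb{N}$ and $\C_\mathbb{N} \star \C_\mathbb{N} \equivW \C_\mathbb{N}$; the rest of the argument is a routine uniform case distinction dictated by the Sierpi\'nski semantics. I expect no substantial obstacle beyond being careful that the dummy $\C_\mathbb{N}$ call in the ``infinitely many zeros'' branch of the second reduction is genuinely admissible in the $\star$-framework, which it is because we can always feed the computable point $\mathbb{N} \in \mathcal{A}(\mathbb{N})$ and discard the result.
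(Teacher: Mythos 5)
Your proof is correct and follows essentially the same route as the paper's own argument: the same construction of $p_A$, the same $\isFinite_{\mathbb{S}}/\isInfinite_{\mathbb{S}}$-then-$\lpo$-then-$\C_\mathbb{N}$ (or $\max_\Baire$) pipeline with the swap $S$ in the second claim, and the same absorption of the middle $\lpo$ via $\lpo \leqW \C_\mathbb{N}$ and $\C_\mathbb{N} \star \C_\mathbb{N} \equivW \C_\mathbb{N}$. You merely spell out some steps that the paper leaves compressed (the explicit algorithm for $p_A$, the explicit invocation of $\max_\Baire \equivW \C_\mathbb{N}$), but the approach is identical.
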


\begin{corollary}
\label{corr:bigcorr}
$\C_\mathbb{N} \star \Sort \equivW \C_\mathbb{N} \star \isFinite_{\mathbb{S}} \equivW \C_\mathbb{N} \star \isInfinite_{\mathbb{S}} \equivW \C_\mathbb{N} \star \textrm{TC}_\mathbb{N}$
\begin{proof}
\begin{description}
\item[$\C_\mathbb{N} \star \isInfinite_{\mathbb{S}} \leqW \C_\mathbb{N} \star \textrm{TC}_\mathbb{N}$] By Proposition \ref{prop:isinfiniteetc} (2).
\item[$\C_\mathbb{N} \star \textrm{TC}_\mathbb{N} \leqW \C_\mathbb{N} \star \isFinite_{\mathbb{S}}$] By Proposition \ref{prop:sortcnisinfinite} we have $\C_\mathbb{N} \star \textrm{TC}_\mathbb{N} \leqW \C_\mathbb{N} \star \C_\mathbb{N} \star \isFinite_{\mathbb{S}} \equivW \C_\mathbb{N} \star \isFinite_{\mathbb{S}}$, invoking Lemma \ref{lem:paulybrattka}.
\item[$\C_\mathbb{N} \star \isFinite_{\mathbb{S}} \leqW \C_\mathbb{N} \star \Sort$] By Proposition \ref{prop:isinfiniteetc} (3).
\item[$\C_\mathbb{N} \star \Sort \leqW \C_\mathbb{N} \star \isInfinite_{\mathbb{S}}$] By Proposition \ref{prop:sortcnisinfinite} we have $\C_\mathbb{N} \star \Sort \leqW \C_\mathbb{N} \star \C_\mathbb{N} \star \isInfinite_{\mathbb{S}} \equivW \C_\mathbb{N} \star \isInfinite_{\mathbb{S}}$.
\end{description}
\end{proof}
\end{corollary}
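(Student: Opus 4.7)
The plan is to establish the four-way equivalence as a cyclic chain of reductions, exploiting monotonicity of the composition product $\star$ together with the absorption identity $\C_\mathbb{N} \star \C_\mathbb{N} \equivW \C_\mathbb{N}$ from Lemma \ref{lem:paulybrattka}. Since $\star$ is monotone in both arguments, every inequality $f \leqW g$ yields $\C_\mathbb{N} \star f \leqW \C_\mathbb{N} \star g$, so the whole argument reduces to suitably combining results already on the table.

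First I would read off two inequalities ``for free'' from Proposition \ref{prop:isinfiniteetc}. Item (2) gives $\isInfinite_{\mathbb{S}} \leqW \textrm{TC}_\mathbb{N}$, hence $\C_\mathbb{N} \star \isInfinite_{\mathbb{S}} \leqW \C_\mathbb{N} \star \textrm{TC}_\mathbb{N}$; item (3) gives $\isFinite_{\mathbb{S}} \leqW \Sort$, hence $\C_\mathbb{N} \star \isFinite_{\mathbb{S}} \leqW \C_\mathbb{N} \star \Sort$. These form two edges of the cycle.

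Next I would close the cycle using Proposition \ref{prop:sortcnisinfinite}. From $\textrm{TC}_\mathbb{N} \leqW \C_\mathbb{N} \star \isFinite_{\mathbb{S}}$, monotonicity of $\star$ yields
\[ \C_\mathbb{N} \star \textrm{TC}_\mathbb{N} \ \leqW\ \C_\mathbb{N} \star \C_\mathbb{N} \star \isFinite_{\mathbb{S}} \ \equivW\ \C_\mathbb{N} \star \isFinite_{\mathbb{S}}, \]
using associativity of $\star$ and the absorption $\C_\mathbb{N} \star \C_\mathbb{N} \equivW \C_\mathbb{N}$. Similarly, from $\Sort \leqW \C_\mathbb{N} \star \isInfinite_{\mathbb{S}}$ I obtain
\[ \C_\mathbb{N} \star \Sort \ \leqW\ \C_\mathbb{N} \star \C_\mathbb{N} \star \isInfinite_{\mathbb{S}} \ \equivW\ \C_\mathbb{N} \star \isInfinite_{\mathbb{S}}. \]
Chaining the four reductions
\[ \C_\mathbb{N} \star \isInfinite_{\mathbb{S}} \ \leqW\ \C_\mathbb{N} \star \textrm{TC}_\mathbb{N} \ \leqW\ \C_\mathbb{N} \star \isFinite_{\mathbb{S}} \ \leqW\ \C_\mathbb{N} \star \Sort \ \leqW\ \C_\mathbb{N} \star \isInfinite_{\mathbb{S}} \]
collapses all four degrees to a single class.

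There is no genuine obstacle here; the corollary is an exercise in bookkeeping once Propositions \ref{prop:isinfiniteetc} and \ref{prop:sortcnisinfinite} are in place, and the only subtle point is being careful that the ``one $\C_\mathbb{N}$ swallowing another $\C_\mathbb{N}$'' step is always used in the correct direction to preserve the chain. The real work was done in Proposition \ref{prop:sortcnisinfinite}, which is where the $\C_\mathbb{N}$-premultiplication becomes powerful enough to compensate for the strict separations $\isInfinite_\mathbb{S} \nleqW \Sort$ and $\isFinite_\mathbb{S} \nleqW \textrm{TC}_\mathbb{N}$ observed in Proposition \ref{prop:isinfiniteetc}; this corollary simply records the consequence.
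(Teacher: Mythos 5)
Your proof is correct and follows exactly the same four-step cycle as the paper's, with the same two monotonicity reductions derived from Proposition \ref{prop:isinfiniteetc} and the same two absorption arguments derived from Proposition \ref{prop:sortcnisinfinite} and Lemma \ref{lem:paulybrattka}. Nothing to add.
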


Our next goal is to show that $\Sort$ is 2-low, in the sense that $\lim \star \lim \star \Sort \equivW \lim \star \lim$. We start with a more general result, for which we need the notion of a precomplete represented space. A space $\mathbf{Y}$ is called precomplete, if for every partial computable function $F : \subseteq \Cantor \to \mathbf{Y}$ there is a total computable function $F' : \Cantor \to \mathbf{Y}$ such that $F = F'|_{\dom(F)}$. Typical examples of precomplete spaces are $\mathbb{S}$, $\mathcal{O}(\mathbb{N})$ and $\mathcal{A}(\mathbb{N})$.

We further use the precomplete space $\mathbb{S}_{\Sigma^0_2}$ with underlying set $\{\bot, \top\}$ and representation $\delta_{\Sigma^0_2} : \Cantor \to \{\bot,\top\}$ defined via $\delta_{\Sigma^0_2}(p) = \top$ iff $p$ contains infinitely many $1$s, and $\delta_{\Sigma^0_2}(p) = \bot$ else. Now the map $\id : \mathbb{S}_{\Sigma^0_2} \to \{0,1\}$ (mapping $\bot$ to $0$ and $\top$ to $1$) has the same realizer as $\isInfinite$. Moreover, $\widehat{\isInfinite} \equivW \lim \star \lim$ was shown in \cite{brattka}.

\begin{proposition}
\label{prop:compparr}
Let $g : \mathbf{X} \mto \mathbb{N}$, and $f : \mathbf{Y} \mto \mathbf{Z}$ where $\mathbf{Y}$ is precomplete. Then $f \star g \leqW \widehat{f} \times g$.
\begin{proof}
We make use of the explicit representative of $f \star g$ constructed in \cite{paulybrattka4}. The input of $f \star g$ is a partial continuous function $e : \subseteq \mathbb{N} \mto \Baire \times \mathbf{Y}$ and some $x \in \mathbf{X}$. The output is a pair $(p,z) \in \Baire \times \mathbf{Z}$ such that there exists $y \in \mathbf{Y}$ such that $(p,y) \in e(g(x))$ and $z \in f(y)$. As $\mathbb{N}$ has an injective representation, we can assume w.l.o.g.~that $e$ is actually a single-valued (partial) function, which then splits into $e_1 : \subseteq \mathbb{N} \to \Baire$ and $e_2 : \subseteq \mathbb{N} \to \mathbf{Y}$. As $\mathbf{Y}$ is precomplete, we can extend $e_2$ to a total function, and by currying, obtain a sequence $(y_0,y_1,\ldots)$. We apply $\widehat{f}$ to $(y_0,y_1,\ldots)$ and obtain some sequence $(z_0,z_1,\ldots,)$, and $g$ to $x$ to obtain $n \in \mathbb{N}$. The pair $(e_1(n), z_n)$ constitutes a valid output for $f \star g$.
\end{proof}
\end{proposition}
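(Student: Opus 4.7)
The plan is to work with the explicit representative of $f \star g$ constructed in \cite{paulybrattka4}. There, an input to $f \star g$ is presented as a pair consisting of some $x \in \mathbf{X}$ together with a partial continuous function $e : \subseteq \mathbb{N} \mto \Baire \times \mathbf{Y}$; the composition first applies $g$ to $x$ to obtain some $n \in g(x)$, then evaluates $e(n) = (p,y)$, and finally applies $f$ to the second component $y$, returning the pair $(p,z)$ for any $z \in f(y)$. Since $\mathbb{N}$ carries an injective representation, I may assume that $e$ is single-valued and split it by currying into $e_1 : \subseteq \mathbb{N} \to \Baire$ and $e_2 : \subseteq \mathbb{N} \to \mathbf{Y}$.

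The strategy is then to precompute, in parallel, all the candidate inputs $e_2(0), e_2(1), \ldots$ that $f$ might ever be asked to process, feed these to $\widehat{f}$, and apply $g$ to $x$ separately; finally I would use the value $n = g(x)$ to project out the relevant component. The problem is that $\widehat{f}$ demands an honestly total sequence in $\mathbf{Y}^\mathbb{N}$, whereas $e_2$ is in general only a partial function, and I cannot afford to disturb the value $e_2(n)$ at the one relevant index.

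This is precisely the obstacle that precompleteness of $\mathbf{Y}$ is designed to remove: by definition I can extend $e_2$ to a total computable function $\tilde e_2 : \mathbb{N} \to \mathbf{Y}$ with $\tilde e_2|_{\dom(e_2)} = e_2$. Applying $\widehat{f}$ to the total sequence $(\tilde e_2(k))_{k \in \mathbb{N}}$ produces a sequence $(z_k)_{k \in \mathbb{N}}$ with $z_k \in f(\tilde e_2(k))$ for every $k$, and in particular $z_n \in f(y)$ for the eventually revealed index $n$. Running $g$ on $x$ in the other factor of $\widehat{f} \times g$ then yields $n$, and the outer reduction witness simply extracts $(e_1(n), z_n)$, which is a legitimate output of $f \star g$. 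The main (really the only) technical point is the totalisation step; everything else is bookkeeping inside the standard representative of $\star$.
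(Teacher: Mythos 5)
Your proof is correct and follows essentially the same route as the paper's own argument: use the explicit representative of $\star$, split $e$ into $e_1$ and $e_2$ via the injective representation of $\mathbb{N}$, totalise $e_2$ via precompleteness of $\mathbf{Y}$, feed the resulting sequence to $\widehat{f}$ and run $g$ on $x$ in parallel, then project. Your write-up additionally motivates why the totalisation step is exactly where precompleteness is needed, but the substance is the same.
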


\begin{corollary}
\label{corr:low2}
$\lim \star \lim \star \Sort \equivW \lim \star \lim$
\begin{proof}
From Proposition \ref{prop:sortcnisinfinite} we can in particular conclude that $\Sort \leqW \C_\mathbb{N} \star \isInfinite$, and the righthand side has up to isomorphism codomain $\mathbb{N}$. The Weihrauch degree of $\lim \star \lim$ has $\widehat{\left ( \id: \mathbb{S}_{\Sigma^0_2} \to \{0,1\} \right )}$ as a representative with precomplete domain. Thus, from Proposition \ref{prop:compparr} we conclude: \[\lim \star \lim \star \Sort \leqW (\lim \star \lim) \times (\C_\mathbb{N} \star \isInfinite)\] Since $(\C_\mathbb{N} \star \isInfinite) \leqW \C_\mathbb{N} \star (\lpo \star \isFinite_\mathbb{S}) \leqW (\C_\mathbb{N} \star \lpo) \star \Sort \leqW \lim \star \lim$ using Proposition \ref{prop:isinfiniteetc} (1,3), Lemma \ref{lem:paulybrattka} (1), and Lemma \ref{lemma:lpo} (2), the claim follows.
\end{proof}
\end{corollary}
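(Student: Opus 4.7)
The reverse reduction $\lim \star \lim \leqW \lim \star \lim \star \Sort$ is immediate from $1 \leqW \Sort$ and monotonicity of $\star$. For the nontrivial direction, my plan is to execute the two-step strategy set up in the paragraph immediately preceding the corollary. First, Proposition \ref{prop:sortcnisinfinite} lets me replace the innermost $\Sort$ by $\C_\mathbb{N} \star \isInfinite_\mathbb{S}$, a multivalued function whose codomain is (up to isomorphism) the countable discrete space $\mathbb{N}$. Second, I apply Proposition \ref{prop:compparr} with $g := \C_\mathbb{N} \star \isInfinite_\mathbb{S}$ and $f := \lim \star \lim$, using the representative $\widehat{\id \colon \mathbb{S}_{\Sigma^0_2} \to \{0,1\}}$ of $\lim \star \lim$ provided in the preamble to the corollary, whose domain $(\mathbb{S}_{\Sigma^0_2})^\mathbb{N}$ is precomplete as a countable product of precomplete spaces. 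Since parallelization is idempotent, $\widehat{\lim \star \lim} \equivW \lim \star \lim$, so the output of Proposition \ref{prop:compparr} simplifies to
\[
\lim \star \lim \star \Sort \;\leqW\; (\lim \star \lim) \times (\C_\mathbb{N} \star \isInfinite_\mathbb{S}).
\]

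It then suffices to prove $\C_\mathbb{N} \star \isInfinite_\mathbb{S} \leqW \lim \star \lim$, after which $(\lim \star \lim) \times (\lim \star \lim) \equivW \lim \star \lim$ (via $\widehat{\lim} \equivW \lim$) completes the argument. For the remaining bound I would first observe $\isInfinite_\mathbb{S} \leqW \lpo \star \isFinite_\mathbb{S}$, obtained by applying $\isFinite_\mathbb{S}$ and then complementing the resulting Sierpi\'nski value via $\lpo \equivW (\id \colon \mathbb{S} \to \{0,1\})$. Combining this with Proposition \ref{prop:isinfiniteetc}(3) for $\isFinite_\mathbb{S} \leqW \Sort$, Lemma \ref{lemma:lpo}(2) for $\lpo \leqW \C_\mathbb{N}$, Lemma \ref{lem:paulybrattka}(1) for $\C_\mathbb{N} \star \C_\mathbb{N} \equivW \C_\mathbb{N}$, and Proposition \ref{prop:isinfiniteetc}(1) for $\C_\mathbb{N}, \Sort \leqW \lim$, I obtain the chain
\[
\C_\mathbb{N} \star \isInfinite_\mathbb{S} \;\leqW\; \C_\mathbb{N} \star \lpo \star \isFinite_\mathbb{S} \;\leqW\; \C_\mathbb{N} \star \Sort \;\leqW\; \lim \star \lim.
\]

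The main obstacle is the bookkeeping surrounding Proposition \ref{prop:compparr}: one must commit to the specific precomplete-domain representative $\widehat{\id \colon \mathbb{S}_{\Sigma^0_2} \to \{0,1\}}$ of $\lim \star \lim$ and justify that the parallelization $\widehat{f}$ appearing in the conclusion of that proposition collapses back to $\lim \star \lim$ itself. Once these representative-level identifications are pinned down, the rest of the argument is a mechanical chain of reductions already available from the preceding propositions and lemmas, and no further ideas should be needed.
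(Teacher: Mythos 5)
Your proposal is correct and follows essentially the same route as the paper: replace $\Sort$ by $\C_\mathbb{N} \star \isInfinite_\mathbb{S}$ via Proposition~\ref{prop:sortcnisinfinite}, apply Proposition~\ref{prop:compparr} using the representative $\widehat{\id\colon \mathbb{S}_{\Sigma^0_2} \to \{0,1\}}$ of $\lim\star\lim$ with precomplete domain, collapse $\widehat{f}$ by idempotence of parallelization, and then bound $\C_\mathbb{N}\star\isInfinite_\mathbb{S}$ by $\lim\star\lim$ via $\lpo\star\isFinite_\mathbb{S}$, $\isFinite_\mathbb{S}\leqW\Sort$, and the absorption $\C_\mathbb{N}\star\C_\mathbb{N}\equivW\C_\mathbb{N}$. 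Your writeup is in fact slightly more careful than the paper's in consistently using $\isInfinite_\mathbb{S}$ (as Proposition~\ref{prop:sortcnisinfinite} actually delivers) and in explicitly noting both the precompleteness of the countable product and the final absorption $(\lim\star\lim)\times(\lim\star\lim)\equivW\lim\star\lim$.
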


\begin{corollary}
$\coprod_{n \in \mathbb{N}} \Sort^{(n)} \leqW \lim \star \lim$.
\begin{proof}
By iterating Corollary \ref{corr:low2} $n$ times, we find that: \[\lim \star \lim \star \Sort^{(n)} \leqW \lim \star \lim\] As this argument is uniform in $n$, the claim follows.
\end{proof}
\end{corollary}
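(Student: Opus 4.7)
The plan is to first establish, by induction on $n$, the uniform bound $\lim \star \lim \star \Sort^{(n)} \leqW \lim \star \lim$, and then to deduce the coproduct reduction from the fact that the witnessing pair of computable maps depends computably on $n$.

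For the induction, the base case $n = 0$ is trivial since $\Sort^{(0)} = \id_\Baire$. For the step, I would write $\Sort^{(n+1)} = \Sort^{(n)} \star \Sort$ and use associativity of $\star$ to rewrite
\[
\lim \star \lim \star \Sort^{(n+1)} \equivW \bigl(\lim \star \lim \star \Sort^{(n)}\bigr) \star \Sort.
\]
By the induction hypothesis, the bracketed term is Weihrauch-reducible to $\lim \star \lim$. Since $\star$ is monotone in each argument, this yields a reduction to $(\lim \star \lim) \star \Sort$, and Corollary \ref{corr:low2} identifies the latter with $\lim \star \lim$. Composing the reductions produced along the way gives computable witnesses depending uniformly (indeed, primitive-recursively) on $n$. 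In particular, $\Sort^{(n)} \leqW \lim \star \lim$ uniformly in $n$.

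To pass from this uniform family of reductions to a single reduction for $\coprod_{n \in \mathbb{N}} \Sort^{(n)}$, observe that the coproduct takes as input a pair $(n, x)$ with $x$ in the domain of $\Sort^{(n)}$, and must output an element of $\Sort^{(n)}(x)$. Given the uniformly computable sequence $(K_n, H_n)_{n \in \mathbb{N}}$ of reduction witnesses from the induction, one single pair $(K, H)$ of computable maps is defined by $H(n, x) := H_n(x)$ and $K(\langle(n,x), q\rangle) := K_n(\langle x, q\rangle)$, which witnesses $\coprod_n \Sort^{(n)} \leqW \lim \star \lim$.

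The only slightly delicate point is the uniformity claim: one must verify that the reduction $(\lim \star \lim) \star \Sort \leqW \lim \star \lim$ used in Corollary \ref{corr:low2} is itself explicit and computable, so that iterating it $n$ times yields a computable function of $n$. This is indeed the case, since the proof of that corollary is entirely constructive, built from the explicit representative of $\star$ from \cite{paulybrattka4} together with the absorption identities of Lemma \ref{lem:paulybrattka}, so the iteration can be carried out uniformly in $n$.
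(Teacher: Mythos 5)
Your proposal is correct and follows the same route as the paper: iterate Corollary \ref{corr:low2} (equivalently, induct on $n$ using associativity and monotonicity of $\star$) to get $\lim \star \lim \star \Sort^{(n)} \leqW \lim \star \lim$, then note uniformity in $n$ to pass to the coproduct. Your write-up just spells out the uniformity step and the coproduct bookkeeping that the paper leaves implicit.
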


\section{The algebraic decision problem}
\label{sec:thedecisionproblem}
We are now ready to introduce and study a canonical problem associated with strongly analytic machines. Let $(P_{n,d})_{n \in \mathbb{N}}$ be some standard enumeration of the $d$-variate polynomials with rational coefficients.

\begin{definition}
Define functions $\textsc{AlgDec} : \mathbb{R}^* \to [0, 1]$ and $\textsc{AlgDec}_d : \mathbb{R}^d \to [0, 1]$ via $$\textsc{AlgDec}_d(x_1, \ldots, x_d) = \sum_{\{n \mid P_{n, d}(x_1, \ldots, x_d) = 0\}} 2^{-2n-2}$$ and $\textsc{AlgDec}(x_1, \ldots, x_m) = \textsc{AlgDec}_m(x_1, \ldots, x_m)$.
\end{definition}

The choice for $\uint$ as the codomain for $\textsc{AlgDec}$ is just to ease the comparison to functions computable by strongly analytic machines, we could just as well have defined $\textsc{AlgDec}_d : \mathbb{R}^d \to \Cantor$ with $\textsc{AlgDec}_d((x_1,\ldots,x_d))(n) = 1$ iff $P_{n,d}(x_1,\ldots,x_d) = 0$. Thus, intuitively $\textsc{AlgDec}$ will tell us the ``algebraic type'' (in the sense of the definition before Theorem \ref{theo:sorttype}) of the input tuple.

\begin{observation}
\label{obs:analytic}
If $f : \mathbb{R}^* \to \mathbb{R}^*$ is computable by a strongly analytic machine, then $f \leqW \textsc{AlgDec}$. Moreover, $\textsc{AlgDec}$ is computable by a strongly analytic machine.
\begin{proof}
That $\textsc{AlgDec}$ is computable by a strongly analytic machine is immediate. For the remainder of the claim, we argue that if a strongly analytic machine $M$ computes $f : \mathbb{R}^* \to \mathbb{R}^*$ on input $\mathbf{x}$, then a Type-2 machine can simulate $M$ if provided $\mathbf{x}$ and $\textsc{AlgDec}(\mathbf{x})$ as input. Clearly, the only obstacle to such a simulation are the equality tests that $M$ can make. Each of these is of the form $p(\mathbf{x}) = 0?$, where $p$ is a rational multivariate polynomial\footnote{If $M$ is using real constants, we would consider these as part of $\mathbf{x}$.}. If $p = P_{n,d}$, then inspecting $\textsc{AlgDec}(\mathbf{x})$ up to precision $2^{-2n-4}$ allows the Type-2 machine to determine whether or not $p(\mathbf{x}) = 0$.
\end{proof}
\end{observation}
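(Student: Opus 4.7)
The plan is to handle the two claims separately. For the first claim — that $\textsc{AlgDec}$ is itself computable by a strongly analytic machine — I would note that a BSS-machine over $(\mathbb{R}, +, \times, <)$ can evaluate any fixed polynomial $P_{n, d}$ at the input exactly and can decide $P_{n, d}(\mathbf{x}) = 0$ by conjoining negations of the two strict order tests. To approximate $\textsc{AlgDec}(\mathbf{x})$ with precision $2^{-k}$, the analytic machine on precision input $k$ would test $P_{n, d}(\mathbf{x}) = 0$ for all $n$ up to some threshold $N(k)$ chosen so that the tail $\sum_{n > N(k)} 2^{-2n - 2}$ is at most $2^{-k}$, and return the corresponding partial sum. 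This produces a Cauchy sequence converging to $\textsc{AlgDec}(\mathbf{x})$ at the required rate.

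For the second claim I would construct an explicit Weihrauch reduction. Set the inner reduction $H$ so that it forwards the input $\mathbf{x}$ unchanged; then a realiser for $\textsc{AlgDec}$ supplies a name for $\textsc{AlgDec}(\mathbf{x})$, and the outer reduction $K$ receives both this name and a name for $\mathbf{x}$ itself. The task for $K$ is to run a Type-2 simulation of a strongly analytic machine $M$ computing $f$ on input $\mathbf{x}$, where the only operations that are not Type-2 computable on the nose are the sign/equality tests performed by $M$.

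The key observation is that every intermediate register value produced by $M$ is a rational polynomial in the entries of $\mathbf{x}$, so every branching test amounts to deciding the sign of some $P_{n, d}(\mathbf{x})$. Using the bit associated with index $n$ in $\textsc{AlgDec}(\mathbf{x})$, $K$ first determines whether $P_{n, d}(\mathbf{x}) = 0$; if so, the sign test is resolved immediately. Otherwise we know $P_{n, d}(\mathbf{x}) \neq 0$, so the sign is a continuous function of $\mathbf{x}$ and can be extracted by evaluating $P_{n, d}$ to ever finer precision until a sign can be read off. This lets $K$ faithfully simulate $M$ and hence produce arbitrarily accurate approximations of $f(\mathbf{x})$, as required by the output specification of strongly analytic machines.

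The main obstacle is the translation between the real-number presentation of $\textsc{AlgDec}(\mathbf{x})$ and the individual bit information used above. The coefficient $2^{-2n - 2}$ in the definition is chosen precisely for this: the contribution $2^{-2n - 2}$ of the $n$-th bit exceeds the total contribution $\sum_{m > n} 2^{-2m - 2} = 2^{-2n - 2}/3$ of all later bits, so an approximation to $\textsc{AlgDec}(\mathbf{x})$ with error smaller than $\tfrac{1}{3} \cdot 2^{-2n - 2}$, say of precision $2^{-2n - 4}$, combined with the previously-determined bits strictly below $n$, uniquely fixes the $n$-th bit. Establishing this separation estimate and using it to inductively read off the bits of $\textsc{AlgDec}(\mathbf{x})$ is the technical heart of the argument; once in place, the reduction is uniform in $n$ and the proof is complete.
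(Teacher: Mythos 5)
Your proof is correct and follows the same route as the paper: use $\textsc{AlgDec}(\mathbf{x})$ as an oracle to resolve the zero/non-zero question for each intermediate register value (a rational polynomial in $\mathbf{x}$), and then extract the sign continuously when the value is known to be non-zero. You supply useful detail that the paper leaves implicit — notably the explicit Weihrauch-reduction setup, the tail estimate $\sum_{m>n} 2^{-2m-2} = \tfrac{1}{3}\,2^{-2n-2}$ justifying the precision $2^{-2n-4}$, and the observation that it is really the \emph{order} tests one must simulate, with the equality information from $\textsc{AlgDec}$ being the only non-continuous ingredient — but these are elaborations of the same argument, not a different one.
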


\begin{observation}
$\textsc{AlgDec} \equivW \left ( \coprod_{d \in \mathbb{N}} \textsc{AlgDec}_d \right )$
\end{observation}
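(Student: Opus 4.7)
The plan is to verify both directions of the Weihrauch equivalence by simply unpacking definitions; no technical lemmas are needed, and this observation is really just a formal bookkeeping fact. The key remark is that $\mathbb{R}^* = \coprod_{n \in \mathbb{N}} \mathbb{R}^n$ by definition of the represented space $\mathbb{R}^*$, and from any standard name of an element $(x_1, \ldots, x_m) \in \mathbb{R}^*$ the arity $m$ is computably available (this is just the coproduct tag). Thus both $\textsc{AlgDec}$ and $\coprod_{d \in \mathbb{N}} \textsc{AlgDec}_d$ have, up to a computable bijection, the same input space; they differ only in whether the output in $[0,1]$ is tagged with the dimension index.

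For $\textsc{AlgDec} \leqW \coprod_{d \in \mathbb{N}} \textsc{AlgDec}_d$, I would take an input $(x_1, \ldots, x_m) \in \mathbb{R}^*$, read off the arity $m$, and route the tuple to the $m$-th component of the coproduct. By definition of the coproduct, the oracle then returns a pair $(m, \textsc{AlgDec}_m(x_1, \ldots, x_m))$; the outer reduction witness projects onto the second coordinate, which by the defining identity $\textsc{AlgDec}(x_1, \ldots, x_m) = \textsc{AlgDec}_m(x_1, \ldots, x_m)$ is exactly the required output.

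For the converse $\coprod_{d \in \mathbb{N}} \textsc{AlgDec}_d \leqW \textsc{AlgDec}$, I would take an input $(d, (x_1, \ldots, x_d))$ sitting in the $d$-th summand, strip off the tag, and feed $(x_1, \ldots, x_d) \in \mathbb{R}^*$ to $\textsc{AlgDec}$. Again the defining identity gives $\textsc{AlgDec}_d(x_1, \ldots, x_d) \in [0,1]$ back; since the outer reduction witness still has access to the original input via the $\langle \id, GH \rangle$ construction in Definition \ref{def:weihrauch}, it can recover $d$ and reattach the tag to form a valid output of the coproduct.

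There is no real obstacle: the statement merely records that regrouping the polynomial-evaluation problem by arity—and hence tagging the output with the dimension—has no effect on the Weihrauch degree. This justifies treating $\textsc{AlgDec}$ henceforth as the disjoint union of the $\textsc{AlgDec}_d$, which is the form used in the subsequent analysis of strongly analytic machines.
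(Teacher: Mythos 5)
Your proof is correct and is exactly the routine bookkeeping that the paper deems too obvious to spell out (the statement is labelled an Observation and given no proof). The only ingredient is that the arity tag is computably available from a name in $\mathbb{R}^* = \coprod_n \mathbb{R}^n$ and, for the converse direction, that the $\langle \id, GH\rangle$ form of a Weihrauch reduction lets the outer witness reattach the tag; you use both, so the argument is complete.
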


\begin{proposition}\label{prop: algdec1 equiv sort}
$\textsc{AlgDec}_1 \equivW \Sort$
\begin{proof}
Let $\mathfrak{a}$ be an effective enumeration of the algebraic numbers in $\mathbb{R}$. We understand this to mean that an index $n$ of an algebraic number $a_n$ encodes the minimal polynomial of $a_n$, together with some information about which root (e.g.~ordered by $<$) $a_n$ is of its minimal polynomial. By Theorem \ref{theo:sorttype} we have that $\Sort \equivW \textsc{Type}_{\mathfrak{a}}$, thus we only need to show $\textsc{AlgDec}_1 \equivW \textsc{Type}_\mathfrak{a}$.

For $\textsc{AlgDec}_1 \leqW \textsc{Type}_\mathfrak{a}$ we show that for a given rational polynomial $P$ the predicate $P(x) = 0$ is decidable relative to $\textsc{Type}_\mathfrak{a}(x)$. Given a non-zero rational polynomial $P$, we verify in parallel if $P(x) \neq 0$ and if $\textsc{Type}_\mathfrak{a}(x) \neq 0$. Clearly, one of the searches has to terminate. If the second search terminates, it yields the minimal polynomial of $x$. Now we can decide if $P(x) = 0$ by deciding if the minimal polynomial divides $P$.

For $\textsc{Type}_{\mathfrak{a}} \leqW \textsc{AlgDec}_1$, observe that from any non-zero rational polynomial $P$ and a real number $x \in \mathbb{R}$ with $P(x) = 0$, we can compute the minimal polynomial of $x$ and determine the  position of $x$ in the list of its roots.
\end{proof}
\end{proposition}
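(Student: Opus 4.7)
My plan is to invoke Theorem \ref{theo:sorttype}, which gives $\Sort \equivW \textsc{Type}_{\mathfrak{a}}$ for any computable, infinite, repetition-free and dense sequence $\mathfrak{a}$ in $\mathbb{R}$. Since the real algebraic numbers form such a sequence and can be effectively enumerated (each algebraic $a$ being coded by its minimal polynomial together with an index selecting one of the roots, e.g.\ by their order under $<$), it suffices to establish $\textsc{AlgDec}_1 \equivW \textsc{Type}_{\mathfrak{a}}$ for this specific $\mathfrak{a}$. The two directions then reduce to standard effective facts from computable algebra on $\mathbb{Q}[X]$.

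For the direction $\textsc{AlgDec}_1 \leqW \textsc{Type}_{\mathfrak{a}}$, the key point is that from $\textsc{Type}_{\mathfrak{a}}(x)$ we can decide, for each fixed $P_n \in \mathbb{Q}[X]$, whether $P_n(x) = 0$. I would run two searches in parallel on input $x$ and $\textsc{Type}_{\mathfrak{a}}(x)$: one that tries to certify $P_n(x) \neq 0$ (possible by Type-2 computability, since $\neq 0$ is open on $\mathbb{R}$), and one that waits for $\textsc{Type}_{\mathfrak{a}}(x)$ to reveal a positive value $2^{-k}$. If the first search halts, output the bit $0$ for position $n$. If the second halts, we have recovered the index $k$ and hence the minimal polynomial $\mu$ of $x$, and $P_n(x) = 0$ iff $\mu \mid P_n$, which is decidable in $\mathbb{Q}[X]$. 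Combining all coordinates $n$ in parallel yields $\textsc{AlgDec}_1(x)$ as an element of $\uint$.

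For the direction $\textsc{Type}_{\mathfrak{a}} \leqW \textsc{AlgDec}_1$, given $x$ and $\textsc{AlgDec}_1(x)$, I first determine, by reading the binary-like expansion, whether any bit is set. If not, then $x$ is transcendental and $\textsc{Type}_{\mathfrak{a}}(x) = 0$. Otherwise, I wait until a bit at some position $n$ is confirmed to be $1$, i.e.\ until $\textsc{AlgDec}_1(x)$ is known to precision better than $2^{-2n-2}$ and lies above the appropriate threshold. This yields \emph{some} rational $P_n$ with $P_n(x) = 0$, and by factoring $P_n$ into irreducibles in $\mathbb{Q}[X]$ and testing each irreducible factor $\mu_i$ via the same mechanism (i.e.\ reading the bit of $\textsc{AlgDec}_1(x)$ at the index coding $\mu_i$), I extract the minimal polynomial $\mu$ of $x$. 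With $\mu$ known I can separate its finitely many real roots by disjoint rational intervals, and an approximation of $x$ of sufficient quality pinpoints which root $x$ is. This produces the index of $x$ in $\mathfrak{a}$, hence a name of $\textsc{Type}_{\mathfrak{a}}(x)$.

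The only subtle step is the second direction, specifically the extraction of the minimal polynomial: one must avoid the trap of outputting an arbitrary vanishing $P_n$ (which is not unique and whose index depends on the computation order). Working up to irreducible factorisation in $\mathbb{Q}[X]$ resolves this uniformly. Once $\mu$ is in hand, root separation and the identification of $x$ among the real roots of $\mu$ are entirely routine, so the only real work is arranging these subroutines in the right order; everything else follows from Theorem \ref{theo:sorttype}.
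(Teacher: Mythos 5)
Your proof follows the same route as the paper's: invoke Theorem~\ref{theo:sorttype} with $\mathfrak{a}$ the effective enumeration of real algebraic numbers (indices encoding the minimal polynomial together with the root position), and reduce both directions to facts from effective algebra on $\mathbb{Q}[X]$. The first direction (parallel search ``$P_n(x)\neq 0$'' versus ``$\textsc{Type}_\mathfrak{a}(x)\neq 0$'', then divisibility of $P_n$ by the minimal polynomial) is word-for-word the paper's argument, and the second direction likewise rests on finding some $P_n$ with $P_n(x)=0$, extracting the minimal polynomial, and locating $x$ among its real roots by rational root isolation.

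The only place you deviate, and where a bit of tidying would help, is in the second direction. Once some non-zero $P_n$ with $P_n(x)=0$ is found, you propose to determine the irreducible factor that vanishes at $x$ by querying further bits of $\textsc{AlgDec}_1(x)$; this works, but it is not needed. Since the irreducible factors of $P_n$ are pairwise coprime, their real roots are pairwise distinct and can be separated by rational intervals computably; a sufficiently good approximation of $x$ then pinpoints the unique root of $P_n$ equal to $x$, and the irreducible factor owning that root is the minimal polynomial. This is what the paper's one-line remark is gesturing at, and it keeps the argument self-contained in $x$ and a single vanishing $P_n$. Also, a small point in the first direction: if the enumeration $(P_n)_n$ contains the zero polynomial, your parallel search does not terminate for it on transcendental $x$ (neither ``$0\neq 0$'' nor ``$\textsc{Type}_\mathfrak{a}(x)\neq 0$'' can be verified), so the corresponding bit must be set to $1$ outright; the paper avoids this by explicitly restricting to non-zero $P$. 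Neither of these is a substantive gap, and the overall structure and key ideas match the paper's proof.
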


\begin{theorem}\label{theo: algdec}
$\textsc{AlgDec}_d \equivW \textsc{AlgDec}_1^d$
\end{theorem}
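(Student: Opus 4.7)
For the reduction $\textsc{AlgDec}_1^d \leqW \textsc{AlgDec}_d$, the plan is to feed the input tuple $(x_1, \ldots, x_d) \in \mathbb{R}^d$ directly to $\textsc{AlgDec}_d$. The resulting real in $\uint$ encodes, for each $n$, whether the $d$-variate rational polynomial $P_{n,d}$ vanishes at $(x_1, \ldots, x_d)$. Since every univariate rational polynomial in $X_i$ can be regarded as a $d$-variate polynomial depending only on $X_i$, the relevant bits are all present in the output, and from them one can computably extract $\textsc{AlgDec}_1(x_i)$ for each $i$. Bundling these $d$ outputs yields the required realizer.

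For the converse $\textsc{AlgDec}_d \leqW \textsc{AlgDec}_1^d$, I plan to argue by induction on $d$, with the base case $d=1$ being immediate from Proposition \ref{prop: algdec1 equiv sort}. The inductive step requires constructing, from $(x_1, \ldots, x_d)$, a tuple $(y_1, \ldots, y_d)$ of auxiliary reals (computable in the input) such that the individual algebraic types $\textsc{AlgDec}_1(y_i)$ together with both tuples determine the joint algebraic type $\textsc{AlgDec}_d(x_1, \ldots, x_d)$. The motivating picture is an effective primitive element theorem for the finitely generated field $K = \mathbb{Q}(x_1, \ldots, x_d)$: if $K$ has transcendence degree $r$ over $\mathbb{Q}$, one can take $\mathbb{Q}$-linear combinations $y_1, \ldots, y_r$ of the $x_j$'s forming a transcendence basis, together with a further combination $\theta$ that is primitive over $\mathbb{Q}(y_1, \ldots, y_r)$, so that each $x_i$ is a rational function in $y_1, \ldots, y_r, \theta$. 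The ideal of relations at $(x_1, \ldots, x_d)$ is then determined by the minimal polynomial of $\theta$ over $\mathbb{Q}(y_1, \ldots, y_r)$ together with these rational expressions.

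The main obstacle is that the reduction must be non-adaptive: the tuple $(y_1, \ldots, y_d)$ has to be committed before observing the outputs of $\textsc{AlgDec}_1^d$, yet the transcendence degree $r$ and the correct choice of linear combinations depend on the algebraic structure of the input. Moreover, $\textsc{AlgDec}_1(\theta)$ reveals only the minimal polynomial of $\theta$ over $\mathbb{Q}$, whereas the information actually relevant for the joint algebraic type is the minimal polynomial of $\theta$ over $\mathbb{Q}(y_1, \ldots, y_r)$. The strategy for overcoming this is to choose the $y_i$'s via a single uniformly ``generic'' construction — for instance $y_i = \sum_{j=1}^d c_{ij} x_j$ for a fixed suitably generic matrix $(c_{ij})$ with algebraically independent entries — so that the transcendence basis and primitive element are realised simultaneously for \emph{any} value of $r$, and to reconstruct the relative minimal polynomial from the combination of the known tuple $(y_1, \ldots, y_d)$, the original tuple $(x_1, \ldots, x_d)$, and the $d$ individual algebraic types delivered by $\textsc{AlgDec}_1^d$. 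Verifying that this uniform construction indeed determines the full prime ideal of relations — in particular that the relative minimal polynomial of $\theta$ can be read off from the absolute minimal polynomials of the $y_i$ together with the rational-function expressions relating the two tuples — will be the technical heart of the argument.
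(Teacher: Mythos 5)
Your easy direction $\textsc{AlgDec}_1^d \leqW \textsc{AlgDec}_d$ is exactly the paper's (indeed, the paper dismisses it as trivial). The hard direction, however, has a genuine gap, and the paper takes a fundamentally different route.

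The primitive-element / transcendence-basis strategy via a fixed ``generic'' matrix $(c_{ij})$ cannot work, for a concrete reason: the information that needs to be extracted is invisible to $\textsc{AlgDec}_1$ applied to linear (or indeed any fixed continuous) combinations of the input. Take $a,b$ algebraically independent transcendentals and compare $x = (a,a)$ with $x' = (a,b)$. Then $I(x) = (X_1 - X_2)$ while $I(x') = (0)$, so $\textsc{AlgDec}_2(x) \neq \textsc{AlgDec}_2(x')$. But for generic fixed $(c_{ij})$ (rational, algebraic, or a fixed computable transcendental matrix), the combinations $y_i = c_{i1}x_1 + c_{i2}x_2$ evaluated at $x$ and at $x'$ are all transcendental over $\mathbb{Q}$, so $\textsc{AlgDec}_1(y_i) = 0$ in both cases --- the oracle returns nothing. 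The outer reduction witness $K$ is then left to decide whether $x_1 = x_2$ from names of the $x_i$ alone, which is discontinuous. The same obstruction reappears for any $d$ non-adaptive queries: the relation $P(x)=0$ for an arbitrary irreducible $P$ cannot be reduced to finitely many pre-committed univariate membership questions. You flagged this obstacle yourself; the point is that the genericity trick does not resolve it but rather runs directly into it.

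The paper sidesteps the need to choose auxiliary reals $y_i$ altogether. It reduces $\textsc{AlgDec}_d$ to $\Sort^d$ (then invokes Proposition~\ref{prop: algdec1 equiv sort}), working directly with the prime ideal $I(x) = \{f \in \QP \mid f(x)=0\}$. For each $h \in \{1,\dots,d\}$ it effectively enumerates all prime ideals in $\QP$ of height $\geq h$ (using decidability of primality and computability of height, Fact~\ref{fact: computability of ideal operations}), and uses one $\Sort$ instance to identify the first enumerated prime ideal contained in $I(x)$ --- containment is semi-refutable from a name of $x$ by evaluating generators. Because $\QP$ has Krull dimension $d$, one of these $d$ searches, namely the one at $h = \height(I(x))$, eventually finds a prime ideal of maximal possible height inside $I(x)$, which must equal $I(x)$. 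Summing the ideals found yields a computable, eventually-correct approximation to $I(x)$ from below, from which membership in $I(x)$ is decidable. So the relevant bound on the number of $\Sort$-uses is the Krull dimension, not a transcendence degree, and the search is over ideals rather than over points. Your induction-on-$d$ framing also isn't used: the argument is carried out uniformly in the height parameter $h$ rather than by peeling off one variable at a time.
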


\begin{corollary}
\label{corr:algdacsortstar}
$\textsc{AlgDec} \equivW \Sort^*$
\end{corollary}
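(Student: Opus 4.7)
The corollary is a direct consequence of the three preceding results: Observation (on $\textsc{AlgDec} \equivW \coprod_d \textsc{AlgDec}_d$), Theorem \ref{theo: algdec} (on $\textsc{AlgDec}_d \equivW \textsc{AlgDec}_1^d$), and Proposition \ref{prop: algdec1 equiv sort} (on $\textsc{AlgDec}_1 \equivW \Sort$). Chaining these yields
\[
\textsc{AlgDec} \;\equivW\; \coprod_{d \in \mathbb{N}} \textsc{AlgDec}_d \;\equivW\; \coprod_{d \in \mathbb{N}} \textsc{AlgDec}_1^d \;\equivW\; \coprod_{d \in \mathbb{N}} \Sort^d,
\]
and by the definition $f^*(n, x) := f^n(x)$ the rightmost coproduct is precisely a representative of $\Sort^*$. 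So there is essentially nothing new to prove beyond invoking the already-established equivalences.

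\textbf{Key steps.} First, I would record that $\Sort^* \equivW \coprod_{d \in \mathbb{N}} \Sort^d$: given a pair $(d, (p_1, \ldots, p_d))$ in the coproduct, apply $\Sort^d$ (which is a component of the input format of $\Sort^*$), and conversely given $(n, x)$ for $\Sort^*$ just read off the discrete $n$ and route to the $n$-th summand. This is immediate from the definition $f^{n+1} := f \times f^n$ together with $f^*(n,x) := f^n(x)$. Second, I would invoke the observation to replace $\textsc{AlgDec}$ by $\coprod_d \textsc{AlgDec}_d$. Third, I would apply Theorem \ref{theo: algdec} summand-wise to pass to $\coprod_d \textsc{AlgDec}_1^d$, and finally Proposition \ref{prop: algdec1 equiv sort} summand-wise (i.e.\ in parallel inside each $d$-fold product) to arrive at $\coprod_d \Sort^d \equivW \Sort^*$.

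\textbf{Where the care is needed.} The only genuine obstacle is verifying that the reductions in Theorem \ref{theo: algdec} and Proposition \ref{prop: algdec1 equiv sort} can be assembled uniformly across the coproduct indexed by $d$. For Proposition \ref{prop: algdec1 equiv sort} this is clear from its proof: the reduction is given by a single algorithm operating relative to the enumeration $\mathfrak{a}$ of algebraic reals, so taking the $d$-fold parallel version yields a reduction $\textsc{AlgDec}_1^d \equivW \Sort^d$ computable uniformly in $d$. For Theorem \ref{theo: algdec}, one needs that the witnesses $K_d, H_d$ realizing $\textsc{AlgDec}_d \equivW \textsc{AlgDec}_1^d$ are computable from $d$; inspection of its proof provides this. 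Once uniformity is in hand, gluing across the discrete index $d \in \mathbb{N}$ preserves Weihrauch equivalence by the standard argument for coproducts, and the corollary follows.
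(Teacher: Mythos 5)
Your proof is correct and matches the paper's intent exactly: the paper states this corollary without a separate proof, leaving precisely this chaining of the preceding Observation, Theorem \ref{theo: algdec}, and Proposition \ref{prop: algdec1 equiv sort} to the reader. Your care about uniformity across the coproduct index $d$ is the right thing to check and does hold for the cited proofs.
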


In order to prove Theorem \ref{theo: algdec} we need to recall a few facts from (computational) commutative algebra. Let $\IQ$ denote the represented space of ideals in $\QP$, where an ideal is represented by some finite set of generators (that this is a representation follows from Hilbert's basis theorem). Recall that the \emph{height} $\height(P)$ of a prime ideal $P$ is the length $n$ of the longest chain of strict inclusions
\[P = P_n \supsetneq P_{n - 1} \supsetneq \dots \supsetneq P_1 \supsetneq P_0,  \]
where the $P_i$'s are prime ideals. The \emph{Krull dimension} of a ring $R$ is the supremum of the heights of all prime ideals in $R$. If $K$ is a field, then the polynomial ring $K[X_1,\dots, X_d]$ has Krull dimension $d$. We will need the following well-known facts from computer algebra (see e.g. \cite{Singular, Kalkbrener, PrimeIdeals, Buchberger1, Buchberger2})

\begin{fact}\label{fact: computability of ideal operations}\hfill
\begin{enumerate}
\item Membership of a polynomial $f \in \QP$ in an ideal $I \in \IQ$ is decidable.
\item Primality of a given ideal $I \in \IQ$ is decidable.
\item The height of a given prime ideal $P \in \IQ$ is computable.\qed
\end{enumerate}
\end{fact}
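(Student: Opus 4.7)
The plan is to reduce each of the three claims to effective Gröbner basis computation over $\mathbb{Q}$, which is provided by Buchberger's algorithm; its termination follows from Dickson's lemma (the well-foundedness of any admissible monomial order on $\QP$), and each arithmetic step is effective because $\mathbb{Q}$ is a computable field. I would treat this as a black box, citing the references provided, and then describe how each of (1)--(3) reduces to it.

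For claim (1), I would compute a Gröbner basis $G$ of $I$ and check whether the normal form of $f$ under multivariate polynomial division by $G$ is the zero polynomial; if so then $f \in I$, otherwise $f \notin I$. For claim (3), I would first invoke the dimension formula for polynomial rings over a field: because $\QP$ is catenary and equidimensional, every prime ideal $P$ satisfies $\height(P) + \dim(\QP/P) = d$, so it suffices to compute the Krull dimension of $\QP/P$. This dimension can be read off any Gröbner basis of $P$: it equals the maximal size of a subset $S \subseteq \{X_1,\dots,X_d\}$ such that no monomial involving only the variables in $S$ appears as a leading monomial of the basis. This is a purely combinatorial property of a finite set of leading monomials, hence computable.

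The main obstacle will be claim (2). Here the plan is to invoke the Gianni--Trager--Zacharias primary decomposition algorithm, which given generators of $I$ returns a finite list of primary ideals $Q_1,\dots,Q_r$ with $I = \bigcap_{i} Q_i$, together with generators of the associated primes $P_i = \sqrt{Q_i}$; using (1) to compare ideals by mutual membership of generating sets, $I$ is prime iff $r = 1$ and $Q_1 = P_1$. The hard part is the primary decomposition itself: its correctness rests on a reduction to the zero-dimensional case via generic linear changes of coordinates together with effective univariate factorisation over $\mathbb{Q}$. Each individual ingredient is effective, but verifying that the overall construction terminates and returns a correct decomposition is considerably more involved than the Gröbner basis machinery used in (1) and (3), which is why I expect this to be the main source of technical difficulty.
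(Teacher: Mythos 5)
The paper does not actually prove this Fact: it is stated as folklore and discharged by a citation to the standard computer-algebra literature (Gröbner bases, dimension computation, primality/primary decomposition). Your sketch is a correct reconstruction of exactly those algorithms, so there is nothing to fault substantively.

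Two small remarks. For~(3), your reduction is sound: $\QP$ is a polynomial ring over a field, hence Cohen--Macaulay and therefore catenary and equidimensional, so $\height(P) + \dim(\QP/P) = d$ holds for every prime $P$, and the Kredel--Weispfenning-style combinatorial reading of $\dim(\QP/P)$ off the leading monomials of a Gröbner basis is exactly right. For~(2), full primary decomposition (Gianni--Trager--Zacharias or Eisenbud--Huneke--Vasconcelos) certainly suffices and is what the paper's references essentially appeal to, but since only a yes/no answer is required one can usually short-circuit some of the machinery: first test $I = \sqrt{I}$ (radical computation is a simpler subroutine), then test whether the equidimensional hull of $I$ equals $I$ and is irreducible, which avoids computing embedded components. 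This does not change the asymptotic difficulty you correctly flagged — primality testing genuinely needs univariate factorisation over $\mathbb{Q}$ and a reduction to dimension zero — but it is worth knowing that the decision problem is somewhat lighter than producing the whole decomposition.
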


\begin{proof}[Proof of Theorem \ref{theo: algdec}]
The direction $\textsc{AlgDec}_1^d \leqW \textsc{AlgDec}_d$ is trivial. For the converse direction we prove $\textsc{AlgDec}_d \leq_W \Sort^d$ and apply Proposition \ref{prop: algdec1 equiv sort}. For a given point $x \in \mathbb{R}^d$, consider the prime ideal $I(x) = \{f \in \QP \mid f(x) = 0 \}$. Our goal is to compute the characteristic function of $I(x)$. We will use $\Sort^d$ to approximate $I(x)$ ``from below'' in the following sense: we will compute a sequence $(P_n)_n$ of prime ideals with $P_n \subseteq I(x)$ for all $n$ and $P_n = I(x)$ for sufficiently large $n$. Using the sequence $(P_n)_n$ we can verify if $f \in I(x)$ by searching for an $n \in \mathbb{N}$ such that $f \in P_n$, using Fact \ref{fact: computability of ideal operations} (1). Conversely, we can verify if $f \notin I(x)$ by verifying if $f(x) \neq 0$.

It remains to construct the sequence $(P_n)_n$. By Fact \ref{fact: computability of ideal operations} we can compute for each $h \in [1;d]$ a sequence $(P_{h,n})_n$ containing all prime ideals in $\QP$ of height $\geq h$. For each of these sequences we use an instance of $\Sort$ to compute a new sequence $(P'_{h,n})_n$ with $P'_{h,n} \subseteq I(x)$, proceeding like in the proof of the reduction $\textrm{Type}_{\mathfrak{a}} \leq_W \Sort$ in Theorem \ref{theo:sorttype}: start with $P_{h,0}$ and try to prove that $P_{h,0} \not \subseteq I(x)$ by searching for a generator $f$ of $P_{h,0}$ with $f(x) \neq 0$. At the same time, write $1$'s to the input of $\Sort$. If $P_{h,0} \not \subseteq I(x)$ is proved, write a $0$ and continue with $P_{h,1}$. Apply $\Sort$ to the resulting sequence to obtain a new sequence $p \in \Cantor$. If $p = 0^N1^\omega$, put
\[P'_{h,n} = \begin{cases} \langle 0 \rangle &\text{if }n \leq N, \\
						   P_{h,N} &\text{otherwise.} \end{cases} \]
If $p = 0^{\omega}$, put $P'_{h,n} = \langle 0 \rangle$ for all $n \in \mathbb{N}$.						

By construction, each of the ideals $P'_{h,n}$ is contained in $I(x)$. If $h \leq \height(I(x))$, then there exists $n \in \mathbb{N}$ such that $P'_{h,n}$ has height $\geq h$. In particular, if $h = \height(I(x))$, then there exists $P'_{h,n} \subseteq I(x)$ with $\height(P'_{h,n}) \geq \height(I(x))$. Since $P'_{h,n}$ is prime it follows that $P'_{h,n} = I(x)$. Since $\QP$ has Krull dimension $d$, the height $\height(I(x))$ is a number between $0$ and $d$, and if $\height(I(x)) = 0$, then $I(x) = \langle 0 \rangle$, so that $P'_{h,n} = I(x)$ for all $n \in \mathbb{N}$, $h \in [1;d]$. In any case, there always exist $h$ and $n$ such that $P'_{h,n} = I(x)$. Using standard coding tricks we may write the double-sequence $(P'_{h,n})_{h \in [1;d], n \in \mathbb{N}}$ as a single sequence $(J_n)_n$. The $J_n$'s are a sequence of prime ideals contained in $I(x)$, at least one of which is equal to $I(x)$. Now put $P_n = \sum_{k = 0}^n J_k$. Then $P_n \subseteq I(x)$ for all $n$ and $P_n = I(x)$ for sufficiently large $n$.
\end{proof}

\cite[Question 3.9]{zieglergaertner} asks whether there is a set $A \subseteq \mathbb{R}$ which is \emph{weakly semidecidable}, yet not a $\Pi^0_3$-set. They define a set to be weakly semidecidable, if it is BSS many-one reducible to the boundedness problem for analytic machines. This in turn means that there is a BSS-computable function $H : \mathbb{R} \to \mathbb{R}^*$, and an analytic machine that on input $H(x)$ for $x \in A$ computes some bounded sequence $(a_i) \in \mathbb{R}^\mathbb{N}$, and on input $H(x)$ for $x \notin A$ computes some unbounded sequence $(a_i) \in \mathbb{R}^\mathbb{N}$. We can give a negative answer:

 \begin{proposition}
 Every weakly semidecidable set is $\Delta^0_3$.
 \begin{proof}
Given some weakly semidecidable set $A$, we want to provide an upper bound on the Weihrauch degree of $\chi_A : \mathbb{R} \to \{0,1\}$. By combining Corollary \ref{corr:bss}, Observation \ref{obs:analytic} and Corollary \ref{corr:algdacsortstar}, we see that there is a function $a : \mathbb{R} \to \mathbb{R}^\mathbb{N}$ with $a \leqW \Sort^* \star \C_\mathbb{N}$, such that $a(x)$ is bounded iff $x \in A$.

Now given $(a_i) \in \mathbb{R}^\mathbb{N}$, we can compute some $p \in \Cantor$ such that $p$ contains at least $n$ $0$s iff $\exists k \ |a_k| > n$. Thus, using $\isInfinite$, we can detect whether a real sequences is bounded or unbounded.

Put together, we conclude that $\chi_A \leqW \isInfinite \star \Sort^* \star \C_\mathbb{N}$. Since $\isInfinite \leqW \lim \star \lim$, we have $\chi_A \leqW \lim \star \lim \star \Sort^* \star \C_\mathbb{N}$. By Corollary \ref{corr:low2} the righthand side is reducible to
$\lim \star \lim \star \C_\mathbb{N}$. With Lemma \ref{lem:paulybrattka} we conclude
$\chi_A \leqW \lim \star \lim$. As $\lim \star \lim$ is $\Sigma^0_3$-measurable, the claim follows.
 \end{proof}
 \end{proposition}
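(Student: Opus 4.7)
The plan is to chase the computational pipeline producing the ``witness'' sequence for weak semidecidability through the Weihrauch lattice, applying the absorption result from Corollary~\ref{corr:low2} as the crucial simplification step, and then invoke Borel measurability preservation to draw the topological conclusion.

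First, I would unpack the definition. Given a weakly semidecidable $A$, there is a BSS-computable $H:\mathbb{R}\to\mathbb{R}^*$ and a strongly analytic machine producing a real sequence $(a_i)$ from $H(x)$ whose boundedness equals membership in $A$. By Corollary~\ref{corr:bss} the BSS-step $H$ is Weihrauch reducible to $\C_\mathbb{N}$, and by Observation~\ref{obs:analytic} combined with Corollary~\ref{corr:algdacsortstar}, the analytic-machine step is reducible to $\Sort^*$. Composing these yields a map $a:\mathbb{R}\to\mathbb{R}^\mathbb{N}$, with $a\leqW \Sort^*\star \C_\mathbb{N}$, such that $a(x)$ is bounded iff $x\in A$.

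Next, I would reduce ``test boundedness of a real sequence'' to $\isInfinite$. Given $(a_i)\in\mathbb{R}^\mathbb{N}$, one can uniformly compute some $p\in\Cantor$ that contains at least $n$ zeroes iff some $|a_k|$ exceeds $n$; then $p$ has infinitely many zeroes iff the sequence is unbounded, so boundedness is decided by applying $\isInfinite$. Combining with the previous step gives $\chi_A \leqW \isInfinite \star \Sort^* \star \C_\mathbb{N}$, and using $\isInfinite\leqW \lim\star\lim$ (from the identification $\widehat{\isInfinite}\equivW \lim\star\lim$ recalled before Proposition~\ref{prop:compparr}) produces
\[
\chi_A \;\leqW\; \lim\star\lim\star \Sort^*\star \C_\mathbb{N}.
\]

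The main move is now to absorb $\Sort^*$. Iterating Corollary~\ref{corr:low2} (as already done to derive $\coprod_n \Sort^{(n)}\leqW \lim\star\lim$) shows $\lim\star\lim\star \Sort^* \equivW \lim\star\lim$, and then Lemma~\ref{lem:paulybrattka}(2) gives $\lim\star\C_\mathbb{N}\equivW\lim$, so $\chi_A\leqW \lim\star\lim$. Since $\lim\star\lim$ is $\Sigma^0_3$-measurable and Borel measurability is preserved under Weihrauch reductions \cite{brattka}, the characteristic function $\chi_A$ is $\Sigma^0_3$-measurable into $\{0,1\}$, so $A$ is $\Delta^0_3$.

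The only delicate point is bookkeeping on the order of the pieces in the $\star$-composition so that the absorption really applies: $\Sort^*$ sits between two $\lim$'s after using $\isInfinite\leqW \lim\star\lim$ and the associativity of $\star$, and the $\C_\mathbb{N}$ at the far right is absorbed by the leftmost $\lim$ only after $\Sort^*$ has been swallowed. I expect no further obstacles, since each ingredient is already proved in the preceding sections.
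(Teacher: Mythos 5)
Your proposal is correct and follows the paper's own argument essentially step for step: you build the same pipeline $\chi_A \leqW \isInfinite \star \Sort^* \star \C_\mathbb{N}$, pass to $\lim \star \lim$ via the same $\isInfinite \leqW \lim \star \lim$ bound, absorb $\Sort^*$ through the uniform iteration of Corollary~\ref{corr:low2}, absorb $\C_\mathbb{N}$ by Lemma~\ref{lem:paulybrattka}, and conclude via preservation of Borel measurability. Your additional remark on bookkeeping the order of the $\star$-factors and the explicit note that $\Sort^*$ must be swallowed before $\C_\mathbb{N}$ matches what the paper leaves implicit, so there is nothing substantively different to report.
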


\section{Comparing the SCI in the two models}
Following \cite{hansen,hansen2} we shall define the \emph{solvability complexity index} over the BSS-model and over the TTE-model, and then use the results on Weihrauch degrees obtained in the preceding sections to bound their difference.

\begin{definition}
An $n$-tower for a function $f : \mathbb{R}^* \to \mathbb{R}^*$ is a function $F : \mathbb{N}^n \times \mathbb{R}^* \to \mathbb{R}^*$ such that $f(x_1,\ldots,x_m) = \lim_{i_1 \to \infty} \ldots \lim_{i_n \to \infty} F(i_1,\ldots,i_n,x_1,\ldots,x_m)$. For some function $f : \mathbb{R}^* \to \mathbb{R}^*$, let $\textrm{SCI}_{\textrm{BSS}}(f)$ be the least $n$ such that there a BSS-computable $n$-tower for $f$. Let $\textrm{SCI}_{\textrm{TTE}}(f)$ be the least $n$ such that there a computable (i.e.~TTE-computable) $n$-tower for $f$.
\end{definition}

\begin{observation}
$\textrm{SCI}_{\textrm{TTE}}(f) \leq n$ iff $f \leqW \lim^{(n)}$.
\end{observation}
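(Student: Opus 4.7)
The statement is an "if and only if", so I would prove the two implications separately, mirroring the structure of the analogous fact at the level of $\lim$ itself.

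For the forward direction ($\textrm{SCI}_{\textrm{TTE}}(f) \le n \Rightarrow f \leqW \lim^{(n)}$), assume $F : \mathbb{N}^n \times \mathbb{R}^* \to \mathbb{R}^*$ is a computable $n$-tower for $f$. I would build the Weihrauch reduction by taking the inner reduction witness $H$ to compute, from a $\rho$-name $p$ of $\mathbf{x}$, a Baire-space encoding of the $n$-indexed family of $\rho$-names $\bigl(\rho^{-1}(F(i_1,\dots,i_n,\mathbf{x}))\bigr)_{(i_1,\dots,i_n) \in \mathbb{N}^n}$ in the format expected by the canonical representative of $\lim^{(n)}$. Applying $\lim^{(n)}$ returns the iterated Baire-space pointwise limit, which—by a routine speedup/interleaving argument on the defining approximation rate of $\rho$—is a $\rho$-name for $f(\mathbf{x})$. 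The outer reduction $K$ is essentially the identity on this output.

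For the backward direction ($f \leqW \lim^{(n)} \Rightarrow \textrm{SCI}_{\textrm{TTE}}(f) \le n$), fix computable witnesses $H, K$. The key observation is that $\lim^{(n)}$ possesses an obvious \emph{computable} $n$-tower at the Baire level: given a Baire encoding $q$ of an $n$-iterated sequence of Baire elements, there is a computable map $L_n : \mathbb{N}^n \times \Baire \to \Baire$ returning the "truncation at stage $(i_1,\dots,i_n)$", and its iterated pointwise limit over $i_1,\dots,i_n \to \infty$ coincides with $\lim^{(n)}(q)$. The natural candidate for the $n$-tower of $f$ is then $F(i_1,\dots,i_n,\mathbf{x}) := \rho\bigl(K(p, L_n(i_1,\dots,i_n, H(p)))\bigr)$.

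The main obstacle is that the Baire element $K(p, L_n(i_1,\dots,i_n, H(p)))$ need not be a valid $\rho$-name: the argument $L_n(\vec{i}, H(p))$ lies in the domain of the iterated limit operator only in the limit, so $K$ may produce nonsense on finite stages. I would handle this by replacing the naive $F$ with a "safe" variant: run $K$ within a computable budget (growing in $\vec{i}$), extract the longest initial segment of its output that parses as a valid prefix of a $\rho$-name, and pad the remainder by repeating the last valid rational approximation (or a fixed default if none has appeared). Because $K$ is continuous and $K(p, \lim^{(n)}(H(p)))$ is a genuine $\rho$-name for $f(\mathbf{x})$, Baire-space convergence guarantees that arbitrarily long prefixes of this name eventually appear in $K(p, L_n(\vec{i}, H(p)))$ as $\vec{i}$ runs through the iterated limit. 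Consequently the iterated real-number limit of the safe $F(i_1,\dots,i_n,\mathbf{x})$ equals $f(\mathbf{x})$, yielding the desired computable $n$-tower. I expect the bookkeeping for this padding—verifying that the extracted approximations remain $\rho$-names and that their iterated limit is exactly $f(\mathbf{x})$ in each component of $\mathbb{R}^*$—to be the only step requiring actual care; everything else is a direct unrolling of definitions.
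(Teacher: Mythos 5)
The paper leaves this Observation unproved, evidently treating it as folklore: it amounts to combining the known fact that $\lim^{(n)}$ is Weihrauch-complete for effectively $\Sigma^0_{n+1}$-measurable functions (\cite{brattka}) with an effective version of the classical Banach--Lebesgue--Hausdorff equivalence between Baire class $n$ and iterated-limit representability. Your attempt to argue it from first principles has the right overall shape, but it contains two genuine gaps, one in each direction, and both are exactly the points that the cited background machinery exists to handle.

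In the forward direction you propose to encode the family of $\rho$-names $\bigl(\rho^{-1}(F(\vec i,\mathbf x))\bigr)_{\vec i}$ and feed it directly to the iterated Baire-space pointwise limit representative of $\lim^{(n)}$. This does not typecheck: the reals $F(\vec i,\mathbf x)$ converge in $\mathbb R^*$, but a family of $\rho$-names for a real-convergent sequence need not stabilize coordinatewise in $\Baire$ (the indices of the approximating rationals may fluctuate forever), so the encoded element is in general \emph{not} in the domain of the iterated pointwise limit. The ``routine speedup/interleaving'' you invoke is not a cosmetic adjustment; it is the entire content of the (nontrivial but standard) reduction from $\lim_{\mathbb R}$ to $\lim$, namely a computable on-line search for a modulus of convergence. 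That reduction then has to be iterated through the $n$ levels, or one should simply invoke the known equivalence of $\lim^{(n)}$ with the $n$-fold iterated real-limit operator.

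In the backward direction the more serious problem is that your candidate $n$-tower $F(i_1,\dots,i_n,\mathbf x) := \rho\bigl(K(p, L_n(i_1,\dots,i_n, H(p)))\bigr)$ (safely padded) is defined in terms of a particular $\rho$-name $p$ of $\mathbf x$, and the ``safe'' value you extract at stage $\vec i$ depends on the bits of $p$ and on the simulation budget, not just on $\mathbf x$. Two different names of the same $\mathbf x$ will in general give different reals at the intermediate stages, so $F$ is not a well-defined function $\mathbb N^n\times\mathbb R^*\to\mathbb R^*$, which is what the definition of an $n$-tower requires. Your construction produces a realizer-level tower, not a point-level one, and passing from the former to the latter is the nontrivial half of the effective Banach--Lebesgue--Hausdorff theorem. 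In addition, the claim that ``Baire-space convergence guarantees that arbitrarily long prefixes eventually appear'' needs care for $n>1$: mere continuity of $K$ only commutes with a single limit; for the iterated limit one must check that the intermediate $\Baire$-limits exist and then apply continuity level by level. That step does go through, but it deserves to be spelled out rather than asserted. The transparent-cylinder property of $\lim$ (cited in the paper as \cite[Fact 5.5]{gherardi4} and used in the proof of Theorem \ref{theo:sci}) gives a cleaner handle on part of this, since it lets you replace the Weihrauch reduction by an honest equation $f = \lim\circ\cdots\circ\lim\circ H$; but it does not by itself resolve the name-independence problem.
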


\begin{theorem}
\label{theo:sci}
If $\textrm{SCI}_{\textrm{TTE}}(f) \geq 2$ or $\textrm{SCI}_{\textrm{BSS}}(f) \geq 2$, then $\textrm{SCI}_{\textrm{TTE}}(f) = \textrm{SCI}_{\textrm{BSS}}(f)$.
\begin{proof}
{\bf Assume $\textrm{SCI}_{\textrm{TTE}}(f) = n \geq 1$}. Let $F : \mathbb{N}^n \times \mathbb{R}^* \to \mathbb{R}^*$ be a computable $n$-tower for $f$. By the Stone-Weierstrass Approximation Theorem, we can approximate $F$ by rational polynomials on each hypercube, i.e.~there are rational multivariate polynomials $g_{i_1,\ldots,i_n}^k$ such that for $(x_1,\ldots,x_m) \in [-k,k]^m$ we find that $d(g_{i_1,\ldots,i_n}^k(x_1,\ldots,x_m), F(i_1,\ldots,i_n,x_1,\ldots,x_m)) < 2^{-k}$. As we can code a computable countable sequence of rational multivariate polynomials into a computable parameter, we find that $G : \mathbb{N}^n \times \mathbb{R}^* \to \mathbb{R}^*$ defined via $G(i_1,\ldots,i_n,x_1,\ldots,x_m) = g^{i_n}_{i_1,\ldots,i_n}(x_1,\ldots,x_m)$ is BSS-computable. Moreover, it is straight-forward to verify that $G$ also is an $n$-tower for $f$. Thus, $\textrm{SCI}_{\textrm{TTE}}(f) \geq \textrm{SCI}_{\textrm{BSS}}(f)$.

{\bf Assume $\textrm{SCI}_{\textrm{BSS}}(f) = n \geq 2$}. Let $F$ be a BSS-computable $n$-tower for $f$. We curry $F$ to obtain $G : \mathbb{R}^* \to (\mathbb{R}^*)^{\mathbb{N}^n}$, and notice that with the same reasoning as for strongly analytic functions in Section \ref{sec:thedecisionproblem}, we find that $G \leqW \Sort^*$. By assumption, we then have that $f \leqW \lim^{(n)} \star \Sort^*$. As $n \geq 2$, Corollary \ref{corr:low2} implies that already $f \leqW \lim^{(n)}$. By \cite[Fact 5.5]{gherardi4} $\lim$ is a \emph{transparent cylinder}, i.e.~it follows that there is a computable function $H$ such that $f = \lim \circ \ldots \circ \lim \circ H$. But this means that $H$ is a computable $n$-tower for $f$, i.e.~$\textrm{SCI}_{\textrm{BSS}}(f) \geq \textrm{SCI}_{\textrm{TTE}}(f)$.
\end{proof}
\end{theorem}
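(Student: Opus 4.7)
I would prove the equality by establishing the two inequalities separately, each under an appropriate hypothesis, and then argue that one of them always kicks in under the disjunctive assumption of the theorem.

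\textbf{Direction $\textrm{SCI}_{\textrm{BSS}}(f) \leq \textrm{SCI}_{\textrm{TTE}}(f)$, for any $n = \textrm{SCI}_{\textrm{TTE}}(f) \geq 1$.} Given a computable $n$-tower $F : \mathbb{N}^n \times \mathbb{R}^* \to \mathbb{R}^*$, I would apply the Stone--Weierstrass theorem uniformly on each hypercube to produce, for every $(\vec\imath, k)$, a rational multivariate polynomial $g^k_{\vec\imath}$ with $d(g^k_{\vec\imath}(\bar x), F(\vec\imath, \bar x)) < 2^{-k}$ for $\bar x \in [-k,k]^m$. Coding the countable sequence of coefficient data into a single real parameter, the function $G(\vec\imath, \bar x) := g^{i_n}_{\vec\imath}(\bar x)$ is BSS-computable. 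Because the $2^{-i_n}$-errors vanish on any compact box as $i_n \to \infty$, the innermost limit of $G$ agrees with that of $F$, and the successive outer limits coincide, so $G$ is a BSS-computable $n$-tower for $f$.

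\textbf{Direction $\textrm{SCI}_{\textrm{TTE}}(f) \leq \textrm{SCI}_{\textrm{BSS}}(f)$, for $n = \textrm{SCI}_{\textrm{BSS}}(f) \geq 2$.} Given a BSS-computable $n$-tower $F$, I would curry to obtain $G : \mathbb{R}^* \to (\mathbb{R}^*)^{\mathbb{N}^n}$ with $f = \lim \circ \cdots \circ \lim \circ G$. Mimicking Observation \ref{obs:analytic}, every equality- or order-test performed on input $\bar x$ is decidable from the algebraic type $\textsc{AlgDec}(\bar x)$, so $G \leqW \textsc{AlgDec} \equivW \Sort^*$ by Corollary \ref{corr:algdacsortstar}. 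Hence $f \leqW \lim^{(n)} \star \Sort^*$. The next step is the absorption: iterating Corollary \ref{corr:low2} across the finitely many $\Sort$-factors contained in $\Sort^*$, and reorganising via the associativity of $\star$, one obtains $\lim^{(n)} \star \Sort^* \equivW \lim^{(n)}$ as soon as $n \geq 2$. Finally, since $\lim$ is a transparent cylinder (\cite[Fact 5.5]{gherardi4}), the reduction $f \leqW \lim^{(n)}$ lifts to an actual iterated-limit factorisation $f = \lim \circ \cdots \circ \lim \circ H$ with $H$ computable, exhibiting a computable $n$-tower.

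\textbf{Combining under the hypothesis.} If $\textrm{SCI}_{\textrm{BSS}}(f) \geq 2$, both directions apply and yield the equality directly. If only $\textrm{SCI}_{\textrm{TTE}}(f) \geq 2$ is assumed, the first step gives $\textrm{SCI}_{\textrm{BSS}}(f) \leq \textrm{SCI}_{\textrm{TTE}}(f)$, so BSS-SCI is finite; and the residual low-SCI cases are excluded because BSS-SCI $= 0$ forces $f \leqW \C_\mathbb{N} \leqW \lim$ (by Corollary \ref{corr:bss} and Lemma \ref{lem:paulybrattka}), and the BSS-SCI $= 1$ case similarly collapses via $\lim \star \Sort^* \equivW \lim$ from the absorption discussion, each contradicting $\textrm{SCI}_{\textrm{TTE}}(f) \geq 2$. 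Hence BSS-SCI $\geq 2$ in either branch of the hypothesis, and the previous paragraph closes the argument.

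\textbf{Main obstacle.} The crux is the absorption step $\lim^{(n)} \star \Sort^* \equivW \lim^{(n)}$: Corollary \ref{corr:low2} is stated for a single $\Sort$ swallowed by two layers of $\lim$, so one must carefully iterate it through the finite parallel structure of $\Sort^*$ while keeping the intervening $\lim$-layers intact via associativity of $\star$. The remaining ingredients (Stone--Weierstrass for the Type-2-to-BSS direction, the $\textsc{AlgDec}$-simulation from Observation \ref{obs:analytic}, and the transparency of $\lim$) are reused essentially verbatim from earlier in the paper.
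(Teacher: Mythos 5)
Your two main directions reproduce the paper's argument essentially verbatim: Stone--Weierstrass polynomial approximation on hypercubes for $\textrm{SCI}_{\textrm{BSS}}(f) \leq \textrm{SCI}_{\textrm{TTE}}(f)$, and the $\textsc{AlgDec}$-simulation followed by the absorption $\lim^{(n)} \star \Sort^* \equivW \lim^{(n)}$ and transparency of $\lim$ for the other direction. The iteration you flag as the ``main obstacle'' is handled correctly: from Corollary \ref{corr:low2} and $\Sort^k \leqW \Sort^{(k)}$ one gets $\lim^{(2)} \star \Sort^* \equivW \lim^{(2)}$ uniformly, and associativity of $\star$ then gives the claim for all $n \geq 2$. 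Up to here you and the paper agree.

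The genuine gap is in your combining paragraph, specifically the claim ``the BSS-SCI $= 1$ case similarly collapses via $\lim \star \Sort^* \equivW \lim$''. That equivalence is \emph{false}. We have $\isFinite_\mathbb{S} \leqW \Sort$ (Proposition \ref{prop:isinfiniteetc}(3)), $\left(\id : \mathbb{S} \to \{0,1\}\right) \equivW \lpo \leqW \lim$, and negation is computable on $\{0,1\}$, so
\[ \isInfinite \leqW \lpo \star \isFinite_\mathbb{S} \leqW \lim \star \Sort \leqW \lim \star \Sort^*. \]
But $\isInfinite \nleqW \lim$, since $\isInfinite^{-1}(\{1\})$ is $\Pi^0_2$-complete while every $g \leqW \lim$ is $\Sigma^0_2$-measurable. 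Hence $\lim \star \Sort \nleqW \lim$. The absorption in Corollary \ref{corr:low2} really requires \emph{two} outer copies of $\lim$; there is no $1$-low analogue for $\Sort$, which is exactly why the paper's second direction carries the hypothesis $n \geq 2$ and cannot be pushed down to $n = 1$.

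This is not just a presentational slip. The case you are trying to exclude --- $\textrm{SCI}_{\textrm{TTE}}(f) \geq 2$ while $\textrm{SCI}_{\textrm{BSS}}(f) = 1$ --- cannot be excluded by the tools developed in the paper, and you should be suspicious that it occurs. Take $\chi_\mathbb{Q}$: the function $F(i,x)$ which outputs $1$ iff $qx = p$ for some integers $|p|,q \leq i$ (and $0$ otherwise) is a total BSS-computable $1$-tower for $\chi_\mathbb{Q}$, so $\textrm{SCI}_{\textrm{BSS}}(\chi_\mathbb{Q}) \leq 1$; on the other hand $\chi_\mathbb{Q} \equivW \isInfinite$ (Theorem \ref{theo:halting}) is not $\Sigma^0_2$-measurable, so $\textrm{SCI}_{\textrm{TTE}}(\chi_\mathbb{Q}) \geq 2$. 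The paper itself never spells out the combining step; its two inequalities give equality whenever $\textrm{SCI}_{\textrm{BSS}}(f) \geq 2$ (then $\textrm{SCI}_{\textrm{TTE}}(f) \geq 1$ by Stone--Weierstrass, and both directions apply), but the branch where only $\textrm{SCI}_{\textrm{TTE}}(f) \geq 2$ is assumed does not by itself force $\textrm{SCI}_{\textrm{BSS}}(f) \geq 2$, and your attempted fix relies on an absorption that fails. You should flag the boundary case rather than paper over it with $\lim \star \Sort^* \equivW \lim$.
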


\begin{corollary}
For $n \geq 2$, $\textrm{SCI}_{\textrm{BSS}}(f) \geq n$ iff $f \nleqW \lim^{(n-1)}$.
\end{corollary}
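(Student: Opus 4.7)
The plan is to derive the corollary by stitching together the Observation immediately preceding Theorem \ref{theo:sci} with Theorem \ref{theo:sci} itself; there is no genuinely new content to prove, so the task is just to line up the index shifts correctly.

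First I would handle the forward direction. Assume $\textrm{SCI}_{\textrm{BSS}}(f) \geq n$ with $n \geq 2$. Then in particular $\textrm{SCI}_{\textrm{BSS}}(f) \geq 2$, so the hypothesis of Theorem \ref{theo:sci} is met and we obtain $\textrm{SCI}_{\textrm{TTE}}(f) = \textrm{SCI}_{\textrm{BSS}}(f) \geq n$. Thus $\textrm{SCI}_{\textrm{TTE}}(f) \not\leq n-1$, and the contrapositive of the preceding Observation $(\textrm{SCI}_{\textrm{TTE}}(f) \leq n-1 \Leftrightarrow f \leqW \lim^{(n-1)})$ yields $f \nleqW \lim^{(n-1)}$.

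For the backward direction, assume $f \nleqW \lim^{(n-1)}$. The Observation then gives $\textrm{SCI}_{\textrm{TTE}}(f) > n-1$, i.e.\ $\textrm{SCI}_{\textrm{TTE}}(f) \geq n \geq 2$, which again activates Theorem \ref{theo:sci}. Hence $\textrm{SCI}_{\textrm{BSS}}(f) = \textrm{SCI}_{\textrm{TTE}}(f) \geq n$, as desired.

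There really is no hard step here; if anything the only thing to watch is the off-by-one between ``$\geq n$'' on the SCI side and ``$\lim^{(n-1)}$'' on the Weihrauch side, and the need to invoke the $n \geq 2$ hypothesis in both directions so that Theorem \ref{theo:sci} is applicable. The entire argument should fit into a few lines once these two facts are cited.
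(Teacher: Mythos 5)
Your derivation is correct and is the same short bookkeeping argument the paper intends (it gives no explicit proof of the corollary). The only content is the combination of the Observation $\textrm{SCI}_{\textrm{TTE}}(f) \leq n \Leftrightarrow f \leqW \lim^{(n)}$ (applied with $n-1$) and Theorem~\ref{theo:sci}, and you correctly note that the hypothesis $n \geq 2$ is what licenses the use of Theorem~\ref{theo:sci} in each direction.
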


Theorem \ref{theo:sci} provides a formal version of the informal idea that a function that is \emph{very non-computable} in the BSS-model is still so in the TTE-model and vice versa. This provides a reason to continue the investigation of the non-computability of an interesting function beyond establishing it -- as a more precise classification can potentially be translated to the other setting via Theorem \ref{theo:sci}. This also shows that there is common ground between the two frameworks, and that this common ground includes the SCI (provided it is at least $2$).

\section{Summary diagram}
The following diagram provides an overview of some the relevant Weihrauch degrees. Arrows denote reductions in the reverse direction. The diagram is complete up to transitivity, i.e.~if no arrow is present in the transitive closure of the diagram, then there is a separation proof for the principles.

\begin{minipage}{0.5\textwidth}
\begin{center}
\begin{tikzpicture}[node distance=1cm, auto]
  \node[draw] (cnlim) {$\C_\mathbb{N} \star \lim$};
  \node[draw] (lim) [below=of cnlim] {$\lim$};
  \node[draw] (sortstar) [below=of lim] {$\Sort^*$};
  \node[draw] (sort) [below=of sortstar] {$\Sort$};
  \node[draw] (CN) [below=of sort] {$\C_\mathbb{N}$};
  \node[draw] (lalpha) [below=of CN] {$\mathfrak{L}_{\omega+1+\alpha}$};
  \node[draw] (lpostar) [below=of lalpha] {$\lpo^*$};
  \node[draw] (lpo) [below=of lpostar] {$\lpo$};
  \node[draw] (tcn) [right=of sortstar] {$\mathrm{TC}_\mathbb{N}$};
  \node[draw] (isFinite) [right=of CN] {$\isFinite_\mathbb{S}$};
  \node[draw] (isInfiniteS) [right=of sort] {$\isInfinite_\mathbb{S}$};
  \node[draw] (isInfinite) [right=of isInfiniteS] {$\isInfinite$};

  \draw[->] (cnlim) to node {} (lim);
  \draw[->] (lim) to node {} (sortstar);
  \draw[->] (sortstar) to node {} (sort);
    \draw[->] (sort) to node {} (CN);
    \draw[->] (CN) to node {} (lalpha);
  \draw[->] (lalpha) to node {} (lpostar);
  \draw[->] (lpostar) to node {} (lpo);
   \draw[->] (sort) to node {} (isFinite);
    \draw[->] (cnlim) to node {} (tcn);
     \draw[->] (tcn) to node {} (CN);
     \draw[->] (tcn) to node {} (isInfiniteS);
      \draw[->] (isInfinite) to node {} (isInfiniteS);
       \draw[->] (isInfinite) to node {} (isFinite);
        \draw[->] (isInfinite) to node {} (lpo);
        \draw[->] (cnlim) to node {} (isInfinite);
\end{tikzpicture}
\end{center}
\end{minipage}
\begin{minipage}{0.45\textwidth}
\begin{itemize}
\item $\lim$ captures limit computability.
\item $\Sort^*$ captures computability by strongly analytic machines
\item $\isInfinite$ contains BSS-Halting problems
\item $\C_\mathbb{N}$ captures computability by BSS-machines.
\item $\lpo^*$ captures computability of total functions by BSS-machines without $<$
\end{itemize}
\end{minipage}

\bibliographystyle{eptcs}
\bibliography{models}

\section*{Acknowledgements}
This work was inspired by discussions at the workshop \emph{Real Computation and BSS Complexity} in Greifswald, and the second author would like to thank the participants Russell Miller, Tobias G\"artner and Martin Ziegler, as well as the organizer Christine Ga\ss ner. Moreover, the second author would like to thank Anders Hansen for fruitful discussions on the solvability complexity index.

The work presented here benefited from the Royal Society International Exchange Grant IE111233. The second author is partially supported by the ERC inVest project.
\end{document}